\definecolor{purple}{HTML}{7068A4}
\definecolor{blue}{HTML}{3F70B2}
\definecolor{orange}{HTML}{A47458}
\definecolor{yellow}{HTML}{9B7A3C}
\crefname{claim}{Claim}{Claims}
\crefname{property}{Property}{Properties}
\crefname{algocf}{Algorithm}{Algorithms}
\Crefname{algocf}{Algorithm}{Algorithms}
\g@addto@macro\bfseries{\boldmath}
\newtheorem{question}{Question}
\newtheorem{lemma}{Lemma}[section]
\newtheorem{corollary}[lemma]{Corollary}
\newtheorem{observation}[lemma]{Observation}
\newtheorem{proposition}[lemma]{Proposition}
\theoremstyle{definition}
\newtheorem{definition}[lemma]{Definition}
\theoremstyle{remark}
\newtheorem*{remark*}{Remark}
\renewcommand{\epsilon}{\varepsilon}
\renewcommand{\emptyset}{\varnothing}
\newcommand{\dist}{\mathsf{dist}}
\newcommand{\acc}{\mathsf{accept}}
\newcommand{\rej}{\mathsf{reject}}
\newcommand{\true}{\mathsf{true}}
\newcommand{\false}{\mathsf{false}}
\newcommand{\MSO}{\mathsf{MSO}_2}
\newcommand{\ID}{\mathsf{ID}}
\newcommand{\inc}{\mathsf{inc}}
\newcommand{\adj}{\mathsf{adj}}
\newcommand{\vst}{v_{\mathsf{st}}}
\newcommand{\ved}{v_{\mathsf{ed}}}
\newcommand{\tin}{\tau^{\mathsf{in}}}
\newcommand{\tout}{\tau^{\mathsf{out}}}
\newcommand{\fin}{\phi^{\mathsf{in}}}
\newcommand{\fout}{\phi^{\mathsf{out}}}
\newcommand{\insv}{\mathsf{V}\text{-}\mathsf{insert}}
\newcommand{\inse}{\mathsf{E}\text{-}\mathsf{insert}}
\newcommand{\tmerge}{\mathsf{Tree}\text{-}\mathsf{merge}}
\newcommand{\pmerge}{\mathsf{Parent}\text{-}\mathsf{merge}}
\newcommand{\bmerge}{\mathsf{Bridge}\text{-}\mathsf{merge}}
\newcommand{\fb}{f_{\mathbf{B}}}
\newcommand{\fp}{f_{\mathbf{P}}}
\newcommand{\vnode}{\mathsf{V}\text{-}\mathsf{node}}
\newcommand{\enode}{\mathsf{E}\text{-}\mathsf{node}}
\newcommand{\bnode}{\mathsf{B}\text{-}\mathsf{node}}
\newcommand{\pnode}{\mathsf{P}\text{-}\mathsf{node}}
\newcommand{\tnode}{\mathsf{T}\text{-}\mathsf{node}}
\title{Optimal local certification on graphs of bounded pathwidth}
\author{Dan Alden Baterisna\footnote{National University of Singapore. ORCID: 0009-0002-8102-2895. Email: danbaterisna@u.nus.edu} \and Yi-Jun Chang\footnote{National University of Singapore. ORCID: 0000-0002-0109-2432. Email: cyijun@nus.edu.sg}}
\date{}
\begin{document}

\maketitle
\thispagestyle{empty}

\begin{abstract}
We present proof labeling schemes for graphs with bounded pathwidth that can decide any graph property expressible in monadic second-order ($\MSO$) logic using $O(\log n)$-bit vertex labels. Examples of such properties include planarity, Hamiltonicity, $k$-colorability, $H$-minor-freeness, admitting a perfect matching, and having a vertex cover of a given size.

Our proof labeling schemes improve upon a recent result by Fraigniaud, Montealegre, Rapaport, and Todinca (Algorithmica 2024), which achieved the same result for graphs of bounded treewidth but required $O(\log^2 n)$-bit labels. Our improved label size $O(\log n)$ is \emph{optimal}, as it is well-known that any proof labeling scheme that accepts paths and rejects cycles requires labels of size $\Omega(\log n)$.

Our result implies that graphs with pathwidth at most $k$ can be certified using $O(\log n)$-bit labels for any fixed constant $k$. Applying the Excluding Forest Theorem of Robertson and Seymour, we deduce that the class of $F$-minor-free graphs can be certified with $O(\log n)$-bit labels for any fixed forest $F$, thereby providing an affirmative answer to an open question posed by Bousquet, Feuilloley, and Pierron (Journal of Parallel and Distributed Computing 2024).
\end{abstract}

\newpage
\bigskip
\tableofcontents
\bigskip
\thispagestyle{empty}



\newpage
\pagenumbering{arabic}

\section{Introduction}\label{sect:intro}

Local certification, introduced by Korman, Kutten, and Peleg~\cite{korman03032010} as \emph{proof labeling schemes}, refers to the task of locally verifying that a network satisfies a given global property. The study of local certification originated from the field of \emph{self-stabilization}, where each processor in the network must be able to efficiently verify the validity of a solution based on local information, independent of the stabilization strategy in use. Local certification is also important in \emph{fault-tolerance}, as it allows each processor to detect if the network is in an invalid state, enabling the system to respond and address potential errors. For more about local certification, refer to the survey by Feuilloley~\cite{dmtcs:8479}.

\subsection{Proof labeling schemes} We model a communication network as an $n$-vertex connected undirected graph $G=(V,E)$, where each vertex $v \in V$ represents a processor and each edge $e=\{u,v\}\in E$ represents a communication link between the two endpoints $u$ and $v$. A \emph{state assignment function}  $s : V \to S$, for some state space $S$, is used to represent the state of each processor in the network. The state assignment function can be used to encode any input given to the processors. Unless otherwise stated, we assume that the state $s(v)$ of each vertex $v \in V$ includes an $O(\log n)$-bit distinct identifier $\ID(v)$. 
A graph $G$, together with a state assignment function $s$, is called a \emph{configuration} $(G,s)$.

Given a family of configurations $\mathcal{F}$ and a boolean predicate $\mathcal{P}$ over $\mathcal{F}$, a \emph{proof labeling scheme} for $(\mathcal{F}, \mathcal{P})$ is a pair of two algorithms $\mathbf{P}$ and $\mathbf{V}$, as follows.
\begin{description}
    \item[Certificate assignment:] $\mathbf{P}$ is a centralized algorithm that takes in a configuration $(G,s) \in \mathcal{F}$ such that $\mathcal{P}(G,s)=\true$, and assigns a label $\ell(v)$ to every vertex $v$ in the graph $G$. Unless otherwise specified, $\mathbf{P}$ may use unlimited computational resources.
    \item[Local verification:] $\mathbf{V}$ is a distributed algorithm that runs on every vertex $v$ in the network $G$, where $v$ initially knows its state $s(v)$ and its label $\ell(v)$. For a vertex $v$, the input consists of its state $s(v)$, its label $\ell(v)$, and the multiset of labels $\{\ell(u) \, | \, u \in N(v)\}$ of its neighbors $N(v)$. The output of the algorithm for a vertex $v$ is either $\acc$ or $\rej$. If the output is $\acc$ for all vertices, then we say that $\mathbf{V}$ accepts $(G,s)$. Otherwise, we say that  $\mathbf{V}$ rejects $(G,s)$.
\end{description}
A proof labeling scheme $(\mathbf{P}, \mathbf{V})$ is correct for $(\mathcal{F}, \mathcal{P})$ if the following requirements are met.
\begin{description}
    \item[Completeness:] If $\mathcal{P}(G,s)=\true$, then the labeling $\ell$ computed by $\mathbf{P}$ makes $\mathbf{V}$ accepts $(G,s)$.
    \item[Soundness:] If $\mathcal{P}(G,s)=\false$, then  $\mathbf{V}$ rejects $(G,s)$ regardless of the given labeling $\ell$. 
\end{description}

Intuitively, $\mathbf{P}$ is a centralized prover that assigns a label $\ell(v)$ to each vertex $v$. These labels are then exchanged along the edges of the graph in one round of communication. After that, the correctness of the proof can be checked locally using the verification algorithm  $\mathbf{V}$.

For the rest of the paper, we often omit explicitly writing the state assignment function $s$, and treating $\mathcal{F}$ as a class of graphs that may have input labels on vertices and edges, where $s(v)$ includes not only the unique identifier $\ID(v)$ but also the input labels of vertex $v$ and all edges incident to $v$. The unique identifiers are not considered part of the input labels.

\subparagraph{Complexity measure.} The main complexity measure of a proof labeling scheme is the \emph{proof size} -- the maximum length of a label that $\mathbf{P}$ assigns to a vertex, in terms of the number $n = |V|$ of vertices in the graph $G=(V,E)$.
 For instance, if we wish to certify that the graph is bipartite, then one bit suffices -- $\mathbf{P}$ labels each vertex $v$ using one bit to encode the color of $v$ in a proper 2-coloring, and $\mathbf{V}$ checks if $\ell(v) \neq \ell(u)$ for all neighbors $u \in N(v)$. In some other cases, we are not so lucky: There are graph properties, such as certifying that the graph is symmetrical, where $\Omega(n^2)$ bits are necessary~\cite{korman03032010,v012a019}, meaning there is no asymptotically better method than encoding the entire graph in each label.

 \subsection{Monadic second-order logic}
 The study of proof labeling schemes aims to answer the question: For which natural graph properties $\mathcal{P}$ and families of graphs $\mathcal{F}$ do good proof labeling schemes exist? One common way of expressing graph properties is to write them as logical predicates in a certain language. A language often used for this purpose is the fragment of \emph{monadic second-order logic} ($\MSO$) in graphs, where four types of variables are allowed: vertices, edges, vertex sets, and edge sets. The quantifiers $\forall$ and $\exists$ can be applied to any variables. In addition to the basic logical connectives $\{\neg,\vee,\wedge,\rightarrow,\leftrightarrow\}$, the following binary predicates are allowed.
    \begin{itemize}
    \item $v \in U$, for a vertex variable $v$ and vertex set variable $U$.
    \item $e \in F$, for an edge variable $e$ and edge set variable $F$.
    \item $\inc(e, v)$, for an edge variable $e$ and vertex variable $v$; this is interpreted as $e$ being an edge incident to $v$.
    \item $\adj(u, v)$, for vertex variables $u$ and $v$; this is interpreted as $u$ being adjacent to $v$.
    \item Equality for vertices, edges, vertex sets, and edge sets.
    \end{itemize}

Many common graph properties, including planarity, Hamiltonicity, $k$-colorability, $H$-minor-freeness, admitting a perfect matching, and having a vertex cover of a given size, can be written as an $\MSO$ property. Refer to~\citeauthor{borie1992automatic}~\cite{borie1992automatic} for a list of such properties.

\subparagraph{Algorithmic meta-theorems.}
Monadic second-order logic plays a crucial role in the field of \emph{algorithmic meta-theorems}, which aims to develop algorithmic results that apply to large families of computational problems rather than just specific problems.
One of the most prominent algorithmic meta-theorems is \emph{Courcelle's theorem}, which asserts that any property expressible in $\MSO$ is decidable in linear time for graphs of bounded \emph{treewidth}. This result was first demonstrated by Courcelle~\cite{COURCELLE199012} and later independently rediscovered by Borie, Parker, and Tovey~\cite{borie1992automatic}. Additionally, Courcelle's theorem extends to graphs where vertices and edges are labeled with inputs from a fixed finite set, by including predicates that describe the labels, or by representing the labels using unquantified vertex set or edge set variables~\cite{arnborg1991easy}.

\subparagraph{Local certification.}
Fraigniaud, Montealegre, Rapaport, and Todinca~\cite{fraigniaud2024meta}  established a result analogous to Courcelle's theorem in the context of local certification: Any $\MSO$ property in bounded-treewidth graphs can be certified using an $O(\log^2 n)$-bit proof labeling scheme. Their work raises an intriguing question: Can the label size be further reduced to $O(\log n)$? Currently, the best known lower bound for this problem is $\Omega(\log n)$, because any proof labeling scheme that accepts paths and rejects cycles necessitates labels of size $\Omega(\log n)$~\cite{korman03032010}. Observe that both paths and cycles have bounded treewidth and are also $\MSO$ properties. The label size $O(\log n)$ is generally seen as the ``gold standard'' in local certification, as very little can be done with labels of size $o(\log n)$. Indeed, even storing the vertex identifiers requires $\Omega(\log n)$ bits.


\subsection{Minor-closed graph properties}
A graph $H$ is a \emph{minor} of $G$ if $H$ can be obtained from $G$ through a series of vertex deletions, edge deletions, and edge contractions. 
 A graph property $\mathcal{P}$ is said to be \emph{minor-closed} if $\mathcal{P}(G)=\true$ implies that  $\mathcal{P}(H)=\true$ for all minors $H$ of $G$. 
  A cornerstone result in structural graph theory is the \emph{graph minor theorem} of Robertson and Seymour~\cite{ROBERTSON2004325}, which establishes that \emph{any}  minor-closed graph property $\mathcal{P}$ can be characterized by forbidden minors: There is a \emph{finite} list of graphs such that $\mathcal{P}(G)=\true$  if and only if $G$ does not contain any graph from this list as a minor. Consequently, any minor-closed graph property can be expressed in $\MSO$, as the property of containing a fixed graph $H$ as a minor is $\MSO$.

Many graph classes, such as forests, cacti, bounded-genus graphs, and bounded-treewidth graphs, are minor-closed and thus can be characterized by forbidden minors. In local certification, substantial effort has been devoted to designing proof labeling schemes for various minor-closed properties, with the ultimate goal of resolving the following problem.

\medskip
\centerline{%
    \parbox{0.7\linewidth}{
        \begin{mdframed}[hidealllines=true,backgroundcolor=gray!25]\begin{center}
        \vspace{-0.3cm}
        \begin{question}[{\cite[Open question 4]{dmtcs:8479}}]
            Can we certify all minor-closed graph properties by $O(\log n)$-bit proof labeling schemes?
        \end{question}
            \end{center}\end{mdframed}
    }%
}
\medskip

As an application of a more general method, it was demonstrated by~\citeauthor{naor2020power}~\cite{naor2020power} that planar graphs can be certified using $O(\log n)$ bits in a more powerful model of distributed interactive proofs. Subsequently, a $O(\log n)$-bit proof labeling scheme was designed for planarity in the standard model~\cite{feuilloley2021compact}, which was then extended to cover all bounded-genus graphs~\cite{ESPERET202268,feuilloley2023local}.

The result of Fraigniaud, Montealegre, Rapaport, and Todinca~\cite{fraigniaud2024meta} discussed above implies that graphs with treewidth at most $k$ can be locally certified using $O(\log^2 n)$-bit labels for any fixed constant $k$. Combining this result with the Excluding Grid Theorem of Robertson and Seymour~\cite{ROBERTSON198692}, it follows that the class of $H$-minor-free graphs can be locally certified with $O(\log^2 n)$-bit labels for any fixed planar graph $H$. This naturally leads to the question: For which graph $H$, can the class of $H$-minor-free graphs be locally certified with $O(\log n)$-bit labels? In a recent work, \citeauthor{bousquet2024local} explored local certification of $H$-minor-free graphs for some specific small minors $H$~\cite{bousquet2024local}, showing that for $H \in \{K_3, K_4, K_{2,3}, K_{2,4}, \text{diamond}\}$, the class of $H$-minor-free graphs admits a proof labeling scheme with optimal $O(\log n)$-bit labels. To continue this line of research, the following appears to be the simplest open question to tackle.
 
\medskip
\centerline{%
    \parbox{0.7\linewidth}{
        \begin{mdframed}[hidealllines=true,backgroundcolor=gray!25]\begin{center}
                \vspace{-0.3cm}
        \begin{question}[{\cite[Question 54]{bousquet2024local}}]
            Can we certify $T$-minor-free graphs by $O(\log n)$-bit proof labeling schemes for any tree $T$?\label{q2}
        \end{question}
       
            \end{center}\end{mdframed}
    }%
}
\medskip

\subsection{Our results} 

We answer \Cref{q2} affirmatively by studying local certification on graphs of bounded \emph{pathwidth}.

\begin{definition}[{Path decomposition~\cite{ROBERTSON198339}}]
    \label{pathwidth-defn}
    A \underline{path decomposition} of a graph $G=(V,E)$ is a sequence of vertex subsets $(X_1, X_2, \ldots, X_s)$ of $G$ satisfying the following conditions.
\begin{description}
    \item[(P1)] For each edge $e=\{u,v\} \in E$, there exists an index $i$ such that $\{u,v\} \subseteq X_i$.
    \item[(P2)] For every three indices $i \leq j \leq k$, $X_i \cap X_k \subseteq X_j$.
\end{description}  
    The \underline{width} of a path decomposition is $\max_{i} |X_i|-1$. The \underline{pathwidth} of a graph $G$ is the minimum width of any path decomposition of $G$.
\end{definition}

Observe that (P2) implies that for each vertex $v \in V$, there exists an interval $I_v = [L_v, R_v]$ such that $v \in X_i$ if and only if $i \in I_v$. Therefore, a path decomposition can be seen as an assignment of an interval $I_v = [L_v, R_v]$ to each vertex $v \in V$ such that $I_u \cap I_v \neq \emptyset$ for each edge $e=\{u,v\} \in E$, and the width of the decomposition is the maximum number of intervals with a non-empty intersection minus one. See \Cref{fig:f1} for an illustration.

\begin{figure}[ht!]
    \centering
    \includegraphics[scale=0.65]{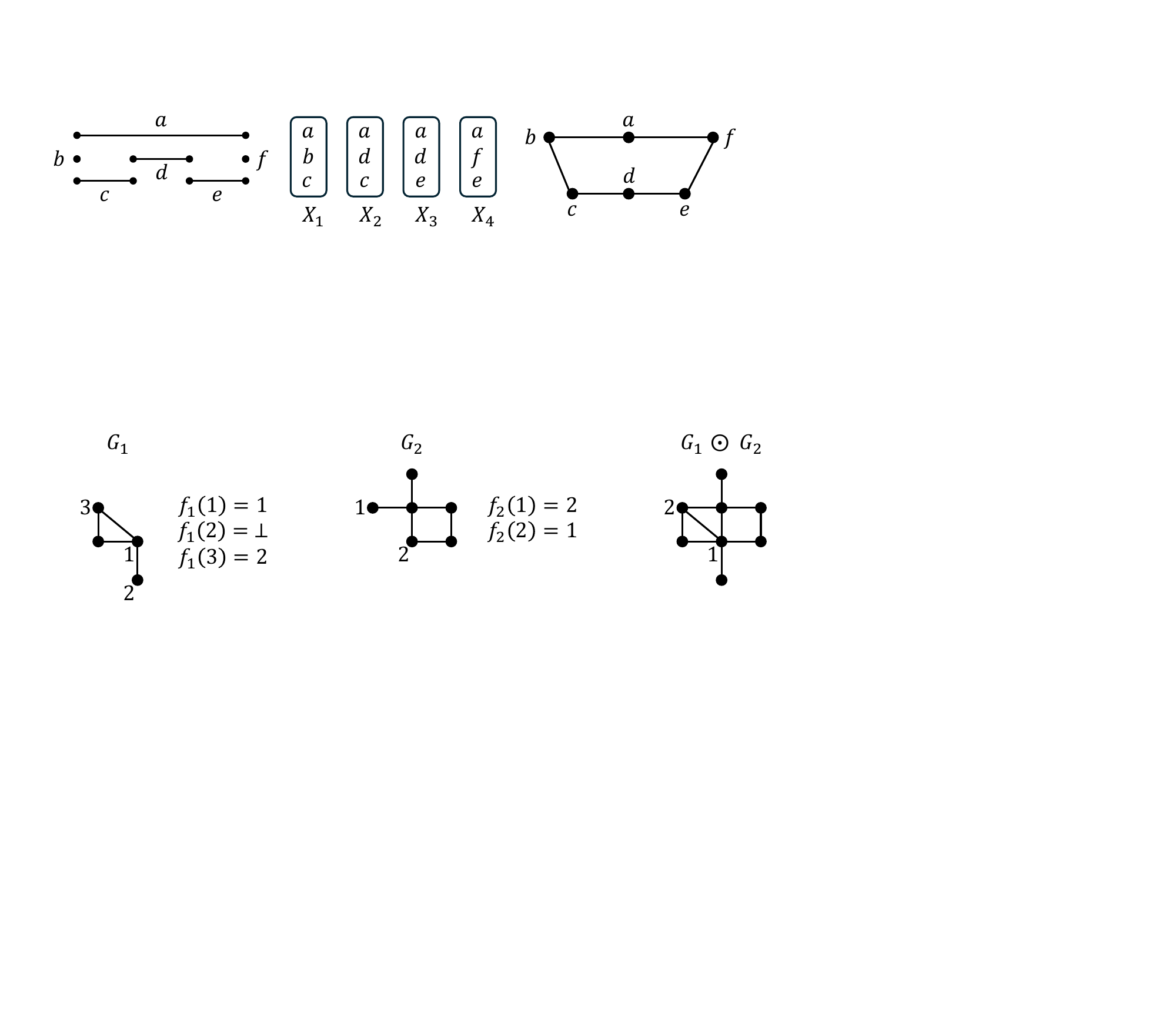}
    \caption{Path decomposition and interval representation of a $6$-cycle.}
    \label{fig:f1}
\end{figure}

In a sense, the pathwidth of a graph measures how close it is to being a path: the lower it is, the more path-like it is. Indeed, paths have pathwidth one. Similar measures of graph density include treewidth, cutwidth, and cliquewidth. There is a rich theory of computational complexity surrounding these graph measures: If we limit our attention to graphs for which these measures are upper-bounded by some constant, then some problems that are NP-hard in general become solvable in polynomial time or even in linear time. For more on the field of fixed-parameter tractability, particularly these graph measures, refer to the book by~\citeauthor{downey27082016}~\cite{downey27082016}.

\subparagraph{Our contribution.} The main contribution of this work is to show that for any $\MSO$ graph property $\phi$ and any graph class $\mathcal{G}$ where the pathwidth is bounded, there exists an $O(\log n)$-bit proof labeling scheme for $\phi$ in the class of graphs $\mathcal{G}$, improving upon the prior work by Fraigniaud, Montealegre, Rapaport, and Todinca~\cite{fraigniaud2024meta}  which requires $O(\log^2 n)$-bit labels. More precisely, we prove the following theorem.

\begin{restatable}{theorem}{mainthm}\label{thm:main}
For any integer $k \geq 1$, for any $\MSO$ graph property $\phi$, there exists an $O(\log n)$-bit proof labeling scheme for the graph property $\phi \wedge (\operatorname{pathwidth}\leq k)$. 
\end{restatable}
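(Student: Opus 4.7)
The plan is to combine two certifications into a single $O(\log n)$-bit label per vertex: (i) a nice path decomposition of $G$ of width at most $k$, encoded by attaching to each vertex $v$ its interval $[L_v, R_v]$ from \Cref{pathwidth-defn} and a \emph{slot} $\sigma(v) \in \{0, 1, \ldots, k\}$; and (ii) an accepting trajectory of a deterministic finite automaton $A_{\phi,k}$ with $O(1)$ states that is equivalent to $\phi$ on nice width-$k$ path decompositions. The automaton $A_{\phi,k}$ is the standard specialization of Courcelle's theorem to the linear setting: it reads the sequence of ``introduce the new vertex into slot $s$ with specified $G$-adjacencies to the current bag'' and ``forget the vertex in slot $s$'' operations left-to-right over a word of length $O(n)$. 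Each per-vertex datum is either a position in $\{1, \ldots, O(n)\}$, a slot in $\{0, \ldots, k\}$, a constant-size automaton state, or a short snapshot of $X_{L_v}$ storing $O(k)$ pairs of (identifier, slot), so every label fits in $O(\log n)$ bits for fixed $k$.

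For local verification, each vertex $v$ checks: that $L_v \leq R_v$; that every $G$-neighbor $u$ satisfies $[L_u, R_u] \cap [L_v, R_v] \neq \emptyset$; that no $G$-neighbor with overlapping interval shares $v$'s slot; that the adjacency flags in the snapshot of $X_{L_v}$ agree with $v$'s own view of its neighborhood in $G$; and that the $A_{\phi,k}$-transitions at times $L_v$ (introduce $v$) and $R_v + 1$ (forget $v$) are correct, taking as input the four automaton states stored in $v$'s label and the bag contents extracted from the snapshot. A standard $O(\log n)$-bit spanning-tree certificate, rooted at the vertex owning time $1$, is added so that every vertex can trust the claims that the initial state of $A_{\phi,k}$ is correct and that the final state at time $O(n)$ is accepting.

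I expect the main obstacle to be soundness in the presence of bag-mates $u, v \in X_i$ that are non-adjacent in $G$: $v$ cannot inspect $u$'s label directly and so cannot cross-check the slot or interval that $v$'s snapshot attributes to $u$, leaving room for an adversarial prover to hide slot collisions or to forge snapshots. I plan to handle this by distributing the snapshot and automaton-state information across the $O(n)$ time steps so that every dishonest labeling must produce a local mismatch at some $G$-edge; concretely, I would engineer per-slot ``hand-off chains'' between consecutive occupants of the same slot together with a step-by-step propagation of automaton states along the path decomposition. The key structural feature of pathwidth this exploits is the total linear order on bags, which lets chain-based consistency checks be threaded through at $O(\log n)$ bits per vertex. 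In the treewidth construction of \cite{fraigniaud2024meta}, tree branching forces an extra $\log n$ factor to route such checks through common ancestors; collapsing that branching into a single path is precisely what lets us save this factor and match the $\Omega(\log n)$ lower bound discussed in the introduction.
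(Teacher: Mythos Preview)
Your diagnosis of the obstacle is correct, but the fix you sketch does not close the gap. Both mechanisms you propose require vertices that are not, in general, $G$-adjacent to compare labels. For the hand-off chains: consecutive occupants of a slot have disjoint intervals, hence no $G$-edge between them, so each hand-off must be routed along a path in $G$; you give no bound on the congestion when all $\Theta(n)$ such paths (over all $k{+}1$ slots) are overlaid, and a single edge could end up carrying $\omega(\log n)$ bits. For the step-by-step automaton propagation: the vertices processed at consecutive times $t$ and $t{+}1$ are bag-mates but need not be $G$-adjacent. Concretely, take $C_4$ on $\{a,b,c,d\}$ with edge set $\{\{a,b\},\{a,c\},\{b,d\},\{c,d\}\}$ and the nice width-$2$ decomposition that introduces $b,c,a$, forgets $a$, then introduces $d$: the state $q(1)$ is held by $b$ and needed by $c$, but $\{b,c\}\notin E$, and neither of $c$'s neighbors $a,d$ stores $q(1)$ in your scheme. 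The ``total linear order on bags'' does not by itself thread anything through $G$-edges.

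The paper's proof realizes your hand-off intuition but supplies two pieces of machinery you are missing. First, \Cref{low-completion-cost}: there is a lane assignment---using $f(k)$ lanes rather than $k{+}1$, for an explicit function $f$---under which \emph{all} hand-off edges can be simultaneously embedded in $G$ with congestion bounded by a function of $k$ alone; this requires an inductive construction that peels off a spine covering every bag and recurses on the width-$(k{-}1)$ remainder. Second, even after adding these virtual hand-off edges, the certification does not proceed left-to-right along time: the completed graph is instead expressed as a $k$-lane recursive graph with a hierarchical decomposition of depth $O(k)$ (\Cref{obs:tree_depth}), and homomorphism classes are certified bottom-up along that hierarchy. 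Your plan would need both ingredients, or substitutes of comparable strength, to go through.
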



Since $(\operatorname{pathwidth}\leq k)$ itself is an $\MSO$ graph property, by setting $\phi = (\operatorname{pathwidth}\leq k)$,  \Cref{thm:main} allows us to locally certify $(\operatorname{pathwidth}\leq k)$ with $O(\log n)$ bits.
Combining \Cref{thm:main} with the Excluding Forest Theorem of \citeauthor{ROBERTSON198339}, we obtain the following result, affirmatively answering~\cite[Question 54]{bousquet2024local}.

\begin{corollary}
For any forest $F$, there is an $O(\log n)$-bit proof labeling scheme to certify the class of $F$-minor-free graphs.
\end{corollary}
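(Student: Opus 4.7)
The plan is to deduce the corollary directly from \Cref{thm:main} together with two classical ingredients: the \emph{Excluding Forest Theorem} of Robertson and Seymour, and the $\MSO$-expressibility of the predicate ``contains $F$ as a minor'' for a fixed graph $F$.

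First, I would invoke the Excluding Forest Theorem, which says that for every forest $F$ there is a constant $k = k(F)$ such that every $F$-minor-free graph has pathwidth at most $k$. This reduces the certification task to certification within a bounded-pathwidth class, which is precisely the regime handled by \Cref{thm:main}. Second, I would write down $F$-minor-containment as an $\MSO$ formula $\psi_F$: existentially quantify one vertex set $V_i$ for each vertex $i$ of the fixed forest $F$, and assert that the sets are pairwise disjoint, that each $V_i$ induces a connected subgraph (connectivity of an induced subgraph is a standard $\MSO$ formula, obtained by stating that no nontrivial vertex subset of $V_i$ is closed under taking in-$V_i$ neighbors), and that for each edge $\{i,j\}$ of $F$ some edge of $G$ has one endpoint in $V_i$ and one in $V_j$. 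Then $F$-minor-freeness is captured by the $\MSO$ formula $\phi_F := \neg \psi_F$.

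Given these two ingredients, I would apply \Cref{thm:main} with $\phi := \phi_F$ and with the pathwidth bound $k := k(F)$ from the Excluding Forest Theorem. This yields an $O(\log n)$-bit proof labeling scheme for the property $\phi_F \wedge (\operatorname{pathwidth} \leq k)$. It remains to argue that this same scheme certifies the class of $F$-minor-free graphs. For completeness, if $G$ is $F$-minor-free then $\phi_F(G)$ holds, and by the Excluding Forest Theorem $G$ also satisfies $\operatorname{pathwidth}(G) \leq k$; hence the prescribed labeling makes the verifier accept. For soundness, if $G$ is not $F$-minor-free then $\phi_F(G)$ fails, so $\phi_F \wedge (\operatorname{pathwidth}\leq k)$ fails, and the soundness guarantee of \Cref{thm:main} forces the verifier to reject every possible labeling.

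Honestly, I do not anticipate a genuine obstacle here: all the heavy lifting is already done by \Cref{thm:main}. The only point that needs a little care is that \Cref{thm:main} certifies the \emph{conjunction} $\phi \wedge (\operatorname{pathwidth}\leq k)$ rather than $\phi$ alone, which a priori could reject some $F$-minor-free graphs of large pathwidth; the Excluding Forest Theorem is precisely what rules this out, by guaranteeing that no such graphs exist. Writing out $\psi_F$ carefully and citing the Excluding Forest Theorem in the form needed (pathwidth rather than treewidth) is essentially the entire proof.
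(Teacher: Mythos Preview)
Your proposal is correct and follows essentially the same route as the paper: invoke the Excluding Forest Theorem to obtain the pathwidth bound $k = k(F)$, note that $F$-minor-freeness is an $\MSO$ property, and apply \Cref{thm:main} with $\phi = (F\text{-minor-freeness})$ and that $k$. Your write-up simply unpacks in more detail the $\MSO$ formula for minor containment and the completeness/soundness verification, both of which the paper leaves implicit.
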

\begin{proof}
The Excluding Forest Theorem of \citeauthor{ROBERTSON198339} states that for every forest $F$, there exists a number $k$ such that the pathwidth of every $F$-minor-free graph is at most $k$~\cite{ROBERTSON198339}. Therefore, the proof labeling scheme of \Cref{thm:main} with the integer $k$ and the $\MSO$ graph property $\phi = (F\operatorname{-minor-freeness})$ locally certifies the class of $F$-minor-free graphs with $O(\log n)$-bit labels.
\end{proof}




We emphasize again that the $O(\log n)$ upper bound of \Cref{thm:main} is \emph{optimal} because any proof labeling scheme that accepts paths and rejects cycles requires labels of size $\Omega(\log n)$~\cite{korman03032010}. Both paths and cycles are $\MSO$ properties and have bounded pathwidth (1 and 2, respectively).


\subsection{Additional related work}

Motivated by applications in self-stabilization, early works on local certification focused on checking the correctness of a given solution for a problem, such as verifying that a given set of edges is a spanning tree~\cite{korman03032010}. 
Other variations of the proof labeling schemes model exist. 
\citeauthor{fraginiaud2019random}~\cite{fraginiaud2019random} studied randomized proof labeling schemes.
\citeauthor{CENSORHILLEL2020112}~\cite{CENSORHILLEL2020112} introduced the approximate version of proof labeling schemes, where they studied proof labeling schemes to approximate graph diameter and maximum cardinality matching. \citeauthor{emek_et_al:LIPIcs:2020:13098}~\cite{emek_et_al:LIPIcs:2020:13098} further expanded the study of approximate proof labeling schemes to various optimization problems. 
\citeauthor{elek2022planarity}~\cite{elek2022planarity} and \citeauthor{esperet_et_al:LIPIcs.ICALP.2022.58}~\cite{esperet_et_al:LIPIcs.ICALP.2022.58} studied the property testing version of proof labeling schemes, where the soundness condition is relaxed: The verification algorithm is required to reject a problem instance only if it is $\epsilon$-far from satisfying the considered property.

\citeauthor{v012a019}~\cite{v012a019} introduced \emph{locally checkable proofs}, which are the proof labeling schemes that allow the verification algorithm to examine the labels within an $O(1)$-radius neighborhood and not just the immediate neighbors.  The study of proof labeling schemes was extended to \emph{distributed interactive proofs}, where the centralized prover and the distributed verifier can communicate with each other in multiple rounds~\cite{kol2018interactive,naor2020power}.

Local certification of various graph classes has attracted
significant attention in recent years: It was known that cographs~\cite{montealegre2021compact}, distance-hereditary graphs~\cite{montealegre2021compact}, and various geometric intersection graphs~\cite{jauregui2022distributed} admit compact distributed interactive proofs.

Beyond the realm of local certification, numerous studies have focused on designing distributed algorithms tailored to minor-closed graph classes by leveraging their structural properties.
Minor-closed graph classes were known to admit low-congestion shortcuts, which can be utilized to design fast distributed algorithms~\cite{ghaffari2021low,haeupler2016low,haeupler2016near,haeupler2018minor}.
\citeauthor{IzumiSPAA22}~\cite{IzumiSPAA22} presented an efficient distributed algorithm for computing tree decompositions, which was then used to obtain efficient algorithms for bounded-treewidth networks. \citeauthor{ghaffari2016planar}~\cite{ghaffari2016planar} developed efficient distributed algorithms for planarity testing, enabling several applications~\cite{ghaffari2016distributed,levi2021property}.
Minor-closed graph classes are also known to admit expander decompositions with good quality, leading to distributed algorithms that outperform those designed for general networks~\cite{chang2023efficient,chang2022narrowing}. Distributed approximation within minor-closed graph classes has been extensively studied, see the survey by Feuilloley~\cite{feuilloley2020bibliography}.

\section{Preliminaries}\label{sect:prelim}

In this section, we review the essential tools that we apply in our proofs.

\subsection{Basics tools in proof labeling schemes}
We consider a slight variation in the proof labeling scheme model.

\subparagraph{Edge certification.} Consider a variant of proof labeling schemes where the labels are placed on the edges and not on the vertices: The prover $\mathbf{P}$ assigns a label $\ell(e)$ to each edge $e \in E$, and the input of the verification algorithm $\mathbf{V}$ on vertex $v$ consists of its state $s(v)$ and the multiset $\{\ell(\{u,v\}) \, | \, u \in N(v)\}$ of the labels on the edges incident to $v$.

We say that a graph $G$ is \emph{$d$-degenerate} if the edges of $G$ can be oriented to form a directed acyclic graph with outdegree at most $d$. 

\begin{proposition}[\cite{feuilloley2023local}]\label{prop:degeneracy}
Suppose there is a proof labeling scheme for a graph class that is $d$-degenerate using $f(n)$-bit labels on edges.
The proof labeling scheme can be transformed into a proof labeling scheme using $O\left(d \cdot f(n)\right)$-bit labels on vertices.
\end{proposition}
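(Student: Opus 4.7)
The plan is to exploit $d$-degeneracy by fixing an orientation of the edges in which every vertex has out-degree at most $d$, and then have each vertex store the labels of only its at-most-$d$ outgoing edges; a vertex can read the labels of its incoming edges from its neighbors' labels. The main work is making this redistribution locally verifiable, despite the fact that each vertex sees only a \emph{multiset} of neighbor labels rather than a list indexed by neighbor identity.

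Concretely, the prover first orients $E(G)$ so that every out-degree is at most $d$, which exists by $d$-degeneracy, and then runs the edge-certification prover to obtain the edge labels $\ell_E(e)$ of $f(n)$ bits each. For each vertex $v$, I would define
\[
\ell(v) \;=\; \bigl(\ID(v),\; \{(\ID(u),\,\ell_E(\{u,v\})) \,:\, u \text{ an out-neighbor of } v\}\bigr).
\]
There are at most $d$ pairs, each of size $O(\log n + f(n))$; assuming $f(n) = \Omega(\log n)$ (standard in this model, since IDs must fit in any useful certificate), the label is $O(d\cdot f(n))$ bits.

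For verification at $v$, the verifier reads $\ell(v)$ and the multiset of neighbors' labels. Since every label starts with its owner's ID, $v$ can identify which label comes from which neighbor. For every neighbor $u$, the verifier performs an XOR check: exactly one of $\ell(v)$ and $\ell(u)$ should contain a pair with the other vertex's ID as first coordinate. If this check fails for any neighbor, reject. Otherwise the unique recorded value $L$ becomes the edge label $\ell_E(\{u,v\})$, and $v$ simulates the original edge-certification verifier on the resulting multiset of edge labels incident to $v$.

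Completeness is immediate from the construction. For soundness, suppose $(G,s)$ violates $\mathcal{P}$: for any adversarial vertex labels, either some XOR check fails and a vertex rejects, or every edge ends up with a well-defined label $L$, in which case the adversary has effectively chosen an edge labeling, and the soundness of the original edge-certification scheme guarantees rejection at some vertex. The main subtlety I expect to handle carefully is the XOR-check step: embedding $\ID(u)$ inside each stored pair is what prevents an adversary from scrambling which tuple belongs to which endpoint, and without it the multiset-based neighborhood view would allow inconsistent edge labels to survive.
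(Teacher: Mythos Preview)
Your proposal is correct and follows essentially the same approach as the paper: orient the edges with out-degree at most $d$ and store each edge label at its tail vertex. The paper's own proof is a one-line sketch that omits the verification details you spell out (tagging each stored label with the out-neighbor's $\ID$ and the exactly-one check), as well as the implicit $f(n)=\Omega(\log n)$ assumption needed to absorb the $\ID$ overhead into $O(d\cdot f(n))$; your write-up simply makes these points explicit.
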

\begin{proof}
Consider an edge orientation with outdegree at most $d$. To transform a given $f(n)$-bit edge certification into an $O(d \cdot f(n))$-bit vertex certification, it suffices that for each edge $e = u \rightarrow v$ oriented from $u$ to $v$, move the label $\ell(e)$ to $u$. 
\end{proof}

It is well-known~\cite{THOMASON2001318} that $H$-minor-free graphs are $d$-degenerate for  $d = O\left(t \sqrt{\log t}\right)$, where $t$ is the number of vertices in $H$. Therefore, as we focus on graph classes defined by forbidden minors, throughout the paper, we can freely put labels on edges in designing proof labeling schemes, as the overhead in the transformation of \Cref{prop:degeneracy} is only $O(1)$. 

\subparagraph{Graphs with designated vertices.}
We describe a proof labeling scheme to certify designated vertices with $O(\log n)$-bit edge labels.

\begin{proposition}
    \label{vertex-in-graph-edge-pls}
    Given a fixed $O(\log n)$-bit identifier $x$, there exists a proof labeling scheme with $O(\log n)$-bit edge labels to certify the existence of a vertex $v$ with $\ID(v) = x$.  
\end{proposition}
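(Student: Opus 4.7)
The plan is to have the prover root a BFS tree at the vertex $v^*$ with $\ID(v^*) = x$ (assuming it exists), and encode the resulting BFS distance function $d$ into the edge labels. Concretely, for each edge $e = \{u, w\}$, the prover sets
\[
\ell(e) = \bigl( (\ID(u), d(u)),\; (\ID(w), d(w)) \bigr),
\]
which fits in $O(\log n)$ bits since identifiers are $O(\log n)$-bit integers and distances are bounded by $n-1$.

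The verification $\mathbf{V}$ at a vertex $u$ proceeds as follows. First, for every incident edge label, $u$ locates the entry whose identifier equals $\ID(u)$, checks that exactly one such entry exists, and checks that the distance value $d(u)$ attached to that entry is the same across \emph{all} incident edges. Next, if $\ID(u) = x$, then $u$ checks that $d(u) = 0$; otherwise, $u$ checks that $d(u) \geq 1$ and that at least one incident edge $\{u, w\}$ reports distance $d(u) - 1$ for its other endpoint. Any failed check causes $u$ to output $\rej$; otherwise $u$ outputs $\acc$.

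Completeness is immediate by taking $d$ to be the true BFS distances from $v^*$. For soundness, suppose some labeling $\ell$ passes all local checks, and let $u^*$ be a vertex attaining the minimum value of $d$ (well-defined because $d$ takes only finitely many integer values). If $\ID(u^*) \neq x$, then the verification at $u^*$ forces $u^*$ to have a neighbor reporting distance $d(u^*) - 1 < d(u^*)$, contradicting minimality. Hence $\ID(u^*) = x$, so a vertex with identifier $x$ exists.

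The main conceptual point — and I do not expect a real obstacle — is recognizing that this scheme does \emph{not} need to certify a spanning tree or that $d$ equals the true BFS distance. Soundness follows from the purely order-theoretic observation that no vertex with $\ID \neq x$ can be a local minimum of $d$, and therefore the global minimum of $d$ must be realized at a vertex whose identifier equals $x$.
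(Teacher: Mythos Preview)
Your proof is correct and follows essentially the same spanning-tree/distance-from-root idea as the paper's proof. The only difference is cosmetic: the paper stores $x$ together with $\min(\dist(v,u),\dist(v,w))$ on each edge and has the verifier check for a unique parent, whereas you store both endpoints' $(\ID,d)$ pairs and only require one smaller neighbor---your observation that full tree certification is unnecessary is a mild simplification, not a different route.
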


\begin{proof}
The proof, which uses the spanning tree technique~\cite{afek1991memory}, is folklore. The prover $\mathbf{P}$ selects a spanning tree $T$ rooted at the vertex $v$ with $\ID(v) = x$. For each edge $\{u,w\}$, its $O(\log n)$-bit label consists of $x$ and a distance label $\min(\dist(v,u), \dist(v,w))$, where $d(v,y)$ is the distance between $v$ and $y$ in $T$.

For the verification algorithm $\mathbf{V}$ at the vertex $v$ with $\ID(v) = x$, we check if $v$ is indeed the root of the tree $T$ by checking whether the distance label for all its incident edges is zero. For each remaining vertex $u$, we check if $u$ has exactly one parent in $T$ by examining whether there is an integer $d$ such that exactly one incident edge has distance label $d$ and all the remaining incident edges have distance label $d+1$. Finally, the requirement that all edges are marked with the same identifier $x$ can also be verified locally.

For correctness, observe that all vertices output $\acc$ if and only if the labeling encodes a spanning tree $T$ rooted at a vertex $v$ with $\ID(v) = x$. If there exists a vertex $v$ with $\ID(v) = x$, then all vertices output $\acc$ on the labeling created by $\mathbf{P}$. Otherwise, regardless of the given labeling, at least one vertex outputs $\rej$, as such a spanning tree $T$ does not exist.
\end{proof}


In the subsequent discussion, we informally refer to the proof labeling scheme in \Cref{vertex-in-graph-edge-pls} as the proof labeling scheme \emph{pointing to} $v$. 

\subsection{\texorpdfstring{$\MSO$}{MSO2} properties and \texorpdfstring{$k$}{k}-terminal recursive graphs}

The proof of Courcelle's theorem, which states that any $\MSO$ property is decidable in linear time for graphs of bounded treewidth~\cite{borie1992automatic,COURCELLE199012}, relies on decomposing bounded-treewidth graphs into $k$-terminal recursive graphs, defined as follows.

\begin{definition}
    \label{terminal-recursive-gluing}
    A \underline{$k$-terminal graph} is a graph $G$ equipped with an ordered subset $T_G \subseteq V[G]$ of at most $k$ vertices called \ul{terminals}. Given two functions $f_1, f_2 : [k] \to [k] \cup \{\perp\}$, the \ul{composition operator} $\odot_{f_1, f_2}$, when given two $k$-terminal graphs $G_1, G_2$ as input, returns the union of $G_1$ and $G_2$, whose $i$th terminal is the $f_1(i)$th terminal of $G_1$ and the $f_2(i)$th terminal of $G_2$; if both of them not $\perp$, then the two terminals are glued together.
 Let $B$ denote the set of all $k$-terminal graphs with at most $k$ vertices. The set of \ul{$k$-terminal recursive graphs} is defined as the closure of $B$ under all possible composition operations.
\end{definition}

\begin{figure}[ht!]
    \centering
    \includegraphics[scale=0.65]{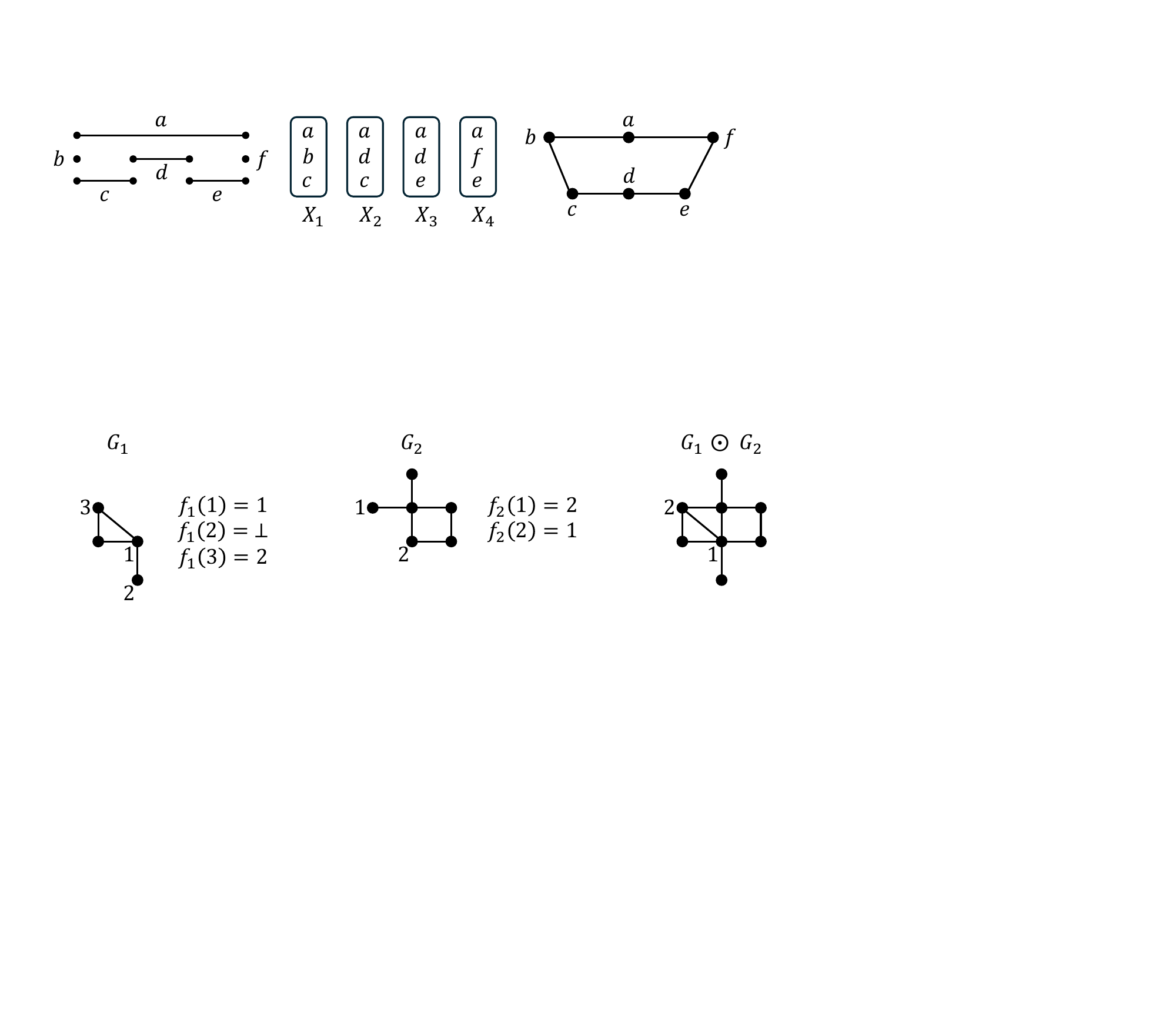}
    \caption{Combining two $3$-terminal graphs into a $3$-terminal graph.}
    \label{fig:f2}
\end{figure}

See \Cref{fig:f2} for an illustration of \Cref{terminal-recursive-gluing}. 
The definition of \emph{$k$-terminal recursive graphs} is usually more general in the literature, allowing for composition operations with arity greater than two.
For our purposes, the above definition suffices.
\begin{proposition}[{\cite{borie1992automatic,COURCELLE199012}}]
    \label{homomorphism-class-mapping}
    Let $F_k$ be the set of $k$-terminal recursive graphs, and $\phi$ be an $\MSO$ property, then there exists a function $h : F_k \to C$, for some finite set $C$ of \ul{homomorphism classes} satisfying the following conditions.
    \begin{itemize}
        \item Given any two graphs $G_1, G_2 \in F_k$ such that $h(G_1) = h(G_2)$, $G_1$ satisfies $\phi$ if and only if $G_2$ satisfies $\phi$.
        \item For any composition operator $\odot$, there exists a function $f_{\odot} : C^2 \to C$ such that 
        \[ f_{\odot}(h(G_1), h(G_2)) = h(G_1 \odot G_2)\]
    \end{itemize}
\end{proposition}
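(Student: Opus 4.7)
The plan is to define $h(G)$ to be the $\MSO$ \emph{$q$-type} of $G$, where $q$ is the quantifier rank of $\phi$, and to derive the two required properties from the classical Feferman--Vaught-style composition theorem for $\MSO$. For a $k$-terminal graph $G$ with terminals $t_1, \dots, t_k$, I would view the terminals as constant symbols enriching the vocabulary, and let $\Phi_q$ denote the finite set (up to logical equivalence) of $\MSO$ formulas of quantifier rank at most $q$ over this enriched vocabulary. Define
\[ \tau_q(G) = \{ \psi \in \Phi_q : G \models \psi \}, \]
set $h(G) := \tau_q(G)$, and let $C$ be the finite collection of realizable $q$-types. The first property is then immediate: since $\phi \in \Phi_q$, if $h(G_1)=h(G_2)$ then $G_1 \models \phi$ iff $G_2 \models \phi$.

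The substance is the construction of $f_\odot$. I would prove by induction on formula structure a \emph{composition lemma}: for every $\psi \in \Phi_q$ and every composition operator $\odot$, there exist finitely many formulas $\psi^1_j, \psi^2_j \in \Phi_q$ and a Boolean combination $B_\psi$ such that
\[ G_1 \odot G_2 \models \psi \iff B_\psi\bigl( G_1 \models \psi^1_j,\ G_2 \models \psi^2_j \bigr). \]
Atomic cases exploit the fact that every edge of $G_1 \odot G_2$ lies entirely on one side and that the only shared vertices are the glued terminals, which are named by the new constants. Boolean connectives are trivial. A first-order quantifier $\exists x\,\psi'$ reduces to a disjunction over which side the witness lies on (shared terminals are already named). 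For a set quantifier $\exists X\,\psi'$, one guesses the traces $X_i = X \cap V(G_i)$; consistency on the $k$ terminals is enforced by a finite disjunction over the $2^k$ possible trace patterns, and within each $G_i$ the set $X_i$ is treated as a fresh unary predicate appearing inside a formula of the same quantifier rank as $\psi'$. Once the lemma is established, $\tau_q(G_1 \odot G_2)$ is determined by $(\tau_q(G_1), \tau_q(G_2))$ and $\odot$, and this mapping is $f_\odot$.

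The main obstacle will be the set-quantifier step: one must verify carefully that turning a guessed set into a free unary predicate does not raise the quantifier rank (the quantifier is discharged at the current level and replaced by a new atomic symbol), and that the finite enumeration of terminal trace patterns genuinely captures every interaction between the two sides of the gluing. Everything else is bookkeeping once the inductive invariant on $\Phi_q$ is fixed, and the finiteness of $C$ then follows from the finiteness of $\Phi_q$ up to logical equivalence. This is the composition-theoretic content behind Courcelle's theorem that \Cref{homomorphism-class-mapping} encapsulates.
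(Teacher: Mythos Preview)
The paper does not supply its own proof of \Cref{homomorphism-class-mapping}: it is stated as a citation of Courcelle and of Borie--Parker--Tovey and then used as a black box. Your proposal is therefore not competing with any argument in the paper; it is a sketch of the classical proof that those references contain.

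As such, your outline is correct and is precisely the standard route: take $h(G)$ to be the $\MSO$ $q$-type of $G$ with the terminals treated as named constants, get the first bullet for free because $\phi$ has quantifier rank at most $q$, and obtain $f_\odot$ from a Feferman--Vaught composition lemma proved by induction on formula structure. Two small points worth tightening if you write this out in full. First, since the logic is $\MSO_2$, you also have edge variables and edge-set quantifiers; these are actually easier than vertex sets because, as you already note, every edge lies on exactly one side of the gluing, so there is no consistency condition on terminals to enumerate. Second, the composition operator $\odot_{f_1,f_2}$ may relabel and drop terminals in the output, so the constants naming terminals in $G_1 \odot G_2$ are a function of the constants in $G_1$ and $G_2$; this relabeling should be absorbed into the definition of $f_\odot$ rather than into the inductive formula translation. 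Neither point changes your plan.
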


Courcelle’s theorem is then obtained by interpreting a bounded-treewidth graph as a $K$-terminal recursive graph, and using \Cref{homomorphism-class-mapping} to calculate its homomorphism class recursively by dynamic programming.

As previously discussed, Courcelle’s theorem and \Cref{homomorphism-class-mapping} are applicable to graphs with a specified vertex subset $X \subseteq V$ or an edge set $Y \subseteq E$, or more generally, to graphs where vertices and edges are labeled with inputs from a fixed finite set. This allows us to capture properties such as ``$X$ is a dominating set of $G$'' and ``$Y$ induces a planar subgraph of $G$,'' where $X \subseteq V$ is represented as an one-bit vertex label and $Y \subseteq E$ is represented as an one-bit edge label.




\section{Technical overview}\label{sect:overview}

We begin with a review of the previous $O(\log^2 n)$-bit proof labeling schemes for $\MSO$ properties in bounded-treewidth graphs by Fraigniaud, Montealegre, Rapaport, and Todinca~\cite{fraigniaud2024meta}. As the class of graphs of treewidth at most $k$ is the same as the class of $(k+1)$-terminal recursive graphs~\cite{bodlaender1998partial}, their idea is to certify an execution of the dynamic programming algorithm for Courcelle's theorem by storing relevant information at each bag of the tree decomposition. The main technical challenge to realize this approach is that each bag of the tree decomposition can be \emph{disconnected}, requiring information to be routed between vertices in the same bag and introducing congestion overhead.

To address this issue, they utilized the following well-known result: At the cost of increasing the width by a factor of three, the depth of the tree decomposition can be made $O(\log n)$~\cite{bodlaender1989nc}. Furthermore, by examining additional structural properties of this decomposition, they showed that the required routing can be performed with $O(\log n)$ congestion. This congestion overhead explains why the resulting label size is $O(\log^2 n)$ rather than the desired $O(\log n)$. Since any tree decomposition of bounded width necessarily has depth $\Omega(\log n)$, breaking the barrier $O(\log^2 n)$ requires a different approach.

\subparagraph{New approach.} In this work, we demonstrate that for bounded-pathwidth graphs, the optimal label size of $O(\log n)$ is attainable. Instead of relying on traditional path and tree decompositions, we develop a new graph decomposition that allows us to reduce the congestion overhead of executing the dynamic programming algorithm for Courcelle’s theorem to $O(1)$.

We begin with the interval representation corresponding to the path decomposition. We show that the intervals can be arranged into $O(1)$ lanes of pairwise-disjoint intervals in such a way that additional edges can be  embedded into the graph with $O(1)$ congestion to complete the lanes as follows:
\begin{itemize}
    \item The vertices within each lane are connected to form a path.
    \item The initial vertices of all lanes are connected to form a path.
\end{itemize}

Similar to the relation between treewidth and the class of $k$-terminal recursive graphs~\cite{bodlaender1998partial}, we prove that such graphs can be recursively constructed using the following two merging operations on $k$-lane graphs, which are graphs $G$ with a set of lanes $\mathcal{T}(G) \subseteq [k]$, where each lane $i \in \mathcal{T}(G)$ is assigned an in-terminal $\tin_i(G)$ and an out-terminal $\tout_i(G)$. 

 In a $\bmerge$ operation, two $k$-lane graphs over disjoint sets of lanes are combined by adding an edge between an out-terminal of one graph and an out-terminal of the other graph. 
 
 In a $\tmerge$ operations, we are given a tree $T$ over $k$-lane graphs satisfying the following conditions:
 \begin{itemize}
    \item If $G_1$ is a child of $G_2$ in $T$, then $\mathcal{T}(G_1) \subseteq \mathcal{T}(G_2)$.
    \item If $G_1$ and $G_2$ have the same parent in $T$, then $\mathcal{T}(G_1) \cap \mathcal{T}(G_2) = \emptyset$.  
\end{itemize}
The $k$-lane graphs in $T$ are then merged as follows: For each graph $G$ in $T$ and each of its child $G'$, identifying each in-terminal $\tin_i(G')$ of $G'$ with the out-terminal $\tout_i(G)$ of $G$ in the same lane. 

A $k$-lane recursive graph is a $k$-lane graph that can be recursively constructed by these operations. We prove that the class of $k$-lane recursive graphs has the following favorable properties: 
\begin{itemize}
    \item Any $k$-lane recursive graph can be constructed with a \emph{bounded depth} of recursion. 
    \item Any $k$-lane recursive graph is \emph{connected}.
\end{itemize}
Since a $O(1)$-lane recursive graph can be viewed as a $O(1)$-terminal recursive graph, the aforementioned desirable properties enable us to certify the execution of the dynamic programming algorithm for Courcelle's theorem with optimal $O(\log n)$-bit proof labeling schemes.

\subparagraph{Roadmap.}
The rest of the paper proceeds as follows. 
In \Cref{sect:lane-decomposition}, we show that, for any interval representation associated with a bounded-pathwidth graph, we can partition the vertices into $O(1)$ lanes such that the lane partition can be completed with $O(1)$ congestion.
In \Cref{sect:construction}, we show that the resulting graphs are $k$-lane recursive graphs with $k \in O(1)$.
In \Cref{sect:certification}, we present a proof labeling scheme to certify any given $\MSO$ graph property using $O(\log n)$-bit labels on $k$-lane recursive graphs.
In \Cref{sect:conclusion}, we conclude the paper.



\section{Completing a \texorpdfstring{$k$}{k}-lane partition with low congestion}\label{sect:lane-decomposition}

Consider an interval representation of a graph, where the intervals are partitioned into lanes.
Our goal is to augment the graph by adding edges such that each lane forms a path, and the initial vertices of all lanes are concatenated into a single path. Each newly added edge is embedded as a path in the original graph. 
We demonstrate that if the width of the given interval representation is $O(1)$, then there exists an $O(1)$-lane partition that allows the desired embedding with $O(1)$ congestion.
We present the necessary definitions in \Cref{sect:interval} and prove the main result in \Cref{sect:embedding}.

\subsection{Graph terminology}\label{sect:interval}

We start with a formal definition of an interval decomposition. 

\begin{definition}
An \ul{interval representation} $\mathcal{I} = \{I_v \, | \, v \in V\}$ of a graph $G=(V,E)$ is an assignment of a non-empty interval $I_v = [L_v, R_v]$ to each vertex $v \in V$ such that $I_u \cap I_v \neq \emptyset$ for each edge $e = \{u,v\}\in E$. The \ul{width} of an interval decomposition is the maximum number of intervals with a non-empty intersection.
\end{definition}


By \Cref{pathwidth-defn}, we know that a graph has pathwidth $k$ if and only if it has an interval representation of width $k+1$.
In the subsequent discussion, for any two non-empty intervals $[a,b]$ and $[c,d]$, we write $[a,b] \prec [c,d]$ if $b < c$, that is, $[a,b]$ is strictly before $[c,d]$.

\begin{definition}
A \ul{$k$-lane partition}  $\mathcal{P}=(P_1, \ldots P_k)$  of an interval representation $\mathcal{I}$ of a graph $G=(V,E)$ is a partition of the vertex set $V$ into $k$ non-empty sequences $P_1, \ldots, P_k$ such that for each $i \in [k]$, the sequence $P_i=\left(v_1^{i}, \ldots, v_{|P_i|}^i\right)$ satisfies $I_{v_1^{i}} \prec \cdots \prec I_{v_{|P_i|}^{i}}$.
\end{definition}

In other words, a $k$-lane partition is a partition of a collection of intervals into $k$ lanes of pairwise-disjoint intervals, ordered according to $\prec$. The following observation implies that any interval representation of width $k$ admits a $k$-lane partition.

\begin{observation}
    \label{greedy-interval-split}
Any collection $\mathcal{I}$ of non-empty intervals of width $k$ can be partitioned into $k$ subsets $\mathcal{I}=\mathcal{I}_1 \cup \cdots \cup \mathcal{I}_k$ such that for each $i \in [k]$, the intervals in $\mathcal{I}_i$ are pairwise-disjoint.
\end{observation}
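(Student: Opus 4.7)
The plan is to prove this by a standard left-endpoint greedy sweep, which is the canonical argument showing that an interval graph can be properly colored with $\omega(G)$ colors. The proof has three steps: define the greedy assignment, bound the number of lanes it uses via a Helly-style argument on the point $L_I$, and pad out empty subsets if needed.

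First, I would sort the intervals of $\mathcal{I}$ in non-decreasing order of their left endpoints, breaking ties arbitrarily, and process them in this order. For each interval $I = [L_I, R_I]$, assign $I$ to the lowest-indexed subset $\mathcal{I}_j$ such that every interval currently in $\mathcal{I}_j$ is disjoint from $I$; equivalently, such that the most recently added interval in $\mathcal{I}_j$ has right endpoint strictly less than $L_I$. By construction, the intervals placed into any single $\mathcal{I}_j$ are pairwise-disjoint throughout the process.

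Second, I would show that this procedure never needs more than $k$ subsets. Suppose, for contradiction, that some interval $I$ is to be placed, but for each $j \in [k]$ the subset $\mathcal{I}_j$ already contains an interval $I_j'$ that intersects $I$. Since we process intervals in non-decreasing order of left endpoints, we have $L_{I_j'} \leq L_I$ for every $j$, and since $I_j' \cap I \neq \emptyset$ we also have $R_{I_j'} \geq L_I$. Hence each $I_j'$ contains the point $L_I$. Together with $I$ itself, this produces $k+1$ distinct intervals all containing $L_I$, contradicting the assumption that $\mathcal{I}$ has width $k$. Therefore the greedy assignment uses at most $k$ subsets.

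Third, if the greedy uses only $k' < k$ subsets, I would simply define the remaining $k-k'$ subsets to be empty; empty sets trivially satisfy the pairwise-disjoint condition, and the union is still all of $\mathcal{I}$. The main ``obstacle'' is quite mild: the only subtlety is the Helly-style observation that two overlapping intervals whose left endpoints satisfy $L_{I'} \leq L_I$ must share the point $L_I$, which is what lets us convert ``all $k$ lanes are blocked'' into ``$k+1$ intervals through a common point.''
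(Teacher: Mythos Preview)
Your proof is correct. The paper's own proof is a one-liner: it simply observes that the statement is equivalent to the well-known fact that for interval graphs the chromatic number equals the maximum clique size, and stops there. Your proposal spells out the standard left-endpoint greedy coloring that actually establishes this fact, together with the Helly-type step at $L_I$; so you are supplying the canonical proof of the result the paper merely cites. There is no meaningful divergence in strategy---you have just unpacked what the paper takes for granted.
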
 
\begin{proof}
The observation is a restatement of the well-known fact that the size of a maximum clique equals the chromatic number for interval graphs. 
\end{proof}

\begin{definition}\label{def:completion}
Given a {$k$-lane partition}  $\mathcal{P}=(P_1, \ldots P_k)$  of an interval representation $\mathcal{I}$ of a graph $G=(V,E)$, let
\[E_1 = \left\{ \{v_j^i, v_{j+1}^{i}\} \, | \, 1 \leq i \leq k \text{ and } \ 1 \leq j < |P_i| \right\} \ \ \text{ and } \  \  E_2= \left\{ \{v_1^i, v_{1}^{i+1}\} \, | \, 1 \leq i < k \right\}.\]
Define the \ul{completion} and \ul{weak completion} of $(G,\mathcal{I},\mathcal{P})$ as $(V,E \cup E_1 \cup E_2)$ and $(V,E \cup E_1)$, respectively.
\end{definition}

See \Cref{fig:f3} for an illustration of \Cref{def:completion}. In other words, a weak completion is obtained by adding new edges to transform each lane into a path. If we further connect the initial vertices of all lanes to form a path, the result is called a completion. As we will later see, these additional edges $E_1 \cup E_2$ are useful in designing proof labeling schemes.
 
\begin{figure}[ht!]
    \centering
    \includegraphics[scale=0.65]{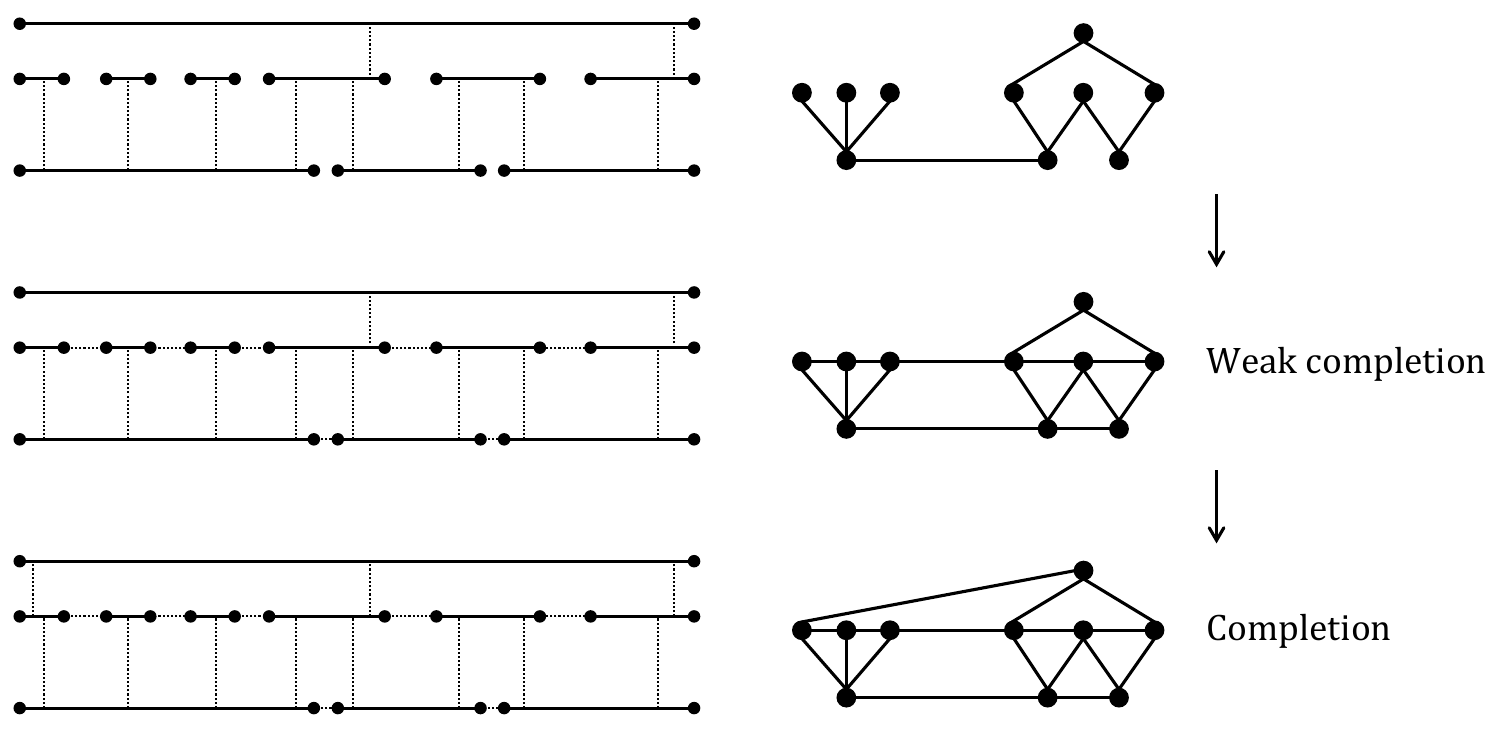}
    \caption{Weak completion and completion.}
    \label{fig:f3}
\end{figure}

\begin{definition}
Let $G=(V,E)$ and $G'=(V, E')$ be two graphs on the same vertex set $V$ with $E \subseteq E'$. An \ul{embedding} $\mathcal{E}=\{ P_e \, | \, e\in E' \setminus E\}$ of $G'$ into $G$ is an assignment of a $u$-$v$ path $P_e$ in $G$ to each edge $e=\{u,v\} \in E' \setminus E$. The \ul{congestion} of an embedding $\mathcal{E}$ is the maximum number of paths in $\mathcal{E}$ that an edge belongs to.
\end{definition}

We explain the motivation behind the above definition, as follows.
Suppose we are given a proof labeling scheme on $G'=(V, E')$ with $b$-bit edge labels.
If we can find an embedding of $G'$ into $G$ with congestion $c$, then the $b$-bit edge labels in $E' \setminus E$ can be simulated in $G=(V,E)$ at a cost of $O(bc)$ bits by putting the label of each $e\in E' \setminus E$ on all edges in $P_e$.

\subsection{Low-congestion embedding}\label{sect:embedding}
The main goal of this section is to show that for any given $O(1)$-width interval representation $\mathcal{I}$ of a graph $G$, there exists an $O(1)$-lane partition $\mathcal{P}$ such that the completion of $(G,\mathcal{I},\mathcal{P})$ can be embedded into $G$ with $O(1)$ congestion. To do so, we define the following three functions.
\begin{align*}
   f(k) &=\begin{cases}
    1, &\text{if $k=1$,}\\
    2 + 2(k-1) \cdot f(k-1), &\text{if $k > 1$.}
\end{cases}\\
   g(k) &=\begin{cases}
    0, &\text{if $k=1$,}\\
    2 + g(k-1) + 2k \cdot f(k-1), &\text{if $k > 1$.}
\end{cases}\\
h(k) &= g(k)+f(k)-1
\end{align*}

\begin{proposition}
\label{low-completion-cost}
For any given interval representation $\mathcal{I}$ of a connected graph $G=(V,E)$ with width $k$, there exists a $w$-lane partition $\mathcal{P}$ with $w \leq f(k)$ such that the following holds.
\begin{itemize}
    \item The weak completion of $(G,\mathcal{I},\mathcal{P})$ can be embedded into $G$ with congestion at most $g(k)$.
    \item The completion of $(G,\mathcal{I},\mathcal{P})$ can be embedded into $G$ with congestion at most $h(k)$.
\end{itemize}
\end{proposition}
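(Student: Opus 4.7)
I would prove \Cref{low-completion-cost} by induction on the width $k$. The base case $k = 1$ is immediate: a connected graph with a width-$1$ interval representation is a single vertex, so $w = 1 = f(1)$ lanes work with $E_1$ and $E_2$ both empty, yielding congestion $0 = g(1) = h(1)$.

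For the inductive step, the plan is to peel off a spine and recurse on the resulting components. First, I would greedily extract a bottom lane $P_1$ by starting from the interval of smallest $L_v$ and repeatedly appending the next interval of smallest $L_v$ whose $L$-value strictly exceeds the current lane's rightmost $R_v$. Removing $V(P_1)$ yields a graph $G'$ whose interval representation has width at most $k - 1$. Let $C_1, \ldots, C_s$ be the connected components of $G'$. The crucial combinatorial observation is that each component's temporal span $[\min_{v \in C_i} L_v, \max_{v \in C_i} R_v]$ is entirely covered by the intervals of $C_i$: walking along a path in $C_i$ from its leftmost to rightmost vertex, consecutive intervals overlap, so their union covers the span. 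Hence, at any time $t$, the number of component-spans containing $t$ is at most the number of $G'$-intervals active at $t$, i.e.\ at most $k - 1$. Applying \Cref{greedy-interval-split} to these spans then partitions the components into $k - 1$ color classes in which same-color spans are pairwise disjoint.

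Applying the induction hypothesis to each connected component $C_i$ gives an $f(k-1)$-lane partition of $C_i$ with weak-completion congestion at most $g(k-1)$ and completion congestion at most $h(k-1)$. Within a color class, the components are temporally ordered, so same-index lanes from its components can be concatenated into a super-lane whose intervals remain pairwise disjoint. I would take two parallel copies of this concatenated super-lane structure per color class (the extra copy absorbs boundary degeneracies at touching span endpoints and carries the bridging structure back to the spine), yielding $2(k-1) f(k-1)$ lanes from the components. Together with $P_1$ itself and one auxiliary spine lane (used to route the $E_2$-style path through the lane-starts), this produces $w \leq f(k) = 2 + 2(k-1)f(k-1)$ lanes as required.

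For the congestion bound, I would split the weak completion's routed paths into three contributions: (i) within-component paths contribute at most $g(k-1)$ by induction, applied inside each $C_i$; (ii) cross-component bridges within each super-lane are routed through $P_1$ together with one $G$-edge into and out of each incident component, contributing at most $2k \cdot f(k-1)$ after summing across the color classes and lane indices; (iii) the spine's own two lane edges contribute a further $2$. Summing yields $g(k) = 2 + g(k-1) + 2k f(k-1)$. For the full completion, the $f(k) - 1$ additional $E_2$ edges are routed along the spine, adding at most $f(k) - 1$ to the maximum load, matching $h(k) = g(k) + f(k) - 1$. The main obstacle I anticipate is the detailed routing argument, in particular verifying that every cross-component bridge can be realized as an embeddable $G$-path through $P_1$ with controlled load on the heavily used spine edges, and handling the boundary cases at touching span endpoints so that the per-edge congestion stays within the stated bounds.
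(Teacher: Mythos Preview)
Your high-level recursion matches the paper's, but the spine construction is where your argument breaks. Your greedy $P_1$ is defined purely temporally: consecutive intervals in $P_1$ are \emph{disjoint} by construction, so their endpoints are \emph{not} adjacent in $G$ (an edge forces overlapping intervals). Hence $P_1$ is not a path in $G$, nor even connected, and you have no way to embed its own weak-completion edges, let alone to ``route cross-component bridges through $P_1$'' as step~(ii) claims. This is exactly the step you flagged as the main obstacle, and it does fail.

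The paper resolves this by first fixing an actual $v_{\mathsf{st}}$--$v_{\mathsf{ed}}$ path $P$ in $G$, then greedily extracting a covering subsequence $S=(s_1,s_2,\ldots)$ of $P$ (greedy on right endpoints, not left), and splitting $S$ into its odd and even halves $S_1,S_2$. Each half consists of pairwise disjoint intervals (so each is a legitimate lane), and together they cover the whole time axis; crucially, all routing along the spine uses subpaths of the \emph{graph} path $P$, which is where the ``$+2$'' in $g(k)$ comes from. This also explains the two items you left vague: the ``$2+$'' in $f(k)$ counts the two spine lanes $S_1,S_2$ (not one lane plus an undefined auxiliary), and the factor of $2$ in $2(k-1)f(k-1)$ is \emph{not} from ``two parallel copies'' (you cannot duplicate lanes in a partition) but from partitioning each colour class $\mathcal{C}_i$ into $\mathcal{C}_i^1 \cup \mathcal{C}_i^2$ according to whether the component has an edge to $S_1$ or to $S_2$. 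That split is what guarantees every component has a single edge into the correct half of the spine, which is precisely what the cross-component routing needs.
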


The second statement of \Cref{low-completion-cost} follows immediately from the first statement of \Cref{low-completion-cost}: By selecting an arbitrary $u$-$v$ path $P_e$ for each $e \in E_2$, the additional cost of embedding $E_2$ is at most $|E_2| = w-1 \leq f(k)-1$. Therefore, in the subsequent discussion, we focus on the first statement, which is proved by an induction on $k$.

\subparagraph{Base case.} For the base case of $k = 1$, $G$ must be an isolated vertex. By selecting the trivial partition $\mathcal{P}$ with $f(1) = 1$ lane, the weak completion of $(G,\mathcal{I},\mathcal{P})$ can be embedded into $G$ with congestion $g(1) = 0$, as $E_1 = \emptyset$ and $E_2 = \emptyset$.

\subparagraph{Inductive step.} Now, assuming the claim is true for $1, 2, \ldots, k-1$, we prove that it is also true for $k$. To do so, we describe the construction of the needed $w$-lane partition $\mathcal{P}$ and the required embedding.  We start with some definitions. See \Cref{fig:f4} for an illustration of these definitions.

\begin{figure}[ht!]
    \centering
    \includegraphics[scale=0.65]{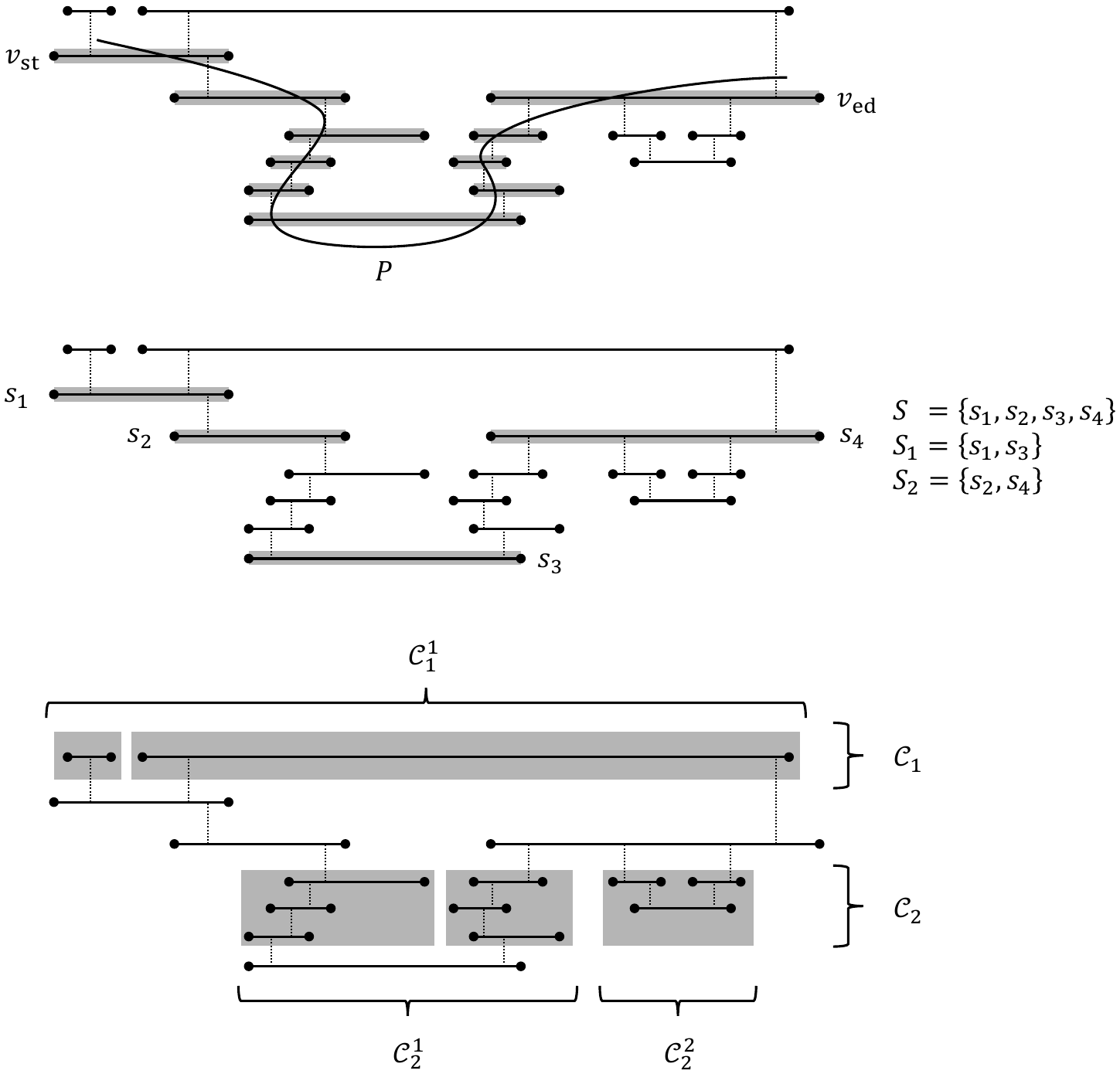}
    \caption{Graph terminology in \Cref{sect:embedding}.}
    \label{fig:f4}
\end{figure}

\begin{itemize}
    \item Let $\vst \in V$ be chosen to minimize  $L_{\vst}$.
\item Let $\ved \in V$ be chosen to maximize $R_{\ved}$.
\item Let $P$ be any $\vst$-$\ved$ path in $G$. 
\item Let $S=\{s_1, s_2, \ldots\}$ be a sequence of vertices defined as follows.
\begin{itemize}
    \item Select $s_1 = \vst$ to be the first vertex in $P$. 
    \item If $R_{s_i} < R_{\ved}$, then select $s_{i+1}$ to maximize $R_{s_{i+1}}$ from the set of vertices $u$ in $P$ \emph{after} $s_i$ such that $I_u \cap I_{s_{i}} \neq \emptyset$. Such a vertex $s_{i+1}$ exists, since otherwise $P$ is disconnected.
\end{itemize}
\item Let $S_1 = \{s_1, s_3, s_5, \ldots\}$ and $S_2 = \{s_2, s_4, s_6, \ldots\}$.
\end{itemize}

We make the following observations.

\begin{observation}\label{obs0}
For all $1 \leq i < |S|$, $R_{s_i} < R_{s_i+1}$.
\end{observation}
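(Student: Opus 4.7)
The plan is to prove Observation~\ref{obs0} ($R_{s_i} < R_{s_{i+1}}$ for $1 \leq i < |S|$) by inspecting the subpath of $P$ that runs from $s_i$ to $\ved$, and using the greedy definition of $s_{i+1}$. Since $i < |S|$, by construction we have $R_{s_i} < R_{\ved}$, so the subpath has at least one edge; write it as $s_i = u_0, u_1, \ldots, u_m = \ved$ with $m \geq 1$. Every candidate used in defining $s_{i+1}$ comes from the vertices of $P$ strictly after $s_i$ whose interval meets $I_{s_i}$, and the goal is to exhibit at least one candidate along this subpath whose right endpoint strictly exceeds $R_{s_i}$.

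To produce such a candidate, I would let $j^\star$ be the smallest index $j \geq 1$ with $R_{u_j} > R_{s_i}$; this exists because $R_{u_m} = R_{\ved} > R_{s_i}$. Minimality gives $R_{u_{j^\star-1}} \leq R_{s_i}$ (even when $j^\star - 1 = 0$, since $R_{u_0} = R_{s_i}$). Because $\{u_{j^\star-1}, u_{j^\star}\}$ is an edge of $G$, the intervals $I_{u_{j^\star-1}}$ and $I_{u_{j^\star}}$ overlap, forcing $L_{u_{j^\star}} \leq R_{u_{j^\star-1}} \leq R_{s_i}$. Combined with $R_{u_{j^\star}} > R_{s_i}$, this shows $R_{s_i} \in I_{u_{j^\star}} \cap I_{s_i}$, so $u_{j^\star}$ is a legitimate candidate for $s_{i+1}$. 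Since $s_{i+1}$ is chosen to maximize $R_{s_{i+1}}$ over all candidates, we conclude $R_{s_{i+1}} \geq R_{u_{j^\star}} > R_{s_i}$.

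There is no real obstacle here; the only subtle point is that a candidate must \emph{simultaneously} satisfy the adjacency constraint $I_u \cap I_{s_i} \neq \emptyset$ and the strict-improvement property $R_u > R_{s_i}$, and choosing $j^\star$ minimally (so that $u_{j^\star-1}$ still lies ``within reach'' of $I_{s_i}$ via the overlap of consecutive path intervals) supplies both at once. I would therefore present the argument as a single short paragraph after setting up the subpath.
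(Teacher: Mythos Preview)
Your proof is correct and follows the same underlying idea as the paper's: both hinge on the fact that walking along $P$ from $s_i$ toward $\ved$ must eventually produce a vertex whose interval overlaps $I_{s_i}$ yet has right endpoint strictly larger than $R_{s_i}$. The paper packages this as a brief proof by contradiction (if $R_{s_{i+1}} \leq R_{s_i}$ then every candidate has $R_u \leq R_{s_i}$, forcing $R_{s_i} = R_{\ved}$ and contradicting $i < |S|$), whereas you give the direct version by explicitly locating the first index $j^\star$ along the $s_i$--$\ved$ subpath with $R_{u_{j^\star}} > R_{s_i}$; your argument in fact supplies the step the paper leaves implicit, namely why ``all candidates have $R_u \leq R_{s_i}$'' would prevent the path from reaching $\ved$.
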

\begin{proof}
If $R_{s_i} \geq R_{s_i+1}$, then $R_{u} \leq R_{s_i}$ for all vertices $u$ in $P$ {after} $s_i$ such that $I_u \cap I_{s_{i}} \neq \emptyset$, implying that $R_{s_i} = R_{\ved}$, so the construction of $S$ stops at $v_i$, contradicting the existence of $v_{i+1}$.
\end{proof}

\begin{observation}\label{obs1}
$I_{s_1} \prec I_{s_3} \prec I_{s_5} \prec \cdots$ and $I_{s_2} \prec I_{s_4} \prec I_{s_6} \prec \cdots$.
\end{observation}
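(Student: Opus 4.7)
The plan is to prove the single statement $I_{s_i} \prec I_{s_{i+2}}$ for every $i$ such that $s_{i+2}$ exists. Both chains claimed in the observation, $I_{s_1} \prec I_{s_3} \prec \cdots$ and $I_{s_2} \prec I_{s_4} \prec \cdots$, follow immediately from this by iteration.

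Fix $i$ with $s_{i+2}$ defined. The key facts I would assemble are: (i) by \Cref{obs0}, $R_{s_i} < R_{s_{i+1}} < R_{s_{i+2}}$; (ii) by construction, $s_{i+2}$ appears in $P$ strictly after $s_{i+1}$, and therefore also strictly after $s_i$; (iii) again by construction, $s_{i+1}$ was chosen among all vertices $u$ of $P$ occurring after $s_i$ with $I_u \cap I_{s_i} \neq \emptyset$ so as to maximize $R_u$.

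The proof then proceeds by contradiction. Suppose $I_{s_i} \cap I_{s_{i+2}} \neq \emptyset$, i.e.\ $L_{s_{i+2}} \leq R_{s_i}$. Combining this with $R_{s_{i+2}} > R_{s_i} \geq L_{s_i}$ from (i), the intervals $I_{s_{i+2}}$ and $I_{s_i}$ do intersect. Together with (ii), this means $s_{i+2}$ was itself a candidate in the selection rule that produced $s_{i+1}$. But from (i), $R_{s_{i+2}} > R_{s_{i+1}}$, contradicting the maximality of $R_{s_{i+1}}$ in (iii). Hence $R_{s_i} < L_{s_{i+2}}$, which is exactly $I_{s_i} \prec I_{s_{i+2}}$.

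I do not expect any real obstacle here; the whole argument is a one-step exchange argument that exploits the greedy choice rule. The only mildly subtle point is making sure $s_{i+2}$ really lies in the admissible set for the selection of $s_{i+1}$, which is why I would separately record that $s_{i+2}$ comes after $s_i$ in $P$ (since it comes after $s_{i+1}$, which comes after $s_i$) and that $R_{s_{i+2}} > L_{s_i}$ (so that an intersection at the left endpoint side already forces $I_{s_i} \cap I_{s_{i+2}} \neq \emptyset$).
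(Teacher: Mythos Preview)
Your proposal is correct and follows essentially the same approach as the paper: reduce to $I_{s_i} \prec I_{s_{i+2}}$, assume $L_{s_{i+2}} \leq R_{s_i}$, use \Cref{obs0} to get $R_{s_i} < R_{s_{i+2}}$ so the intervals overlap, and then contradict the maximality in the choice of $s_{i+1}$ via $R_{s_{i+1}} < R_{s_{i+2}}$. If anything you are slightly more careful than the paper in explicitly recording that $s_{i+2}$ lies after $s_i$ in $P$, which is indeed needed for the exchange argument.
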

\begin{proof}
To prove the observation, it suffices to show that $I_{s_i} \prec I_{s_{i+2}}$ for every $1 \leq i \leq |S|-2$. Suppose the claim is false, meaning that there exists an index $i$ such that $L_{s_{i+2}} \leq R_{s_i}$. By \Cref{obs0}, $R_{s_i} < R_{s_{i+2}}$, so $I_{s_{i+2}}=[L_{s_{i+2}}, R_{s_{i+2}}]$ and $I_{s_{i}}=[L_{s_{i}}, R_{s_{i}}]$ overlaps. By \Cref{obs0} again, $R_{s_{i+1}} < R_{s_{i+2}}$, contradicting the choice of $v_{i+1}$.
\end{proof}

For any set of vertices $U$, we write $I_U= \bigcup_{v \in U} I_v$. Observe that if $U$ is \emph{connected}, then $I_U=[L_U,R_U]$, where $L_U = \min_{v \in U} L_v$ and $R_U = \max_{v \in U} R_v$.  

 \begin{observation}\label{obs2}
$I_S=[L_S,R_S]$ satisfies $L_S = \min_{v \in V} L_v$ and $R_S = \max_{v \in V} R_v$. 
\end{observation}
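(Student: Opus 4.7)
The plan is to establish three facts in sequence: (a) the union $I_S = \bigcup_i I_{s_i}$ is a single interval, (b) its left endpoint equals $L_{\vst} = \min_{v \in V} L_v$, and (c) its right endpoint equals $R_{\ved} = \max_{v \in V} R_v$.

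For (a), I would use the fact that consecutive intervals in $S$ overlap by construction: each $s_{i+1}$ is chosen so that $I_{s_{i+1}} \cap I_{s_i} \neq \emptyset$. A union of intervals on the real line that forms a connected set under the ``overlapping'' relation is itself an interval, so $I_S = [L_S, R_S]$.

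For (b), since $s_1 = \vst$ lies in $S$, we have $L_S \leq L_{s_1} = L_{\vst}$. Conversely, $L_{\vst} = \min_{v \in V} L_v$ by the choice of $\vst$, so $L_{s_i} \geq L_{\vst}$ for every $i$, giving $L_S \geq L_{\vst}$. Equality follows.

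The main step is (c). Here I would argue that the greedy construction necessarily terminates with some $s_{|S|}$ satisfying $R_{s_{|S|}} = R_{\ved}$. Termination is immediate: by \Cref{obs0}, the sequence $R_{s_1}, R_{s_2}, \ldots$ is strictly increasing, so no vertex can appear twice in $S$, and $V$ is finite. The stopping condition in the construction is $R_{s_{|S|}} \geq R_{\ved}$, while $R_{s_i} \leq R_{\ved}$ always holds since $R_{\ved} = \max_{v \in V} R_v$; together these force $R_{s_{|S|}} = R_{\ved}$. Thus $R_S = \max_i R_{s_i} = R_{s_{|S|}} = R_{\ved}$. The only subtlety, which I expect to be the main (mild) obstacle, is justifying that the greedy step is well-defined whenever $R_{s_i} < R_{\ved}$: there must exist a vertex after $s_i$ on $P$ whose interval meets $I_{s_i}$. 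This is the parenthetical claim already asserted in the definition of $S$, and it follows because the vertex immediately following $s_i$ in the path $P$ is adjacent to $s_i$ in $G$, hence its interval intersects $I_{s_i}$ by the defining property of an interval representation.
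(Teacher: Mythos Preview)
Your proposal is correct and follows essentially the same approach as the paper: the paper's proof also rests on the three facts that $s_1=\vst$ gives the minimum left endpoint, that the construction reaches $R_{\ved}$ giving the maximum right endpoint, and that consecutive $I_{s_i}$ overlap so the union is an interval. You supply more detail than the paper (in particular, you justify termination via \Cref{obs0} and spell out why the greedy step is always well-defined), but the structure of the argument is the same.
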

\begin{proof}
The observation follows from the following facts.
\begin{itemize}
    \item $\min_{v \in S} L_v = L_{\vst} = \min_{v \in V} L_v$. 
    \item $\max_{v \in S} R_v = R_{\ved} = \min_{v \in V} L_v$. 
    \item $I_{s_i} \cap I_{s_{i+1}} \neq \emptyset$ for every $1 \leq i < |S|$. \qedhere
\end{itemize}
\end{proof}

Let $G'$ be the subgraph of $G$ induced by $V \setminus S$. Let $\mathcal{C}$ be the set of connected components of $G'$. We partition $\mathcal{C}$ into $k$
subsets  $\mathcal{C}_1, \ldots, \mathcal{C}_{k-1}$ according to the following lemma.

\begin{lemma}\label{lem:disjoint}
There exists a partition of $\mathcal{C} = \mathcal{C}_1 \cup  \cdots \cup 
 \mathcal{C}_{k-1}$ such that for every $i \in [k-1]$, the set of intervals
$\{ I_C \, | \,  C \in \mathcal{C}_i \}$
is pairwise-disjoint.
\end{lemma}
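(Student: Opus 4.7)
The plan is to reduce the lemma to \Cref{greedy-interval-split}: it suffices to show that the collection of intervals $\{I_C : C \in \mathcal{C}\}$ has width at most $k-1$, for then the observation yields a partition $\mathcal{C} = \mathcal{C}_1 \cup \cdots \cup \mathcal{C}_{k-1}$ into classes of pairwise-disjoint intervals, as required.

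To bound this width, I would first verify that each $I_C$ really is an interval. Since $C$ is a connected component of the subgraph induced by $V \setminus S$ and edges in $G$ correspond to overlapping intervals in $\mathcal{I}$, for any $u,v \in C$ there is a $u$--$v$ path in $C$ whose consecutive intervals overlap, so $\bigcup_{v \in C} I_v$ is a connected subset of $\mathbb{R}$ equal to $[L_C, R_C]$. In particular, every point of $I_C$ lies in $I_v$ for some $v \in C$.

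Now fix a point $x \in \mathbb{R}$. The number of vertices $v \in V$ with $x \in I_v$ is at most $k$ by the width assumption on $\mathcal{I}$, and each component $C$ with $x \in I_C$ contributes at least one distinct vertex $v \in C \subseteq V \setminus S$ with $x \in I_v$, since the components partition $V \setminus S$. So it suffices to show that whenever some $I_C$ contains $x$, at least one interval $I_s$ with $s \in S$ also contains $x$; this forces at most $k-1$ intervals from $V \setminus S$ to cover $x$, hence at most $k-1$ components $C$ with $x \in I_C$. For this I plan to combine \Cref{obs2} with the chain property $I_{s_i} \cap I_{s_{i+1}} \neq \emptyset$ guaranteed by the construction of $S$: together these imply $\bigcup_{s \in S} I_s = [L_{\vst}, R_{\ved}]$, and by \Cref{obs2} every $I_v$ with $v \in V$ lies inside $[L_{\vst}, R_{\ved}]$, so any $x \in I_v$ must lie in some $I_s$ with $s \in S$.

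The main obstacle I anticipate is the verification that $\bigcup_{s \in S} I_s = [L_{\vst}, R_{\ved}]$. Left endpoint coverage is immediate from $s_1 = \vst$; right endpoint coverage relies on the stopping rule, namely that the construction of $S$ halts only at an $s_i$ with $R_{s_i} = R_{\ved}$; consecutive overlap is direct from how $s_{i+1}$ is chosen. Once this coverage is established, the width bound of $k-1$ on $\{I_C : C \in \mathcal{C}\}$ follows, and applying \Cref{greedy-interval-split} concludes the proof.
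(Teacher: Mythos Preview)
Your proposal is correct and follows essentially the same approach as the paper: both reduce to \Cref{greedy-interval-split} by showing the width of $\{I_C : C \in \mathcal{C}\}$ is at most $k-1$, using \Cref{obs2} to guarantee that any point covered by some $I_C$ is also covered by an interval from $S$, which forces at most $k-1$ components to overlap at any point. Your write-up is more explicit about intermediate facts (that $I_C$ is an interval, and that $\bigcup_{s\in S} I_s$ is the full span), but these are already covered in the paper by the remark preceding \Cref{obs2} and by \Cref{obs2} itself, so no new ideas are needed.
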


\begin{proof}
    In view of \Cref{greedy-interval-split}, it suffices to show that the width of the collection of intervals $\{ I_C \, | \,  C \in \mathcal{C}_i \}$ is at most $k-1$. If the width of  $\{ I_C \, | \,  C \in \mathcal{C}_i \}$  is at least $k$, then there exists a point $x \in [\min_{v \in V} L_v, \max_{v \in V} R_v]$ that belongs to $k$ intervals in $\{ I_C \, | \,  C \in \mathcal{C}_i \}$, so we can find $k$ vertices in $V \setminus S$ whose interval contains the point $x$. By \Cref{obs2}, the interval of some vertex in $S$ contains $x$. In total, we find $k+1$ intervals in $\mathcal{I}$ overlapping at $x$, contradicting the assumption that the width of $\mathcal{I}$ is $k$. Thus, the lemma holds. 
\end{proof}


We further refine the partition $\mathcal{C} = \mathcal{C}_1 \cup  \cdots \cup 
 \mathcal{C}_{k-1}$ by partitioning each $\mathcal{C}_i$ into $\mathcal{C}_i = \mathcal{C}_i^1 \cup \mathcal{C}_i^2$, as follows. Consider a connected component $C \in \mathcal{C}_i$.  If there exist a vertex $u$ in $C \in \mathcal{C}_i$ and a vertex $v$ in $S_1$ such that $\{u,v\}\in E$, then $C \in \mathcal{C}_i^1$. Otherwise,  $C \in \mathcal{C}_i^2$. Since $G$ is connected, if $C \in \mathcal{C}_i^2$, then there exist a vertex $u$ in $C \in \mathcal{C}_i$ and a vertex $v$ in $S_2$ such that $\{u,v\}\in E$. 


For each $C \in \mathcal{C}$, we write $\mathcal{I}_C$ to denote the interval representation $\mathcal{I}$ restricted to $C$ and write $G_C$ to denote the subgraph of $G$ induced by $C$.

\begin{lemma}
    \label{induct-partition-component}
    For all $C \in \mathcal{C}$, the width of the interval representation $\mathcal{I}_C$ is at most $k-1$.
\end{lemma}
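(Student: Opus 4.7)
The plan is to argue by contradiction, mirroring the proof of \Cref{lem:disjoint}. Suppose toward contradiction that the width of $\mathcal{I}_C$ is at least $k$; then there is a point $x$ contained in the intervals of $k$ distinct vertices $v_1, \dots, v_k \in C$. Since $C \subseteq V \setminus S$, none of these vertices lies in $S$. The key insight is that we can use \Cref{obs2} to produce an additional interval from $S$ containing $x$, which will push the width of $\mathcal{I}$ above $k$.

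More concretely, since $x$ lies in $I_{v_1}$ for some $v_1 \in V$, we have $x \in [L_{v_1}, R_{v_1}] \subseteq [\min_{v \in V} L_v, \max_{v \in V} R_v]$. By \Cref{obs2}, this interval equals $I_S = \bigcup_{s \in S} I_s$, so there exists some $s_j \in S$ with $x \in I_{s_j}$. Combining $I_{s_j}$ with the $k$ intervals $I_{v_1}, \dots, I_{v_k}$ (all distinct, since $s_j \notin C$) yields $k+1$ intervals in $\mathcal{I}$ sharing the point $x$, contradicting the assumption that $\mathcal{I}$ has width $k$. Therefore the width of $\mathcal{I}_C$ is at most $k-1$.

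This argument is essentially a routine reuse of the point-counting technique already deployed in \Cref{lem:disjoint}, and I do not anticipate any real obstacle; the only thing to be careful about is to explicitly invoke \Cref{obs2} to ensure that the witnessing vertex in $S$ lies outside $C$, so that the five sets of intervals we aggregate are genuinely distinct. No induction or case analysis on the recursive construction of $\mathcal{P}$ is needed for this step; the lemma is purely a statement about the width of restrictions of an interval representation after removing the ``spine'' $S$.
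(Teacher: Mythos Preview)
Your proposal is correct and follows essentially the same approach as the paper: both arguments use \Cref{obs2} to produce a vertex in $S$ whose interval contains any point $x$ in the relevant range, so that at most $k-1$ of the intervals through $x$ can come from $V\setminus S \supseteq C$. The paper phrases this directly rather than by contradiction, but the content is identical; your only slip is the stray phrase ``five sets of intervals,'' which should read ``$k+1$ intervals.''
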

\begin{proof}
Since the width of $\mathcal{I}$ is $k$, any point $x \in [\min_{v \in V} L_v, \max_{v \in V} R_v]$  belongs to at most $k$ intervals in $\mathcal{I}$.
By \Cref{obs2}, the interval of some vertex in $S$
already contains $x$. Therefore, $x$ belongs to at most $k-1$ intervals in $\mathcal{I}_C$.  
\end{proof}

 By \Cref{induct-partition-component} and the inductive hypothesis, for all $C \in \mathcal{C}$, there exists a $w$-lane partition $\mathcal{P}_C$ of the interval representation $\mathcal{I}_C$ with $w \leq f(k-1)$ such that the weak completion of $(G_C,\mathcal{I}_C,\mathcal{P}_C)$ can be embedded into $G_C$ with congestion at most $g(k-1)$. For each $\ell \in [f(k-1)]$, we write ${P}_C^\ell$ to denote the $\ell$th lane of $\mathcal{P}_C$. In case $\ell > w$, $\mathcal{P}_C^\ell = \emptyset$.

 \subparagraph{Partitioning into lanes.} We are now ready to describe the $w$-lane partition $\mathcal{P}$ of $\mathcal{I}$ for the proof of \Cref{low-completion-cost}.
 \begin{itemize}
     \item For both $i \in \{1,2\}$, $S_i$ is one lane.
     \item For all $i \in [k-1]$, $j \in \{1,2\}$, and $\ell \in [f(k-1)]$, $\bigcup_{C \in \mathcal{C}_i^j} {P}_C^\ell$ is one lane.
 \end{itemize}
 There are $f(k) = 2 + 2(k-1)f(k-1)$ lanes in total, some of which might be empty, so the total number $w$ of non-empty lanes is at most $f(k)$, as required.

 For the validity of the construction, we verify that each lane consists of pairwise-disjoint intervals so that they can be sequentially ordered according to $\prec$. This is true for the two lanes $S_1$ and $S_2$ because of \Cref{obs1}. For $\bigcup_{C \in \mathcal{C}_i^j} {P}_C^\ell$, by the induction hypothesis, ${P}_C^\ell$ consists of pairwise-disjoint intervals within the range $I_C$, and from \Cref{lem:disjoint} we know that the set of intervals $\{ I_C \, | \,  C \in \mathcal{C}_i \}$ is pairwise-disjoint.

 \begin{proof}[Proof of \Cref{low-completion-cost}]
To finish the proof of \Cref{low-completion-cost}, it suffices to show that the weak completion of $(G,\mathcal{I},\mathcal{P})$ can be embedded into $G$ with congestion at most $g(k)$. We divide the edge set $E_1 = \left\{ \{v_j^i, v_{j+1}^{i}\} \, | \, 1 \leq i \leq w \text{ and } \ 1 \leq j < |P_i| \right\}$ to be embedded into cases and analyze their congestion cost separately.

\subparagraph{Case 1.} In the first case, we consider the cost of transforming the two lanes $S_1$ and $S_2$ into paths. To do so, for each $e=\{s_{i}, s_{i+2}\}$, we set $P_e$ to be the subpath from $s_i$ to $s_{i+2}$ in $P$. Since each edge in $P$ is used at most twice, the overall congestion cost is at most $2$. See \Cref{fig:f5} for an illustration.
\begin{figure}[ht!]
    \centering
    \includegraphics[scale=0.65]{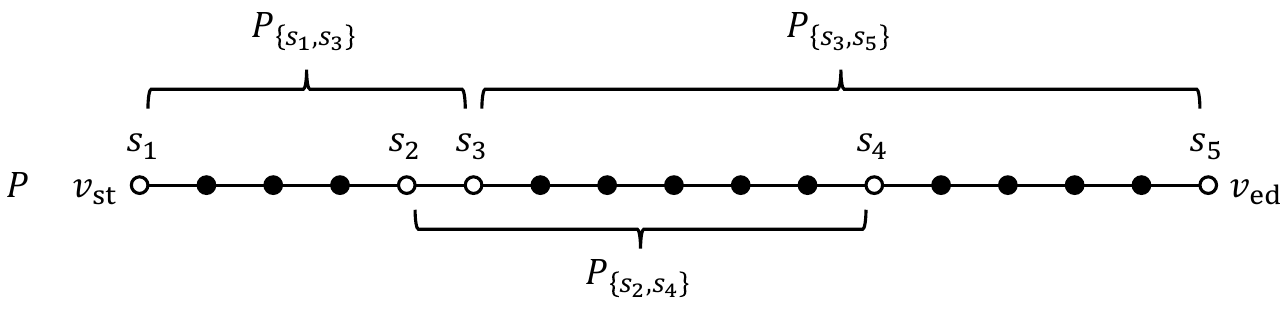}
    \caption{Case 1 in the proof of \Cref{low-completion-cost}.}
    \label{fig:f5}
\end{figure}

\subparagraph{Case 2.} In the second case, we consider the cost of transforming the lane $\bigcup_{C \in \mathcal{C}_i^j} {P}_C^\ell$ into a path for all $i \in [k-1]$, $j \in \{1,2\}$, and $\ell \in [f(k-1)]$. We further divide the analysis into two subcases. 

\subparagraph{Case 2.1.} We consider the cost of embedding the edges within the same connected component $C \in \mathcal{C}_i^j$. Recall that the weak completion of $(G_C,\mathcal{I}_C,\mathcal{P}_C)$ can be embedded into $G_C$ with congestion at most $g(k-1)$ by \Cref{induct-partition-component} and the inductive hypothesis. Since $\{G_C \, | \, C \in \mathcal{C} \}$ are disjoint subgraphs, applying the embedding for all $C \in \mathcal{C}$, the overall congestion cost is still at most $g(k-1)$.

\subparagraph{Case 2.2.} We consider the cost of embedding the edges crossing two connected components $C$ and $C'$ in $\mathcal{C}_i^j$. Let us focus on one such edge $e = \{x,y\}$, where $x \in C$ is the last vertex in the lane $\mathcal{P}_{C}^\ell$ and $y \in C'$ is the first vertex in the lane $\mathcal{P}_{C'}^\ell$. 

Since $C \in \mathcal{C}_i^j$, there exists an edge $e_C^\ast$ in $G$ between a vertex $u_C^\ast \in C$ and a vertex $v_C^\ast \in S_j$. Similarly, since $C' \in \mathcal{C}_i^j$, there exists an edge $e_{C'}^\ast$ in $G$ between a vertex $u_{C'}^\ast \in C'$ and a vertex $v_{C'}^\ast \in S_j$. Using these two edges, we select the $x$-$y$ path \[P_{e}=(x, \ldots, u_C^\ast, v_C^\ast, \ldots, v_{C'}^\ast, u_{C'}^\ast, \ldots, y),\] as follows. See \Cref{fig:f6} for an illustration.
\begin{itemize}
    \item The subpath $(x, \ldots, u_C^\ast)$ is chosen as any path from $x$ to $u_C^\ast$ in $C$.
    \item The subpath $(v_C^\ast, \ldots, v_{C'}^\ast)$ is chosen as the unique path from $v_C^\ast$ to $v_{C'}^\ast$ in $P$.
    \item The subpath $(u_{C'}^\ast, \ldots, y)$ is chosen as path from $u_{C'}^\ast$ to $y$ in $C'$.
\end{itemize}
We now analyze the overall congestion cost to perform the embedding for all edges crossing two connected components. Recall that for each $C \in \mathcal{C}$, $\mathcal{P}_C$ is a partition into at most $f(k-1)$ lanes, and $\mathcal{C}$ consists of pairwise-disjoint vertex sets, so the overall congestion cost to embed the prefix $(x, \ldots, u_C^\ast, v_C^\ast)$ is at most $f(k-1)$. Similarly, the congestion cost for the suffix $(v_{C'}^\ast, u_{C'}^\ast, \ldots, y)$ is also at most $f(k-1)$.

\begin{figure}[ht!]
    \centering
    \includegraphics[scale=0.65]{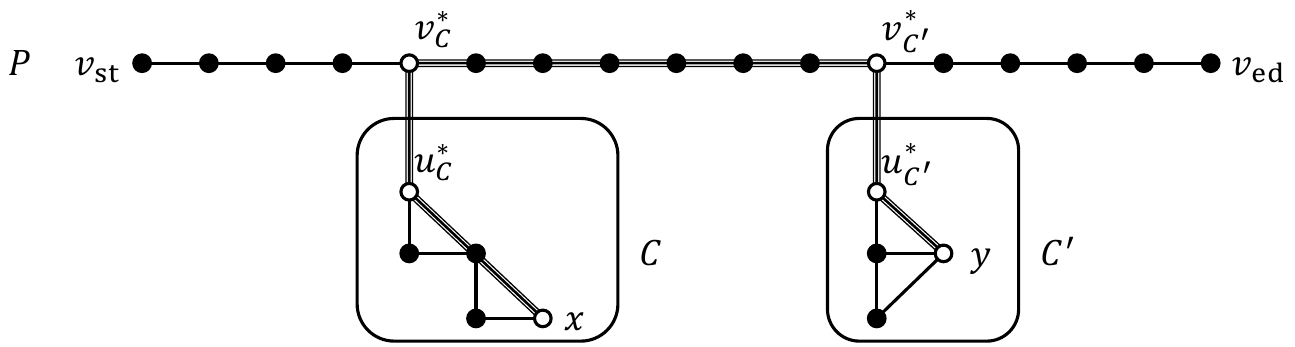}
    \caption{Case 2.2 in the proof of  \Cref{low-completion-cost}.}
    \label{fig:f6}
\end{figure}

Next, we consider the middle portion $(v_C^\ast, \ldots, v_{C'}^\ast)$ of the path $P_{e}$. For each $(i,j,\ell)$, each edge in $P$ participates at most once in the
transformation of the lane $\bigcup_{C \in \mathcal{C}_i^j} {P}_C^\ell$ into a path. Therefore, the overall cost is at most the number of choices of $(i,j,\ell)$, which is $2 \cdot (k-1) \cdot f(k-1)$.
To summarize, the overall cost for Case 2.2 is upper bounded by  $f(k-1) + f(k-1) + 2 \cdot (k-1) \cdot f(k-1) = 2k\cdot f(k-1)$.

\subparagraph{Summary.} Considering all cases, the congestion of the embedding is upper bounded by $2 + g(k-1) + 2k\cdot f(k-1) = g(k)$, as required. Thus, we have proved the first statement of the theorem. As discussed before, the second statement follows immediately from the first statement, as we just need to embed at most $f(k)-1$ new paths to turn the weak completion into the completion.
\end{proof}

\section{Constructing \texorpdfstring{$k$}{k}-lane recursive graphs}\label{sect:construction}

We define the \emph{lanewidth} of a graph using the following two operations on graphs with $k$ distinct designated vertices $\{\tau_1, \ldots, \tau_k\}$. See \Cref{fig:f7} for an example.

\begin{figure}[ht!]
    \centering
    \includegraphics[scale=0.65]{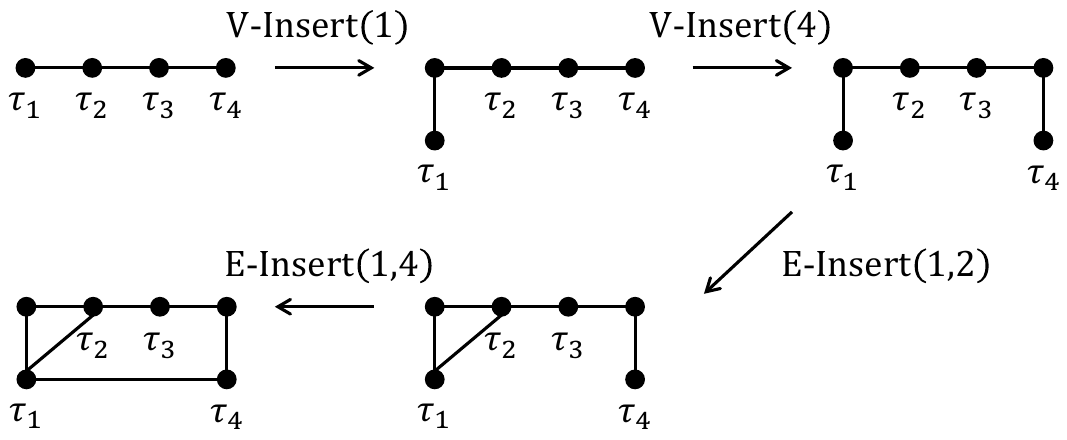}
    \caption{A bounded-lanewidth graph.}
    \label{fig:f7}
\end{figure}

\begin{description}
    \item[$\insv(i)$:] Add a vertex $v$ with an edge $\{v, \tau_i\}$, and reset the $i$th designated vertex to $v$.
    \item[$\inse(i,j)$:] Add an edge $\{\tau_i, \tau_j\}$.
\end{description}

\begin{definition}
\label{lanewidth-defn}
The \ul{lanewidth} of a graph $G$ is the minimum value of $k$ such that $G$ can be constructed from a $k$-vertex path $P = (\tau_1, \ldots, \tau_k)$ using the two operations $\insv$ and $\inse$. 
\end{definition}

In \Cref{sect:lanewidth_def}, we give an alternative definition of lanewidth using the completion of a $k$-lane partition. 
In \Cref{sect:recursive_merge}, we introduce the notion of $k$-lane recursive graphs.
In \Cref{sect:recursive_merge,sect:recursive_def}, we consider a specific way to construct $k$-lane recursive graphs with a recursion of bounded depth.
In \Cref{sect:recursive_construct}, we show that any graph of lanewidth $k$ can be constructed in this manner.

\subsection{An alternative definition}\label{sect:lanewidth_def}
In this section, we show that the lanewidth of a graph can be defined alternatively in terms of the completion of a $k$-lane partition. 

\begin{proposition}
\label{lanewidth-lem}
The lanewidth of $G=(V,E)$ is the minimum value of $k$ such that there exist a graph $G'$, an interval representation $\mathcal{I}'$ of $G'$, and a $k$-lane partition $\mathcal{P}'$ of $\mathcal{I}'$ such that $G$ is the completion of $(G', \mathcal{I}', \mathcal{P}')$.
\end{proposition}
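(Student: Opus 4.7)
The plan is to prove both inequalities that together establish the equality. Write $\lambda(G)$ for the lanewidth of $G$ and $\mu(G)$ for the minimum $k$ described in the statement; we exhibit explicit translations between $\insv/\inse$-constructions and completion decompositions to obtain $\lambda(G) \leq \mu(G)$ and $\lambda(G) \geq \mu(G)$.

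For $\lambda(G) \leq \mu(G)$: suppose $G$ is the completion of some $(G', \mathcal{I}', \mathcal{P}')$ whose partition has $k$ lanes. Initialize the construction with the $k$-path $(\tau_1, \ldots, \tau_k)$ where $\tau_i := v_1^i$, so that the initial path edges supply exactly the $E_2$ edges of the completion. Process the remaining vertices in order of $L_v$ (with a fixed tie-breaking rule), performing $\insv(i_v)$ for each $v$, where $i_v$ is the lane containing $v$. Since $L_{v_{j-1}^i} < L_{v_j^i}$ within each lane, the current $\tau_{i_v}$ at the moment of insertion is precisely $v_{j-1}^{i_v}$, so $\insv$ produces exactly the intended $E_1$ edge. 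To realize the edges of $E(G')$, interleave $\inse(i_u, i_v)$ operations as follows: for each $\{u, v\} \in E(G')$ pick any $p \in I_u \cap I_v$ (nonempty by the interval representation) and schedule the $\inse$ immediately after the last $\insv$ whose $L$-value is at most $p$. At that gap $u$ is still $\tau_{i_u}$ because its tip-residency window, measured in the sweep parameter, is $[L_u, L_{u'})$ where $u'$ is the lane successor of $u$, and this window contains $I_u = [L_u, R_u]$ since $L_{u'} > R_u$; the same argument applies to $v$, so both endpoints are current tips and the $\inse$ fires correctly.

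For $\lambda(G) \geq \mu(G)$: suppose $G$ admits an $\insv/\inse$-construction from a $k$-path in $T$ steps. For each vertex $v$ let $t_v$ be the step at which $v$ first became $\tau_{i_v}$, with $t_v := 0$ for the initial vertices, and let $t'_v$ be the last step at which $v$ was still $\tau_{i_v}$, with $t'_v := T$ for final tips. Take $G'$ to be the subgraph of $G$ consisting of the $\inse$-edges, let lane $P_i$ list in chronological order the vertices that ever occupied $\tau_i$, and set $I_v := [t_v, t'_v]$. Within each lane successive vertices satisfy $t'_{v_{j-1}^i} < t_{v_j^i}$, and at every step exactly $k$ vertices are tips (one per lane), so $\mathcal{P}'$ is a valid $k$-lane partition of $\mathcal{I}'$; each $\inse$-edge $\{u, v\}$ is witnessed at the step $t$ of its operation, giving $t \in I_u \cap I_v$, so $\mathcal{I}'$ is an interval representation of $G'$. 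Finally, the $\insv$ operations contribute exactly the $E_1$ edges and the initial $k$-path contributes exactly the $E_2$ edges, so $G$ is the completion of $(G', \mathcal{I}', \mathcal{P}')$.

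The main subtlety is the scheduling argument in the first direction: aligning the discrete order of $\insv$ operations with the continuous interval endpoints so that every $\inse$ fires while both of its endpoints are simultaneously tips. The crux is the inclusion $I_w \subseteq [L_w, L_{w'})$ for consecutive vertices $w, w'$ in a lane, which lets the intersection of tip-residency windows always contain the nonempty $I_u \cap I_v$ for each edge. Remaining details—a consistent tie-break for equal $L_v$, edges of $E(G')$ that happen to coincide with edges already produced by the initial path, and lanes for which no $\insv$ operations occur—are routine.
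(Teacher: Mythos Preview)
Your proposal is correct and follows essentially the same approach as the paper: both directions translate between $\insv/\inse$ construction sequences and completions of $k$-lane partitions via a sweep over $L$-values, with the paper choosing the specific anchor $p=\max\{L_u,L_v\}=\min(I_u\cap I_v)$ for each $\inse$ where you allow any $p\in I_u\cap I_v$. Your tip-residency window argument $I_w\subseteq[L_w,L_{w'})$ is exactly the verification the paper performs when it checks that $u=\tau_i$ and $v=\tau_j$ at the moment each edge is processed.
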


\begin{proof}
It suffices to prove that the following two statements are equivalent.
\begin{enumerate}
    \item \label{item1} $G$ can be constructed from $P = (\tau_1, \ldots, \tau_k)$ using the two operations $\insv$ and $\inse$. 
    \item\label{item2} There exist a graph $G'$, an interval representation $\mathcal{I}'$ of $G'$, and a $k$-lane partition $\mathcal{P}'$ of $\mathcal{I}'$ such that $G$ is the completion of $(G', \mathcal{I}', \mathcal{P}')$.
\end{enumerate}

\subparagraph{\Cref{item1} implies \Cref{item2}.} Assuming \Cref{item1}, we construct the required $(G', \mathcal{I}', \mathcal{P}')$ for \Cref{item2} as follows. We write $X$ to denote the total number of operations.
For each vertex $v \in V$, we select its interval $I_v=[L_v, R_v]$ by the following rules.
\begin{itemize}
   \item If $v$ belongs to the initial path $P = (\tau_1, \ldots, \tau_k)$, then $L_v = 0$.
    \item If $v$ is created in the $x$th operation, then $L_v = x$.
    \item If the $x$th operation causes $v$ to become a non-designated vertex, then $R_v = x-1$.
    \item If $v$ is still a designated vertex by the end of the construction, then $R_v = X$.
\end{itemize}
Intuitively, $I_v=[L_v, R_v]$ is the time interval where $v$ is a designated vertex during the construction.
This interval representation $\mathcal{I}'$ naturally admits a $k$-lane partition $\mathcal{P}'$ where $v$ belongs to the $i$th lane if $v$ is the $i$th designated vertex at the moment $v$ is added to the graph in the construction.
Let $G'$ be the subgraph of $G$ induced by the edges created by $\inse$, so  $\mathcal{I}'$ is indeed an interval representation of $G'$, and the completion of $(G', \mathcal{I}', \mathcal{P}')$ adds precisely the edges created by $\insv$ and the edges in the initial path $P$.

\subparagraph{\Cref{item2} implies \Cref{item1}.} Assuming \Cref{item2}, we show a sequence of operations required for \Cref{item1} as follows. The initial path $P = (\tau_1, \ldots, \tau_k)$ is chosen as the path resulting from concatenating the initial vertices of all lanes in the completion of $(G', \mathcal{I}', \mathcal{P}')$. Afterward, we go from left to right, adding vertices and edges using $\insv$ and $\inse$ whenever we encounter a new interval, as follows. We sort the vertex set $V \setminus \{\tau_1, \ldots, \tau_k\}$ and the edge set $E'$ of $G'$ \emph{together}, where the value of a vertex $v \in V$ is $L_v = \min I_v$ and the value of an edge $e=\{u,v\} \in E$ is $\max \{L_u, L_v\} = \min I_u \cap I_v$. When there is a tie, we prioritize vertices over edges.
We add vertices and edges according to the sorted order. 
\begin{itemize}
    \item When a vertex $v$ is processed, we perform $\insv(i)$, where $i$ is the lane number of $v$ in $\mathcal{P}'$. Observe that the edge $\{v, \tau_i\}$ added in $\insv(i)$ exists in the completion of $(G', \mathcal{I}', \mathcal{P}')$ because $\tau_i$ is the vertex right before $v$ in the $i$th lane.
    \item When an edge $e=\{u,v\}$ is processed, we perform $\inse(i,j)$, where $i$ is the lane number of $u$ in $\mathcal{P}'$ and $j$ is the lane number of $v$ in $\mathcal{P}'$. Observe that the edge $\{\tau_i, \tau_j\}$ added in $\inse(e)$  is exactly the edge $e=\{u,v\}$, as $u=\tau_i$ and $v=\tau_j$, because both $u$ and $v$ are before $e$ in the sorted order, and the vertices after $u$ and $v$ in their respective lanes are after $e$ in the sorted order.
\end{itemize}
Thus, the graph formed by this construction is precisely the completion $G$ of $(G', \mathcal{I}', \mathcal{P}')$.
\end{proof}

\subsection{Recursive graphs with \texorpdfstring{$k$}{k} lanes}\label{sect:recursive_merge}

Given an integer $k > 0$, we consider the class of \emph{$k$-lane graphs}, defined as follows.

\begin{definition}\label{def:lane_graph}
    A \ul{$k$-lane graph} is a graph $G=(V,E)$ with a non-empty set $\mathcal{T}(G) \subseteq [k]$ and two injective functions $\fin: \mathcal{T}(G) \rightarrow V$ and $\fout: \mathcal{T}(G) \rightarrow V$. For each $i \in \mathcal{T}(G)$, we write $\tin_i(G) = \fin(i)$ to denote the $i$th \ul{in-terminal} of $G$ and write $\tout_i(G) = \fout(i)$ to denote the $i$th \ul{out-terminal} of $G$.  
\end{definition}

Intuitively, $\mathcal{T}(G) \subseteq [k]$ indicates a subset of lanes used by $G$, where $\tin_i(G)$ is the left-most vertex in the $i$th lane, and $\tout_i(G)$ is the right-most vertex in the $i$th lane. We allow $\tin_i(G)=\tout_i(G)$.
We consider two ways of combining $k$-lane graphs. See \Cref{fig:f8} for an illustration.

\begin{figure}[ht!]
    \centering
    \includegraphics[scale=0.65]{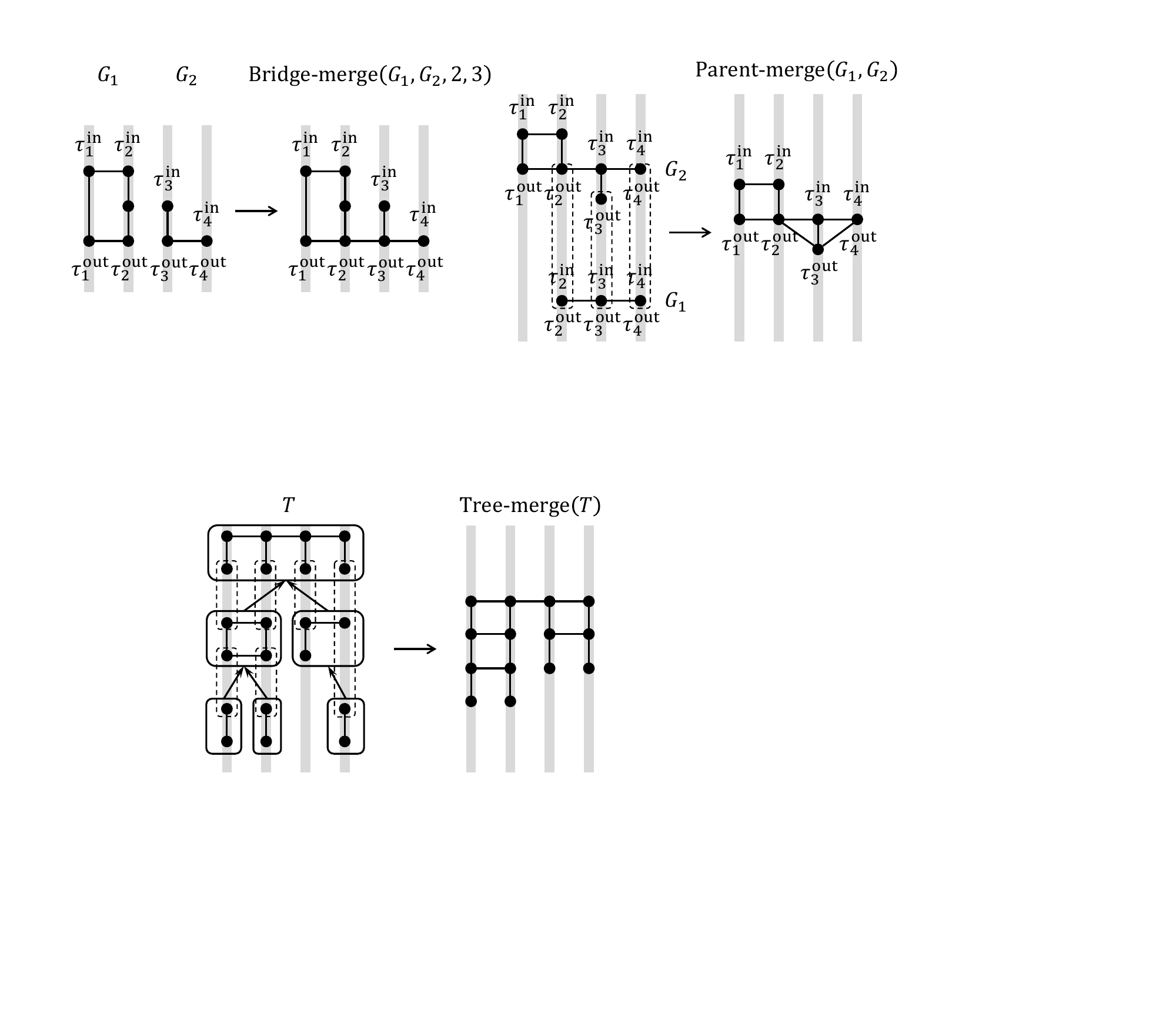}
    \caption{$\bmerge$ and $\pmerge$}
    \label{fig:f8}
\end{figure}

\subparagraph{$\bmerge$.} Suppose $\mathcal{T}(G_1) \cap\mathcal{T}(G_2) = \emptyset$, $i \in \mathcal{T}(G_1)$, and $j \in \mathcal{T}(G_2)$. We define $\bmerge(G_1,G_2,i,j)$ as the $k$-lane graph $G=(V,E)$ with the vertex set $V= V(G_1) \cup V(G_2)$ and the edge set $E = E(G_1) \cup E(G_2) \cup \{e\}$, where $e=\left\{\tout_i(G_1), \tout_j(G_2)\right\}$. We set $\mathcal{T}(G) = \mathcal{T}(G_1) \cup \mathcal{T}(G_2)$. The in-terminals and the out-terminals of $G$ are inherited from $G_1$ and $G_2$, as follows.
\begin{align*}
\tin_i(G) &=\begin{cases}
        \tin_i(G_1) & \text{if $i \in \mathcal{T}(G_1)$,} \\
        \tin_i(G_2) & \text{if $i \in \mathcal{T}(G_2)$.}
    \end{cases} &
    \tout_i(G)&=\begin{cases}
        \tout_i(G_1) & \text{if $i \in \mathcal{T}(G_1)$,} \\
        \tout_i(G_2) & \text{if $i \in \mathcal{T}(G_2)$.}
    \end{cases}
\end{align*}
In other words, given two $k$-lane graphs $G_1$ and $G_2$ on disjoint sets of lanes, $\bmerge(G_1,G_2,i,j)$ is the result of combining $G_1$ and $G_2$ by adding an edge between the $i$th out-terminal of $G_1$ and the $j$th out-terminal of $G_2$.

\subparagraph{$\pmerge$.} Suppose $\mathcal{T}(G_1) \subseteq \mathcal{T}(G_2)$. We define $\pmerge(G_1,G_2)$ as the $k$-lane graph $G=(V,E)$ resulting from identifying $\tin_i(G_1)$ with $\tout_i(G_2)$ for each $i \in \mathcal{T}(G_1)$. We set $\mathcal{T}(G) = \mathcal{T}(G_2)$. The in-terminals and the out-terminals of $G$ are defined as follows.
\begin{align*}
\tin_i(G) &= \tin_i(G_2). &
    \tout_i(G)&=\begin{cases}
        \tout_i(G_1) & \text{if $i \in \mathcal{T}(G_1)$,} \\
        \tout_i(G_2) & \text{if $i \in \mathcal{T}(G_2) \setminus \mathcal{T}(G_1)$.}
    \end{cases}
\end{align*}
Moreover, we have a requirement that identifying $\tin_i(G_1)$ with $\tout_i(G_2)$ for each $i \in \mathcal{T}(G_1)$ does not result in identifying an edge in $G_1$ with an edge $G_2$. In other words, $E$ is a \emph{disjoint} union of the edge set of $G_1$ and the edge set of $G_2$.

In other words, given two $k$-lane graphs $G_1$ and $G_2$ where the set of lanes of $G_1$ is a subset of the set of lanes of $G_2$, $\pmerge(G_1, G_2)$ is the result of combining $G_1$ and $G_2$ by identifying every in-terminal of $G_1$ with the out-terminal of $G_2$ in the same lane.

Similar to \Cref{terminal-recursive-gluing}, we define \emph{$k$-lane recursive graphs} as follows.

\begin{definition}
    \label{terminal-recursive-gluing2}
 Let $B$ denote the set of all $k$-lane graphs with at most $k$ vertices. The set of \ul{$k$-lane recursive graphs} is defined as the closure of $B$ under $\bmerge$ and $\pmerge$.
\end{definition}

\subsection{Hierarchical decompositions with bounded depth}\label{sect:recursive_def}

\Cref{terminal-recursive-gluing2} alone does not lead to an $O(\log n)$-bit proof labeling scheme, as the depth of recursion of a $k$-lane recursive graph can be unbounded. To reduce the depth of recursion, we consider the following operation. See \Cref{fig:f9} for an illustration.

\begin{figure}[ht!]
    \centering
    \includegraphics[scale=0.65]{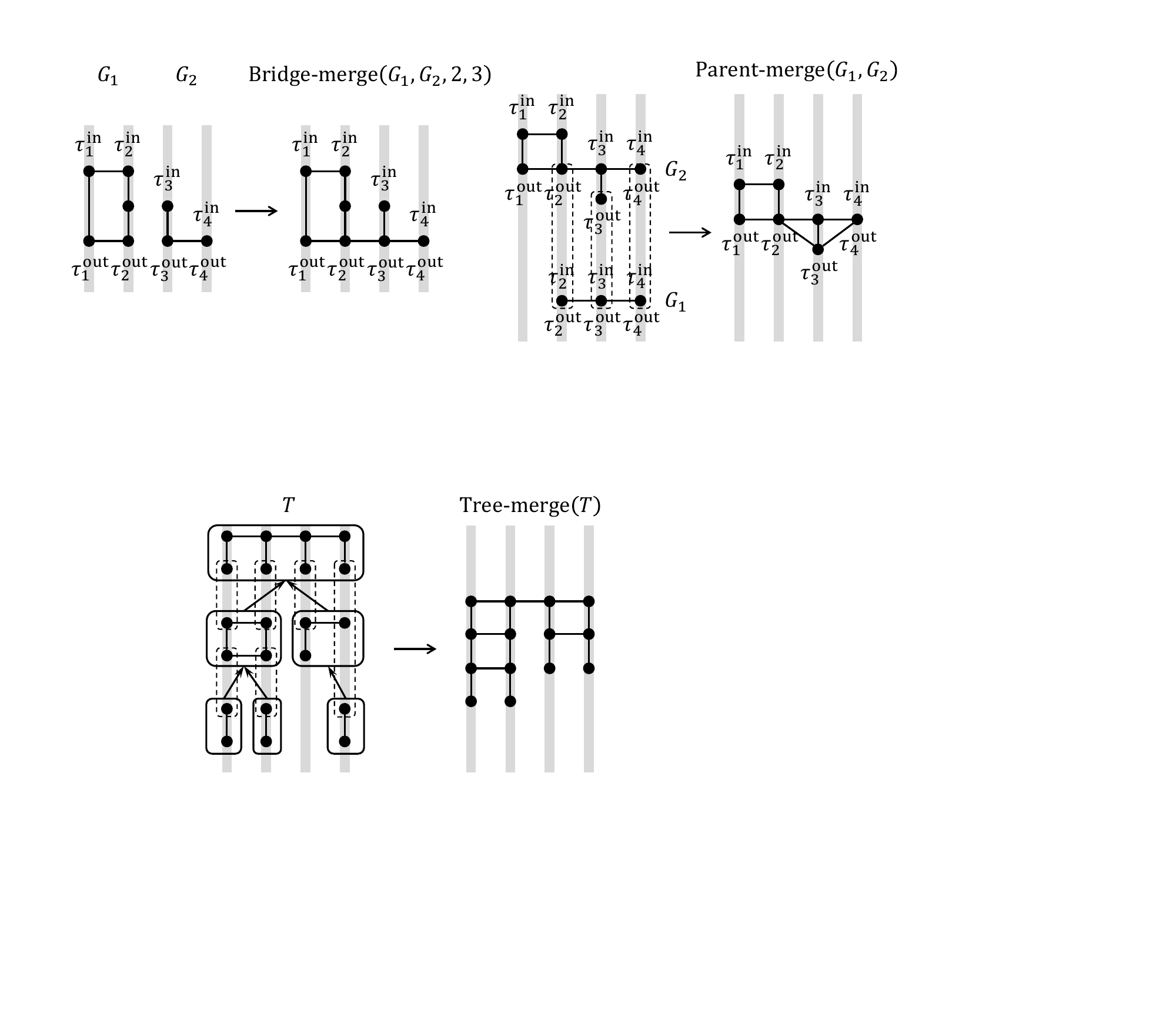}
    \caption{$\tmerge$}
    \label{fig:f9}
\end{figure}

\subparagraph{$\tmerge$.} Suppose $T$ is a rooted tree on graphs such that any $G_1 \in V(T)$ and $G_2 \in V(T)$ satisfy the following conditions:
\begin{itemize}
    \item If $G_1$ is a child of $G_2$ in $T$, then $\mathcal{T}(G_1) \subseteq \mathcal{T}(G_2)$.
    \item If $G_1$ and $G_2$ have the same parent in $T$, then $\mathcal{T}(G_1) \cap \mathcal{T}(G_2) = \emptyset$.  
\end{itemize}
We define $\tmerge(T)$ as the $k$-lane graph $G$ resulting from the following construction:
\begin{enumerate}
    \item Start from the tree $T$.
    \item While there still exists an edge $e=\{G_1, G_2\}$ in the tree, where $G_1$ is a child of $G_2$, contract the edge $e=(G_1, G_2)$ into $\pmerge(G_1,G_2)$.
\end{enumerate}

Observe that $\pmerge$ is \emph{associative} in that the ordering of the edges in the while-loop does not affect the outcome of $\tmerge(T)$, which is the result of the following operations: For each graph $G'$ in $T$ and each of its children $G''$, identify each in-terminal $\tin_i(G'')$ of $G''$ with the out-terminal $\tout_i(G')$ of $G'$ in the same lane $i$.

We allow $T$ to be a single-vertex tree, in which case $\tmerge(T) = G'$, where $G'$ is the unique vertex of $T$. Intuitively, as multiple uses of $\pmerge$ can be condensed into one $\tmerge$, using $\tmerge$ instead of $\pmerge$ allows us to reduce the depth of recursion significantly in the construction of a $k$-lane recursive graph.

\subparagraph{Five types of nodes.}
Let us focus on a specific way to construct $k$-lane recursive graphs using the following five types of nodes, where $\vnode$, $\enode$, and $\pnode$ capture the base cases, $\bnode$ captures the $k$-lane graphs constructed by $\bmerge$, and $\tnode$ captures the $k$-lane graphs constructed by $\tmerge$.

\begin{description}
    \item[{$\vnode$}:] A $\vnode$ is a single-vertex $k$-lane graph $G$ with $\mathcal{T}(G)=\{i\}$ for some $i \in [k]$. Trivially, we must have $\tin_i(G)=\tout_i(G)=v$, where $v$ is the only vertex in $G$.
    \item[{$\enode$}:] An $\enode$ is a single-edge $k$-lane graph $G$ with $\mathcal{T}(G)=\{i\}$ for some $i \in [k]$. It is required that  $\tin_i(G) \neq \tout_i(G)$ are the two endpoints of the only edge $e$ in $G$.
    \item[{$\pnode$}:]  A $\pnode$ is a $k$-lane graph $G$ that is a $k$-vertex path with $\mathcal{T}(G)= [k]$. It is required that for each $i \in \mathcal{T}$, $\tin_i(G) = \tout_i(G)$ is the $i$th vertex of the path.
    \item[{$\bnode$}:] A $\bnode$ is a $k$-lane graph $\bmerge(G_1,G_2,i,j)$ for some indices $i$ and $j$ such that, for both $\ell\in\{1,2\}$, $G_\ell$ is one of $\{\vnode,\tnode\}$.
    \item[{$\tnode$}:] A $\tnode$ is a $k$-lane graph $\tmerge(T)$ for some tree $T$ such that each $G' \in V(T)$ is one of $\{\enode, \pnode, \bnode\}$. 
\end{description}

We remark that the number of node types can be reduced by one, as a $\pnode$ can be constructed from a collection of $k$ isolated vertices through a series of $\bmerge$ operations. However, we include $\pnode$ in the definition for convenience, as it allows us to directly capture the initial path $P = (\tau_1, \ldots, \tau_k)$ in the definition of lanewidth in \Cref{lanewidth-defn}.


\subparagraph{Hierarchical decompositions} Any $k$-lane recursive graph $G$ constructed in this way can be described by a \emph{hierarchical decomposition} $\mathcal{H}$, which is a tree rooted at $G$ formed as follows. For each $\bnode$ $G^\ast = \bmerge(G_1,G_2,i,j)$ in the construction, we set $G_1$ and $G_2$ to be the children of $G^\ast$. For each $\tnode$  $G^\ast = \tmerge(T)$ in the construction, we set all $G' \in V(T)$ to be the children of $G^\ast$. We make a key observation that the hierarchical decomposition $\mathcal{H}$ has bounded depth.

\begin{observation}\label{obs:tree_depth}
Each root-to-leaf path in a hierarchical decomposition $\mathcal{H}$ with parameter $k$ contains at most $2k$ nodes.
\end{observation}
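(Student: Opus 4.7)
The plan is to combine two structural observations about the hierarchical decomposition $\mathcal{H}$: a strict alternation of node types along any root-to-leaf path, and a strict drop of the lane-set size $|\mathcal{T}(\cdot)|$ across every edge descending from a $\bnode$. Together these translate the depth of $\mathcal{H}$ into a quantity depending only on $k$.

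First, I would argue that along any root-to-leaf path $N_1, N_2, \ldots, N_d$, the internal nodes strictly alternate between $\bnode$ and $\tnode$. This is immediate from the five-type definition: the children of a $\bnode$ in $\mathcal{H}$ are each of type $\vnode$ or $\tnode$, and the children of a $\tnode$ are each of type $\enode$, $\pnode$, or $\bnode$. In particular, the leaf $N_d$ must be one of $\{\vnode, \enode, \pnode\}$, while the internal prefix $N_1, \ldots, N_{d-1}$ alternates. Writing $b$ and $t$ for the number of $\bnode$s and $\tnode$s on the path, alternation forces $|b - t| \leq 1$, so $d = b + t + 1 \leq 2b + 2$.

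Next, I would show that $|\mathcal{T}(\cdot)|$ strictly decreases across every edge of $\mathcal{H}$ descending from a $\bnode$. If $N^\ast = \bmerge(G_1, G_2, i, j)$ and $G_\ell$ is one of its children in $\mathcal{H}$, the $\bmerge$ definition requires $\mathcal{T}(G_1)$ and $\mathcal{T}(G_2)$ to be disjoint with $i \in \mathcal{T}(G_1)$ and $j \in \mathcal{T}(G_2)$; both sets are therefore non-empty, and $|\mathcal{T}(G_\ell)| \leq |\mathcal{T}(N^\ast)| - 1$. Across edges descending from a $\tnode$, the lane-containment condition on $\tmerge$ gives only the weaker $\mathcal{T}(G') \subseteq \mathcal{T}(N^\ast)$; that suffices to keep $|\mathcal{T}(\cdot)|$ non-increasing along the whole path.

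Putting the two steps together, the sequence $|\mathcal{T}(N_1)|, |\mathcal{T}(N_2)|, \ldots, |\mathcal{T}(N_d)|$ is non-increasing and drops by at least $1$ at each of the $b$ transitions out of a $\bnode$. Since $|\mathcal{T}(N_1)| \leq k$ and $|\mathcal{T}(N_d)| \geq 1$ (non-emptiness is built into \Cref{def:lane_graph}), we obtain $b \leq k - 1$, and hence $d \leq 2b + 2 \leq 2k$; the degenerate case $d = 1$ where the root is already a leaf is covered since $k \geq 1$. There is no real obstacle here — once the alternation structure and the strict drop at $\bnode$s are isolated, the rest is bookkeeping; the only mild subtlety is remembering that $\tnode$ steps are merely non-strict, which is why the counting is driven entirely by the $\bnode$ count.
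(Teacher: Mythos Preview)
Your proof is correct and follows essentially the same approach as the paper: both arguments bound the number of $\bnode$s on a root-to-leaf path by $k-1$ via the strict drop of $|\mathcal{T}(\cdot)|$ at each $\bmerge$ step (using that $\mathcal{T}(G_1)$ and $\mathcal{T}(G_2)$ are disjoint and non-empty), and then use the alternation of $\bnode$s and $\tnode$s among the non-leaf nodes to conclude that the total length is at most $2k$. The only cosmetic difference is the order of presentation---the paper first bounds the $\bnode$ count and then invokes alternation, while you first establish alternation and then the drop---but the underlying argument is identical.
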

\begin{proof}
We first show that any root-to-leaf path $P = (Q_1, Q_2, \ldots, Q_\ell)$ in $\mathcal{H}$ contains at most $k-1$ $\bnode$s. By the definition of $\tmerge$ and $\bmerge$, we have \[[k] \supseteq \mathcal{T}(Q_1) \supseteq \mathcal{T}(Q_2) \supseteq \cdots \supseteq  \mathcal{T}(Q_\ell) \neq \emptyset \]
Moreover, by the definition of $\bmerge$, if $G^\ast = \bmerge(G_1,G_2,i,j)$, then $\mathcal{T}(G_1)$ and $\mathcal{T}(G_2)$ must be disjoint non-empty subsets of $\mathcal{T}(G^\ast)$, so $1 \leq |\mathcal{T}(G_1)| < |\mathcal{T}(G^\ast)|$ and $1 \leq |\mathcal{T}(G_2)| < |\mathcal{T}(G^\ast)|$. Therefore, we can afford to have at most $k-1$ $\bnode$s in $P$, since otherwise $|\mathcal{T}(Q_\ell)| = 0$.

Except for the leaf node, $\tnode$ and $\bnode$ must appear alternately in a root-to-leaf path. Therefore, $\ell-1 \leq 2\cdot(k-1)+1$, so $\ell \leq 2k$.
\end{proof}

\Cref{obs:tree_depth} plays a key role in designing $O(\log n)$-bit proof labeling schemes, as it implies the following \emph{congestion} bound. Observe that both $\bmerge$ and $\tmerge$ do not merge edges, so each edge $e$ in the original graph $G$ can only appear in one root-to-leaf path in the hierarchical decomposition $\mathcal{H}$, which contains at most $2k$ nodes by \Cref{obs:tree_depth}. 

In addition to the constant depth bound established in \Cref{obs:tree_depth}, our decomposition has the desirable property that the subgraph associated with each node in the decomposition is \emph{connected}. As a result, certifying a node requires only the edges within the subgraph corresponding to that node. This stands in contrast to path and tree decompositions, where each part of the decomposition may correspond to a disconnected subgraph.

\subsection{Construction}\label{sect:recursive_construct}

We show that any graph of lanewidth $k$ can be constructed as a $\tnode$ with parameter $k$.

\begin{proposition}\label{prop:recursive}
Any graph $G$ of lanewidth $k$ can be constructed as a $\tnode$ with parameter $k$.
\end{proposition}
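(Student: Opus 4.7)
The plan is to invoke \Cref{lanewidth-defn}: a graph of lanewidth $k$ can be built from the initial $k$-vertex path $P = (\tau_1, \ldots, \tau_k)$ via a sequence of $\insv$ and $\inse$ operations. I would proceed by induction on the length $t$ of this operation sequence, maintaining the invariant that the intermediate graph $G_t$ equals $\tmerge(T_t)$ for some tree $T_t$ whose nodes are $\enode$s, $\pnode$s, or $\bnode$s, and whose unique $\pnode$ is the initial $P$ at the root. The base case $t = 0$ is immediate, since the single-node tree containing $P$ yields $\tmerge(\{P\}) = P = G_0$. At each step, I will track, for each lane $\ell \in [k]$, the \emph{lane-$\ell$ leaf} $L_\ell$ of $T_{t-1}$: the deepest node on the root-to-leaf path obtained by descending through the unique child whose lane set contains $\ell$ (when one exists); by the definition of $\tmerge$, $\tout_\ell(L_\ell) = \tau_\ell$, the current tip of lane $\ell$.

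For an $\insv(i)$ operation that adds a new vertex $v$ and edge $\{\tau_i, v\}$, I would create a fresh $\enode$ $E$ with $\mathcal{T}(E) = \{i\}$, $\tin_i(E) = \tau_i$, $\tout_i(E) = v$ and attach it as a new child of $L_i$. Since $\{i\} \subseteq \mathcal{T}(L_i)$ and no existing child of $L_i$ has lane $i$ (else $L_i$ would not be the lane-$i$ leaf), the tree stays valid, and $\tmerge(T_t) = G_t$ follows by $\pmerge$-associativity.

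The delicate case is $\inse(i, j)$, which adds the edge $\{\tau_i, \tau_j\}$. Let $A$ be the lowest common ancestor of $L_i$ and $L_j$ in $T_{t-1}$. For each $\ell \in \{i, j\}$, I define $X_\ell$ as follows: if $A = L_\ell$, set $X_\ell$ to be a $\vnode$ containing the single vertex $\tau_\ell$; otherwise, $A$ has a unique child $A_\ell$ on the lane-$\ell$ path, and I set $X_\ell = \tmerge(T_{A_\ell})$, viewed as a $\tnode$ whose internal tree is the subtree of $T_{t-1}$ rooted at $A_\ell$. Sibling disjointness at $A$ (when both $A_i, A_j$ exist they are distinct children) together with the fact that $A$ has no child on the lane-$\ell$ path whenever $A = L_\ell$ gives $\mathcal{T}(X_i) \cap \mathcal{T}(X_j) = \emptyset$, so $B = \bmerge(X_i, X_j, i, j)$ is a well-formed $\bnode$ that contributes exactly the single edge $\{\tau_i, \tau_j\}$. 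I then form $T_t$ by removing any subtrees $T_{A_\ell}$ from $A$'s children and inserting $B$ as a new child of $A$.

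The main obstacle will be verifying that $T_t$ remains a valid tree for $\tmerge$ and that $\tmerge(T_t) = G_t$. For validity, I would check that $\mathcal{T}(B) = \mathcal{T}(X_i) \cup \mathcal{T}(X_j) \subseteq \mathcal{T}(A)$, that $\tin_\ell(B)$ equals $\tout_\ell(A)$ for each $\ell \in \mathcal{T}(B)$ (using the inherited identity $\tin_\ell(A_\ell) = \tout_\ell(A)$ when $A_\ell$ exists, and $\tin_\ell(X_\ell) = \tau_\ell = \tout_\ell(A)$ when $X_\ell$ is the $\vnode$ fallback), and that $B$'s lane set is disjoint from the remaining children of $A$. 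The identity $\tmerge(T_t) = G_t$ then reduces to $\pmerge$-associativity together with the observation that the only new edge contributed by $B$ is $\{\tau_i, \tau_j\}$, which is absent from $G_{t-1}$, so no edge-conflict arises during the final $\pmerge$. Because every node introduced in the inductive step is an $\enode$ or a $\bnode$, and the root $\pnode$ is untouched, the node-type constraint on $T_t$ is preserved throughout, completing the induction.
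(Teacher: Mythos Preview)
Your proposal is correct and follows essentially the same approach as the paper's proof: induct on the construction sequence from \Cref{lanewidth-defn}, handle $\insv(i)$ by hanging a new $\enode$ under the current lane-$i$ leaf, and handle $\inse(i,j)$ by taking the lowest common ancestor $A$ of the lane-$i$ and lane-$j$ leaves, packaging each side as a $\vnode$ (when $A$ itself is that leaf) or as a $\tnode$ over the relevant subtree, and replacing those subtrees by a single $\bnode$ child of $A$. The paper splits the $\inse$ step into three explicit sub-cases ($A=L_i=L_j$; $A\notin\{L_i,L_j\}$; exactly one equality) rather than your unified description, but the underlying construction and verification are identical.
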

\begin{proof}
By \Cref{lanewidth-defn}, $G$ can be constructed from a $k$-vertex path $P = (\tau_1, \ldots, \tau_k)$ using the two operations $\insv$ and $\inse$. Let $X$ be the total number of operations in the construction. Let $G_0 = P$. For each $x \in [X]$, let $G_x$ be the result of applying the first $x$ operations, so $G = G_X$. We will prove by an induction on $x$ that $G_x$ can be constructed as a $\tnode$ $G_x = \tmerge(T)$ for some tree $T$ with $\mathcal{T}(G_x) = [k]$ and $\tout_i(G_x)=\tau_i$ for each $i \in [k]$, where $\tau_i$ is the $i$th designated vertex of $G_i$. See \Cref{fig:f10} for an example of such a construction.

\begin{figure}[ht!]
    \centering
    \includegraphics[scale=0.65]{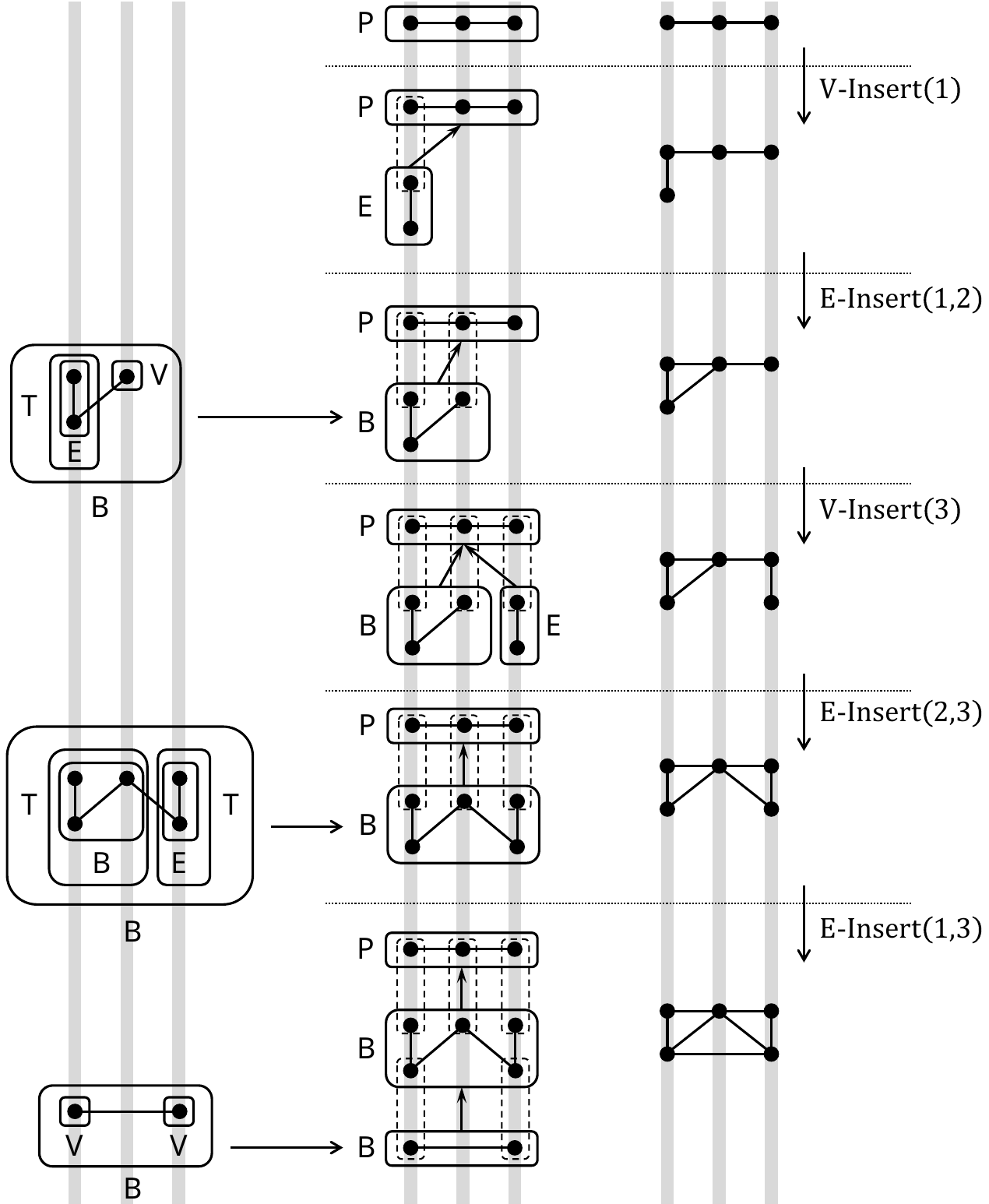}
    \caption{Constructing a bounded-lanewidth graph as a  $\tnode$.}
    \label{fig:f10}
\end{figure}

\subparagraph{Base case.} At the start of the construction, we create a $\pnode$ from the initial path $G^\ast = P = (\tau_1, \ldots, \tau_k)$ by setting $\mathcal{T}(G^\ast) = [k]$ and $\tin_i(G^\ast)=\tout_i(G^\ast)=\tau_i$ for each $i \in [k]$. This $\pnode$ is then encapsulated within a $\tnode$ $G_0 = \tmerge(T)$, where $T$ is the trivial single-vertex tree consisting solely of the $\pnode$ $G^\ast$.

\subparagraph{Inductive step.} Now we assume that $G_{x-1}$ can be constructed as a $\tnode$ $G_{x-1} = \tmerge(T)$ with $\mathcal{T}(G_{x-1}) = [k]$ and $\tout_i(G_{x-1})=\tau_i$ for each $i \in [k]$, where $\tau_i$ is the $i$th designated vertex of $G_{x-1}$. We modify the tree $T$ according to the $x$th operation.

\subparagraph{Case 1.} Suppose the $x$th operation is $\insv(i)$, where we add a vertex $v$ with an edge $e = \{v, \tau_i\}$, and reset the $i$th designated vertex to $v$. We create an $\enode$ $G^\ast$ consisting only of this edge $e$, with $\mathcal{T}(G^\ast) = \{i\}$, $\tin_i(G^\ast) = v$, and $\tin_i(G^\ast) = \tau_i$. This $\enode$ is then added to $T$ as a leaf by making $G^\ast$ a child of $G'$, where $G'$ is the lowest node in $V(T)$ that contains the $i$th out-terminal $\tau_i$ of $G_{x-1}$. Observe that $\pmerge(G^\ast, G')$ resets the $i$th out-terminal from $\tau_i$ to $v$, as required.

\subparagraph{Case 2.} Suppose the $x$th operation is $\inse(i,j)$, where we add an edge $e = \{\tau_i, \tau_j\}$. Let $G_i$ be the lowest node of $V(T)$ containing $\tau_i$. Let $G_j$ be the lowest node of $V(T)$ containing $\tau_j$. Let $G'$ be the lowest common ancestor of $G_i$ and $G_j$ in $T$. 
We will create a $\bnode$ $G^\ast=\bmerge(G_i^\ast, G_j^\ast, i, j)$, for some choices of $G_i^\ast$ and $G_j^\ast$, that uses the new edge $e$ to connect two parts of the graph, and then add the $\bnode$ $G^\ast$ as a child of $G'$ in $T$. There are three sub-cases.

\subparagraph{Case 2.1.} Suppose $G_i = G_j = G'$.  For both $\ell \in \{i,j\}$, let $G_\ell^\ast$ be a $\vnode$ for the vertex $\tau_\ell$ with $\mathcal{T}(G_\ell^\ast)=\{\ell\}$.  We create a $\bnode$ $G^\ast =\bmerge(G_i^\ast, G_j^\ast, i, j)$ to cover the new edge $e = \{\tau_i, \tau_j\}$. We add the $\bnode$ $G^\ast$ as a child of $G_i = G_j = G'$ in $T$.

\subparagraph{Case 2.2.} Suppose $G_i \neq G'$ and $G_j \neq G'$. For both $\ell \in \{i,j\}$, let $T_\ell$ be the subtree of $T$ rooted at the child of $G'$ that is an ancestor of $G_\ell$, and let $G_\ell^\ast = \tmerge(T_\ell)$ be a $\tnode$ for $T_\ell$.  We create a $\bnode$ $G^\ast =\bmerge(G_i^\ast, G_j^\ast, i, j)$ to cover the new edge $e = \{\tau_i, \tau_j\}$, remove $T_i$ and $T_j$ from $T$, and add the $\bnode$ $G^\ast$ as a child of $G'$ in $T$.

\subparagraph{Case 2.3.} Suppose $G'$ is exactly one of $G_i$ and $G_j$. By symmetry, we only consider the case where $G' = G_j$, meaning that $G_i$ is a descendant of $G_j$ in $T$. This case combines the approaches in the above two cases. We create a $\vnode$ $G_j^\ast$ for $\tau_j$ with $\mathcal{T}(G_j^\ast)=\{j\}$. We create a $\tnode$ $G_i^\ast = \tmerge(T_i)$ for $T_i$, which is the subtree of $T$ rooted at the unique child of $G'=G_j$ that is an ancestor of $G_i$. We create a $\bnode$ $G^\ast =\bmerge(G_i^\ast, G_j^\ast, i, j)$ to cover the new edge $e = \{\tau_i, \tau_j\}$, remove $T_i$ from $T$, and add the $\bnode$ $G^\ast$ as a child of $G'$ in $T$.
\end{proof}

\section{Certifying \texorpdfstring{$k$}{k}-lane recursive graphs}\label{sect:certification}

In this section, we prove \Cref{thm:main}.
By \Cref{low-completion-cost,lanewidth-lem}, it suffices to design an $O(\log n)$-bit proof labeling scheme for a given $\MSO$ graph property $\phi$ in bounded-lanewidth graphs. In \Cref{sect:classes}, we establish an analogue of \Cref{homomorphism-class-mapping} for $k$-lane recursive graphs. This result, combined with \Cref{obs:tree_depth}, enables us to construct the desired proof labeling scheme in \Cref{subsect:certification}.

\subsection{Homomorphism classes}\label{sect:classes}

We extend \Cref{homomorphism-class-mapping} to the class of $k$-lane recursive graphs.

\begin{proposition}
    \label{homomorphism-class-mapping2}
    Let $F_k^\ast$ be the set of $k$-lane recursive graphs, and $\phi$ be an $\MSO$ property, then there exists a function $h^\ast : F_k^\ast \to C^\ast$, for some finite set $C^\ast$ of \ul{homomorphism classes} satisfying the following conditions.
    \begin{itemize}
        \item Given any two graphs $G_1, G_2 \in F_k^\ast$ such that $h^\ast(G_1) = h^\ast(G_2)$, $G_1$ satisfies $\phi$ if and only if $G_2$ satisfies $\phi$.
        \item There exist two functions $\fb$ and $\fp$ such that 
\begin{align*}
     h^\ast(\bmerge(G_1, G_2, i, j)) &= \fb(h^\ast(G_1), h^\ast(G_2), i, j),\\
    h^\ast(\pmerge(G_1, G_2)) &= \fp(h^\ast(G_1), h^\ast(G_2)).
\end{align*}
    \end{itemize}
\end{proposition}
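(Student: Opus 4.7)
The plan is to reduce \Cref{homomorphism-class-mapping2} to the classical \Cref{homomorphism-class-mapping} by encoding each $k$-lane graph as a $k'$-terminal graph with $k' = 2k+2$, and by simulating $\bmerge$ and $\pmerge$ via a bounded number of composition operations $\odot_{f_1,f_2}$ from \Cref{terminal-recursive-gluing}. I will encode a $k$-lane graph $G$ as the $(2k+2)$-terminal graph $G^\dagger$ whose $i$th terminal is $\tin_i(G)$ for $i \in [k]$ and whose $(k+i)$th terminal is $\tout_i(G)$ for $i \in [k]$ (each slot left as $\perp$ when $i \notin \mathcal{T}(G)$), with the last two terminal slots reserved as auxiliary. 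Since the $\MSO$ property $\phi$ does not reference terminals, it lifts verbatim to an $\MSO$ property $\phi^\dagger$ on $G^\dagger$.

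Next, I will verify that $\pmerge$ and $\bmerge$ are realized by a bounded number of compositions $\odot_{f_1,f_2}$ on the encoded graphs. The operation $\pmerge(G_1, G_2)$ is a pure gluing that identifies $\tin_i(G_1)$ with $\tout_i(G_2)$ for each $i \in \mathcal{T}(G_1)$, which is directly a single $\odot_{f_1,f_2}$ with $f_1,f_2$ determined by $\mathcal{T}(G_1)$ and $\mathcal{T}(G_2)$. For $\bmerge(G_1, G_2, i, j)$, however, the composition operator glues terminals only and cannot introduce new edges, so I will introduce a fixed single-edge $(2k+2)$-terminal graph $e^\star = (\{a,b\}, \{\{a,b\}\})$ whose endpoints $a,b$ occupy the two auxiliary slots. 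Then $\bmerge(G_1, G_2, i, j)$ is realized in two successive compositions: first glue $a$ with $\tout_i(G_1)$, then glue $b$ with $\tout_j(G_2)$.

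Applying \Cref{homomorphism-class-mapping} with $k' = 2k+2$ and $\phi^\dagger$ yields a finite set $C$ and a homomorphism $h : F_{k'} \to C$ compatible with each $\odot_{f_1,f_2}$. Setting $C^\ast := C \times 2^{[k]}$ and $h^\ast(G) := (h(G^\dagger), \mathcal{T}(G))$ keeps $C^\ast$ finite and retains all information needed to continue the recursion, since $\fin$ and $\fout$ are already captured by which slots of $G^\dagger$ are occupied. The desired functions $\fb$ and $\fp$ then arise by combining the Courcelle composition function $f_\odot$ (twice for $\bmerge$, using the fixed class $h(e^\star)$, and once for $\pmerge$) with the straightforward combinatorial update of the lane set, both of which are computable from the $h^\ast$-classes of $G_1, G_2$ together with the parameters $i, j$.

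The main obstacle is that $\odot_{f_1,f_2}$ only permits gluing of terminals, whereas $\bmerge$ creates a brand-new edge between two out-terminals that are never identified. Resolving this via the fixed auxiliary single-edge graph $e^\star$ — whose homomorphism class $h(e^\star)$ is a constant known in advance — is the key technical point; once this is set up, the remainder is bookkeeping about finite lane-structure data plus a direct invocation of the Courcelle machinery.
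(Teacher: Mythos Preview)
Your overall strategy---encode a $k$-lane graph as an $O(k)$-terminal graph, invoke \Cref{homomorphism-class-mapping}, and enrich the class with the lane data---is exactly the paper's. However, one step fails as written. You assert that $\pmerge(G_1,G_2)$ is ``directly a single $\odot_{f_1,f_2}$.'' Under the composition operator of \Cref{terminal-recursive-gluing}, any pair of terminals that gets glued \emph{must} appear as a terminal of the output: there is no way to identify the $i$th in-terminal of $G_1$ with the $i$th out-terminal of $G_2$ and simultaneously drop the merged vertex from the terminal list. In $\pmerge$ up to $k$ such glued vertices are created, and in general none of them is an in- or out-terminal of the result. So after one composition you carry up to $k$ temporary terminals in addition to the $\leq 2k$ genuine ones, which does not fit into $2k+2$ slots for $k\geq 3$. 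The paper makes exactly this point (``the composition operator \ldots\ does not allow the vertex resulting from merging two terminals to be directly designated as a non-terminal'') and fixes it by working with $3k$ terminals and performing $\pmerge$ in two steps: first glue, then compose with a trivial graph to discard the temporary terminals.

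A second, smaller issue: your fixed-slot encoding places $\tin_i(G)$ and $\tout_i(G)$ at slots $i$ and $k+i$, but the paper's \Cref{terminal-recursive-gluing} defines terminals as an ordered \emph{subset} of vertices, so the same vertex cannot occupy two slots. Since a lane may have $\tin_i(G)=\tout_i(G)$, your $G^\dagger$ is not always a valid $(2k+2)$-terminal graph, and consequently ``which slots are occupied'' does not record whether these coincide. The paper avoids this by ranking the \emph{set} of distinct terminal vertices and storing the functions $\xi\circ\fin$, $\xi\circ\fout$ inside $h^\ast$. Either adopt that, or add to $h^\ast$ one bit per lane recording whether the in- and out-terminal coincide, and increase the number of terminal slots to $3k$; with those corrections your argument goes through and matches the paper's.
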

\begin{proof}
The proof follows naturally by interpreting $k$-lane recursive graphs as $3k$-terminal recursive graphs and applying \Cref{homomorphism-class-mapping}. The need for $3k$ terminals instead of $2k$ arises from a technical constraint: The composition operator, as defined in \Cref{terminal-recursive-gluing}, does not allow the vertex resulting from merging two terminals to be directly designated as a non-terminal. Consequently, when gluing two terminals $u$ and $v$ into a non-terminal $w$, we must first temporarily assign $w$ as a terminal before applying an additional composition operation to reclassify it as a non-terminal. There can be up to $k$ such temporary terminals in $\pmerge$. In the following discussion, we formalize this proof idea.

\subparagraph{Canonical mapping.} We consider the following canonical way to map the set $S$ of in-terminals and out-terminals in a $k$-lane graph $G=(V,E)$ to $[s] \subseteq [3k]$, where $s = |S| \leq 2k < 3k$ is the number of in-terminals and out-terminals. For any two distinct vertices $u$ and $v$ in $S$, we write $u \prec v$ if one of the following conditions holds:
\begin{itemize}
    \item The lane number of $u$ is smaller than the lane number of $v$.
    \item $u$ and $v$ are in the same lane where $u$ is the in-terminal and $v$ is the out-terminal.
\end{itemize}
For each $v \in S$, we write $\xi(v)$ to denote the rank of $v$ in $S$ with respect to $\prec$. Now $G$ is seen as a $3k$-terminal graph by treating each $v \in S$ as the $\xi(v)$th terminal. The function $\xi\circ \fin:\mathcal{T}(G)\rightarrow[3k]$ maps each lane of $G$ to the rank of the in-terminal of the lane. Similarly, the function $\xi\circ \fout:\mathcal{T}(G)\rightarrow[3k]$ maps each lane of $G$ to the rank of the out-terminal of the lane, where $\fin$ and $\fout$ are defined in \Cref{def:lane_graph}. 

Observe that the two functions $\xi\circ \fin$ and $\xi\circ \fout$ are uniquely determined once we know, for each lane $i \in \mathcal{T}(G)$, whether $\tin_i(G)=\tout_i(G)$, and vice versa. 

\subparagraph{Homomorphism class.} We define the homomorphism class of a $k$-lane recursive graph $G$ as 
\[h^\ast(G) = (h(G), \mathcal{T}(G), \xi\circ \fin, \xi\circ \fout),\]
where $h(G)$ is the homomorphism class of $G$ in \Cref{homomorphism-class-mapping} by viewing $G$ as a $3k$-terminal recursive graph using the aforementioned mapping.

\subparagraph{Validity.}
For any two $k$-lane recursive graphs $G_1$ and $G_2$ with $h^\ast(G_1) = h^\ast(G_2)$, they also have $h(G_1) = h(G_2)$, so $G_1$ satisfies $\phi$ if and only if $G_2$ satisfies $\phi$ by \Cref{homomorphism-class-mapping}.

\subparagraph{$\pmerge$.} We show that, for any $G = \pmerge(G_1, G_2)$,  $h^\ast(G)$ can be inferred from $h^\ast(G_1)$ and $h^\ast(G_2)$. This implies the existence of the desired function $\fp$.
\begin{itemize}
\item $\mathcal{T}(G) = \mathcal{T}(G_2)$ by the definition of $\pmerge$.
\item The two functions $\xi\circ \fin$ and $\xi\circ \fout$ associated with $G$ can be determined from the two functions associated with $G_1$ and $G_2$ by the following rule: For each lane of $G$, the in-terminal and the out-terminal are different vertices if and only if for at least one of $G_1$ and $G_2$, the in-terminal and the out-terminal are different vertices in the same lane.
    \item To determine $h(G)$, we view $G = \pmerge(G_1, G_2)$ as the result of the following procedure, where $G$, $G_1$, and $G_2$ are viewed as $3k$-terminal recursive graphs.
    \begin{enumerate}
        \item Let $\tilde{G} = \pmerge(G_1, G_2)$ where each vertex resulting from merging two terminals is still considered a terminal temporarily. The graph $\tilde{G}$ is viewed as a $3k$-terminal recursive graph resulting from the composition operation corresponding to the merging operation $G = \pmerge(G_1, G_2)$.
        \item Let $G$ be $3k$-terminal recursive graph obtained by retaining only the in-terminals and out-terminals of $\tilde{G}$ as terminals. The graph $G$ can be obtained by composing $G$ with a trivial single-vertex graph, where all temporary terminals in $\tilde{G}$ are reclassified as non-terminals during the composition.
    \end{enumerate}
        The two composition operators are determined by $\mathcal{T}(G)$, $\mathcal{T}(G_1)$, $\mathcal{T}(G_2)$, and the two functions $\xi\circ \fin$ and $\xi\circ \fout$ associated with these three graphs $G$, $G_1$, and $G_2$. 
    Given the two composition operators, $h(G)$ is determined by $h(G_1)$ and $h(G_2)$ in view of \Cref{homomorphism-class-mapping}.
\end{itemize}

\subparagraph{$\bmerge$.} We show that, for any $G = \bmerge(G_1, G_2, i, j)$,  $h^\ast(G)$ can be inferred from $h^\ast(G_1)$, $h^\ast(G_2)$, and $(i,j)$. This implies the existence of the desired function $\fb$.
\begin{itemize}
\item $\mathcal{T}(G) = \mathcal{T}(G_1) \cup \mathcal{T}(G_2)$ by the definition of $\bmerge$.
\item The two functions $\xi\circ \fin$ and $\xi\circ \fout$ associated with $G$ can be determined from the two functions associated with $G_1$ and $G_2$ by the following rule: For each lane of $G$, the in-terminal and the out-terminal are different vertices if and only if the same holds for the one of $G_1$ and $G_2$ where the lane appears.
    \item To determine $h(G)$, we view $G = \bmerge(G_1, G_2, i, j)$ as the result of the following procedure, where $G$, $G_1$, and $G_2$ are viewed as $3k$-terminal recursive graphs.
    \begin{enumerate}
        \item Let $\tilde{G}$ be the disjoint union of $G_1$ and $G_2$, where the set of terminals of $\tilde{G}$ is the union of the terminals of $G_1$ and $G_2$.
        \item Let $G$ be the composition of $\tilde{G}$ and a graph consisting of a single edge $e$, where we identify one endpoint of $e$ with $\tout_i(G_1)$ and the other endpoint of $e$ with  $\tout_j(G_2)$.
    \end{enumerate}
    Similarly, the two composition operators are determined by $(i,j)$,  $\mathcal{T}(G)$, $\mathcal{T}(G_1)$, $\mathcal{T}(G_2)$, and the two functions $\xi\circ \fin$ and $\xi\circ \fout$ associated with these three graphs $G$, $G_1$, and $G_2$. 
    Given the two composition operators, $h(G)$ is determined by $h(G_1)$ and $h(G_2)$ in view of \Cref{homomorphism-class-mapping}.    \qedhere
\end{itemize}
\end{proof}

Same as \Cref{homomorphism-class-mapping}, \Cref{homomorphism-class-mapping2} also applies to graphs where vertices and edges are labeled with inputs from a fixed finite set. 

\subsection{Certification algorithm}\label{subsect:certification}

The main goal of the section is to prove the following result.

\begin{proposition}\label{prop:certification_algo}
For any integer $k \geq 1$, for any $\MSO$ graph property $\phi$, there exists a proof labeling scheme for $\phi$ with $O(\log n)$-bit edge labels in any graph $G$ of lanewidth $k$.
\end{proposition}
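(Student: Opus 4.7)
The plan is to base the proof labeling scheme on a hierarchical decomposition $\mathcal{H}$ of $G$ provided by Proposition 4.10, combined with the homomorphism classes of Proposition 5.1. By Observation 4.7, every root-to-leaf path in $\mathcal{H}$ has at most $2k = O(1)$ nodes, and since $\bmerge$ and $\tmerge$ never identify two distinct edges, each edge of $G$ belongs to exactly one such path. Hence each edge has only $O(1)$ ancestors in $\mathcal{H}$, and this is what makes $O(\log n)$-bit labels suffice.

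The prover assigns a distinct $O(\log n)$-bit identifier to every node of $\mathcal{H}$. For every edge $e \in E(G)$ and every ancestor $N$ of $e$ in $\mathcal{H}$, the label of $e$ records: the identifier of $N$; its type in $\{\vnode, \enode, \pnode, \bnode, \tnode\}$; the lane set $\mathcal{T}(N) \subseteq [k]$; the $\ID$s of the in- and out-terminals $\tin_i(N), \tout_i(N)$, whose existence and placement are certified by the pointing scheme of Proposition 2.3; the homomorphism class $h^\ast(N) \in C^\ast$ from Proposition 5.1; and the identifier of $N$'s parent in $\mathcal{H}$. Each such record fits in $O(\log n)$ bits. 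The verification at each vertex $v$ checks that labels on incident edges agree about shared ancestors, that the local structure matches the claimed node types (each $\vnode$, $\enode$, and $\pnode$ is directly recognizable, and each $\bnode$ is witnessed by its single bridge edge between the specified out-terminals), that the recurrences $h^\ast(N) = \fb(h^\ast(G_1), h^\ast(G_2), i, j)$ hold at each $\bnode$ and the analogous $\pmerge$-based recurrence holds at each $\tnode$, and that the homomorphism class at the root of $\mathcal{H}$, pointed to via a single additional application of Proposition 2.3, lies in the subset of $C^\ast$ encoding $\phi$.

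The main obstacle will be certifying a $\tnode$ $N = \tmerge(T)$, because the internal tree $T$ can be arbitrarily large while Observation 4.7 only bounds the depth of $\mathcal{H}$, and the value $h^\ast(N)$ is obtained from a bottom-up dynamic program on $T$ using iterated applications of $\pmerge$. I plan to handle this by having the prover additionally store, for each $M \in V(T)$ on the labels of edges in $G_M$, the identifier of $M$'s parent in $T$ and $M$'s depth within $T$, giving a Proposition 2.3-style certification of $T$ anchored at the shared vertex $\tin_j(M) = \tout_j(\operatorname{parent}_T(M))$, which sees edges of both $G_M$ and its parent's local graph; together with the subtree homomorphism class $h^\ast_T(M) \in C^\ast$ obtained by merging all descendants of $M$ into $M$ via iterated $\pmerge$. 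Since $\pmerge$ is associative, the prover linearizes the children $M_1, \ldots, M_c$ of each node in $T$ and places $O(1)$-bit accumulator values $A_0, A_1, \ldots, A_c$ on the corresponding shared terminals so that each step $A_i = \fp(h^\ast_T(M_i), A_{i-1})$ is verified locally, ending with $A_c = h^\ast_T(M)$. Because each vertex is a terminal in at most $k = O(1)$ lanes and appears in $O(1)$ ancestors of $\mathcal{H}$, the overall edge-label budget remains $O(\log n)$.
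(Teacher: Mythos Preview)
Your overall framework is the paper's: use the hierarchical decomposition of Proposition~4.10, exploit the constant depth bound of Observation~4.7 so that each edge carries information for only $O(1)$ ancestors, and certify the homomorphism classes of Proposition~5.1 bottom-up. Your treatment of $\enode$, $\pnode$, $\bnode$, and the pointing scheme for single vertices mirrors Lemmas~5.2 and~5.3.

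The gap is in your $\tnode$ step. You correctly store the subtree class $h^\ast_T(M)$ on the edges of $G_M$ for each $M\in V(T)$, but your accumulator mechanism for combining the children $M_1,\ldots,M_c$ of a node $M'$ does not type-check as a \emph{local} verification. You place $A_i$ ``on the corresponding shared terminal'' between $M_i$ and $M'$, but to check $A_i=\fp(h^\ast_T(M_i),A_{i-1})$ that vertex must also see $A_{i-1}$, which lives at the shared terminal of $M_{i-1}$---a \emph{different} out-terminal of $M'$. Nothing forces those two terminals to be adjacent, so the step $A_{i-1}\to A_i$ cannot be verified from a single vertex's incident edges as written; you would need an extra routing argument through $G_{M'}$ that you have not supplied.

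The fix is simpler than any accumulator scheme and is what the paper does: by the definition of $\tmerge$, siblings in $T$ have pairwise disjoint non-empty lane sets inside $\mathcal{T}(M')\subseteq[k]$, so $c\le k=O(1)$. Hence the prover may write on \emph{every} edge of $G_{M'}$ the full list $\mathcal{B}(\tmerge(T_{M_1})),\ldots,\mathcal{B}(\tmerge(T_{M_c}))$ together with $\mathcal{B}(M')$ and $\mathcal{B}(\tmerge(T_{M'}))$, still using only $O(k\log n)=O(\log n)$ bits. Each out-terminal of $M'$ then checks it coincides with the declared in-terminal of the appropriate $\tmerge(T_{M_i})$, and every vertex of $G_{M'}$ locally recomputes $h^\ast(\tmerge(T_{M'}))$ via at most $k$ applications of $\fp$. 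No linearization is needed.
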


Same as Courcelle's theorem, \Cref{prop:certification_algo} applies to graphs with a specified vertex subset or an edge set, or more generally, to graphs where vertices and edges are labeled with inputs from a fixed finite set. We first prove our main theorem using \Cref{prop:certification_algo}.

\mainthm*

\begin{proof}
Suppose the input graph $G=(V,E)$ satisfies the graph property $\phi \wedge (\operatorname{pathwidth}\leq k)$.
By \Cref{pathwidth-defn}, $G$ admits an interval representation $\mathcal{I}$ of width at most $k+1$.
By \Cref{low-completion-cost}, there exists an $s$-lane partition $\mathcal{P}$ with $s \leq f(k+1)$ such that the completion $G'=(V,  E')$ of $(G,\mathcal{I},\mathcal{P})$ can be embedded into $G$ with congestion at most $h(k+1)$. By \Cref{lanewidth-lem}, the lanewidth of $G'$ is at most $f(k+1)$.

We apply \Cref{prop:certification_algo} to $G'$, whose lanewidth is bounded, to certify the $\MSO$ graph property $\phi \wedge (\operatorname{pathwidth}\leq k)$ for the subgraph $G=(V,E)$ of $G'=(V,E')$, by viewing $E \subseteq E'$ as a specified edge subset given as the input to the proof labeling scheme. The proof labeling scheme uses $O(\log n)$-bit edge labels in $G'$. 

To complete the proof, it remains to show that the proof labeling scheme of $G'$ can be simulated using $O(\log n)$-bit edge labels in $G$. By \Cref{prop:degeneracy}, these edge labels can subsequently be simulated using $O(\log n)$-bit vertex labels, as graphs of bounded pathwidth have bounded degeneracy.

\subparagraph{Certifying the embedding.} For each virtual edge $e=\{u,v\} \in E' \setminus E$ in $G'$ that is embedded as a path $P_e = (u, \ldots, v)$, we store the following $O(\log n)$-bit information in each edge $e'$ of $P_e$.
\begin{itemize}
    \item $\ID(u)$ and $\ID(v)$. This indicates that $e'$ is in the path $P_e$, which starts from $u$ and ends at $v$.
    \item The $O(\log n)$-bit edge label $\ell(e)$ of $e$ in the proof labeling scheme of $G'$.
    \item The rank of $e'$ in $P_e$ in the forward direction.
    \item The rank of $e'$ in $P_e$ in the backward direction.
\end{itemize}
The validity of the embedding can be certified as follows. Each vertex $x \in V$ checks whether it has at most two incident edges in $P_e$. If $x$ has two incident edges in $P_e$, it identifies itself as an intermediate vertex of $P_e$ and verifies that both edges carry the same label $\ell(e)$. It then checks their ranks to confirm that the two edges appear consecutively in $P_e$. If $x$ has only one incident edge in $P_e$, it identifies itself as an endpoint of $P_e$ and verifies that the rank of the edge is one in exactly one of the two directions. If the rank is one in the forward direction, $x$ verifies that $\ID(x) = \ID(u)$. If the rank is one in the backward direction, $x$ verifies that $\ID(x) = \ID(v)$.

Therefore, to pass all the checks, it is necessary and sufficiently that $P_e$ is a simple path from $u$ to $v$, with all edges carrying the same $O(\log n)$-bit edge label $\ell(e)$ assigned to $e$. The congestion of the embedding is upper bounded by a constant $h(k+1)$, so the simulation incurs an additive cost of only $O(\log n)$ bits per edge in $E$.

\subparagraph{Summary.} If the input graph $G = (V, E)$ satisfies the graph property $\phi \wedge (\operatorname{pathwidth} \leq k)$, then the certification assignment $\mathbf{P}$ described above ensures that the verifier $\mathbf{V}$ accepts everywhere. Conversely, if $G = (V, E)$ does not satisfy the property $\phi \wedge (\operatorname{pathwidth} \leq k)$, the verifier $\mathbf{V}$ must reject somewhere, regardless of the choice of the $O(\log n)$-bit proofs.

It might still be possible to embed a graph $G'$ of lanewidth at most $f(k+1)$ to $G$ with congestion at most $h(k+1)$ even if the pathwidth of $G$ is greater than $k$. In this case, the verifier $\mathbf{V}$ still rejects somewhere: By the correctness of \Cref{prop:certification_algo}, to make everyone accept, it is required that the subgraph $G=(V,E)$ of $G'=(V,E')$ satisfies the graph property $\phi \wedge (\operatorname{pathwidth}\leq k)$.
\end{proof}

In the remainder of the section, we prove \Cref{prop:certification_algo}. By \Cref{prop:recursive}, any graph $G$ of lanewidth $k$ can be constructed as a $\tnode$ with parameter $k$, so it suffices to focus on $k$-lane recursive graphs that can be constructed using the five types of nodes in \Cref{sect:recursive_def}. By \Cref{homomorphism-class-mapping2}, to certify the given $\MSO$ graph property $\phi$ of a $k$-lane recursive graph, it suffices to certify its homomorphism class. 

\subparagraph{Certifying the validity of a $k$-lane recursive graph.} Given an $\MSO$ graph property $\phi$, we define the task of certifying the \emph{validity} of a $k$-lane recursive graph $G=(V,E)$, as follows.
\begin{description}
    \item[Input:] Each edge $e \in E$ is initially provided with the following input data: 
    \begin{itemize}
        \item A set of lanes $\mathcal{T} \subseteq [k]$.
        \item A homomorphism class $c \in C^\ast$ with respect to the $\MSO$ graph property $\phi$.
        \item An assignment that maps each lane $i \in \mathcal{T}$ to an in-terminal and an out-terminal in $V$.
    \end{itemize}
    \item[Output:] If $G$ can be realized as a $k$-lane recursive graph consistent with the provided input data, then the labeling generated by the certificate assignment $\mathbf{P}$ ensures that all vertices accept in the local verification algorithm $\mathbf{V}$. Otherwise, regardless of the labeling, at least one vertex rejects in the local verification algorithm $\mathbf{V}$.
\end{description}

Given the above definition, it is convenient to have the following terminology.

\begin{definition}\label{def:basic}
Given an $\MSO$ graph property $\phi$, the \ul{basic information} $\mathcal{B}(G)$ of a $k$-lane recursive graph $G$ consists of its lane set $\mathcal{T}(G)$, its homomorphism class $h^\ast(G)$ with respect to the $\MSO$ graph property $\phi$, its in-terminals $\{\tin_i(G) \, | \, i\in\mathcal{T}(G)\}$, and its out-terminals  $\{\tout_i(G) \, | \, i\in\mathcal{T}(G)\}$.
\end{definition}

The task of certifying the validity of a $k$-lane recursive graph $G$ can be seen as verifying the correctness of its basic information $\mathcal{B}(G)$, which can be encoded as an $O(\log n)$-bit string.

  We begin with the base case: $\enode$ and $\pnode$. Since we label edges instead of vertices, it is not meaningful to consider $\vnode$. As we will see later, the certification of the validity of a $\vnode$ in a hierarchical decomposition is implicitly included in the certification of $\bnode$.

\begin{lemma}\label{lem:base_case}
For any $k$-lane recursive graph $G=(V,E)$ that can be constructed as an $\enode$ or a $\pnode$, its validity can be certified with $O(\log n)$-bit edge labels. 
\end{lemma}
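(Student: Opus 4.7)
My plan is to handle the two base cases separately, observing that in both the homomorphism class $h^\ast$ recorded in $\mathcal{B}$ is a deterministic function of the remaining entries of $\mathcal{B}$ together with the node type ($\enode$ or $\pnode$) and $k$, so each vertex can recompute it locally and reject if it does not match.

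For the $\enode$ case, I would add no labels on top of the given input $\mathcal{B}$. A valid $\enode$ has $\mathcal{T}=\{i\}$, distinct terminals $\tin_i \neq \tout_i$, and a unique edge joining them. The verifier at a vertex $w$ examines the sub-multiset $E_w$ of its incident edges carrying the supposed $\enode$'s basic information. If $E_w$ is empty then $w$ accepts trivially; otherwise it checks (i) $|E_w|=1$, (ii) $\mathcal{T}$ is a singleton $\{i\}$ with $\tin_i \neq \tout_i$, (iii) $\ID(w) \in \{\tin_i, \tout_i\}$, and (iv) the declared $h^\ast$ matches the canonical class of a single edge on lane $i$. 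Soundness will be immediate, since the unique edge of a correct $\enode$ is witnessed symmetrically by both endpoints, and any configuration with a wrong number of edges, wrong endpoints, or misdeclared $\mathcal{B}$ is refuted by one of these local tests.

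For the $\pnode$ case, the prover additionally stamps each edge $\{v_j, v_{j+1}\}$ of the claimed $k$-vertex path with its rank $j \in [k-1]$; since $k$ is fixed this adds only $O(1)$ bits. The verifier at a vertex $w$ with non-empty $E_w$ extracts the multiset of ranks, which must equal $\{1\}$, $\{k-1\}$, or $\{j-1, j\}$ for some $1 < j < k$; in each case this uniquely determines a putative position $j \in [k]$. Then $w$ checks $\mathcal{T}=[k]$, $\tin_j = \tout_j$, $\ID(w) = \tin_j$, and that $h^\ast$ matches the canonical class of a $\pnode$ on $k$ vertices. For soundness, if every vertex accepts, then for each rank $r \in [k-1]$ the two endpoints of any edge stamped with $r$ must have IDs $\tin_r$ and $\tin_{r+1}$ respectively, and the rank-chain forces the rank-labeled edges to trace out the path $\tin_1, \ldots, \tin_k$ with no extraneous edges.

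The only subtlety I anticipate—certifying a global structural shape using purely local edge labels—does not become a real obstacle at this base case, since the $\enode$ touches only two vertices and the $\pnode$ has constant diameter. In particular, no spanning-tree gadget in the spirit of \Cref{vertex-in-graph-edge-pls} is needed, and the additional certificate has size $O(1)$ beyond the $O(\log n)$-bit input $\mathcal{B}$ already present on each edge.
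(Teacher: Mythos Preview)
Your proposal is correct and follows essentially the same approach as the paper: both exploit the observation that an $\enode$ or $\pnode$ has only $O(1)$ vertices and edges, so the entire structure can be encoded in $O(\log n)$ bits per edge and checked locally. The paper's version is somewhat more uniform---it simply writes the whole graph topology (the $O(k)$ vertex IDs together with the adjacency list) into every edge label and has each vertex compare its local view against that description---whereas you use a leaner rank-based encoding for the $\pnode$ and rely on the terminal IDs already present in $\mathcal{B}$; the effect is the same. One tiny wrinkle in your $\pnode$ verifier is the case $k=2$, where the rank multiset $\{1\}=\{k-1\}$ does not by itself determine the position, but this is immediately fixed by having the vertex check $\ID(w)\in\{\tin_1,\tin_2\}$, exactly as you do in the $\enode$ case.
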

\begin{proof}
A $\pnode$ consists of a path of $k$ vertices. An $\enode$ consists of a single edge. Therefore, we may use the trivial certificate that encodes the entire graph topology in $O(\log n)$ bits. All edges are given the same certificate. A vertex $v$ rejects \emph{if and only if} it detects at least one of the following.
\begin{itemize}
    \item The edges incident to $v$ have different certificates.
    \item The edges incident to $v$ are inconsistent with the graph topology given in the certificate.
    \item The input data is inconsistent with the graph topology given in the certificate.
\end{itemize}
If $G$ is a desired $k$-lane recursive graph, then all vertices accept. Otherwise, regardless of the assigned certificates, at least one vertex rejects.
\end{proof}

Next, we consider the remaining cases: $\bnode$ and $\tnode$. In these cases, we certify the entire hierarchical decomposition $\mathcal{H}$. By \Cref{obs:tree_depth}, $\mathcal{H}$ is a rooted tree of bounded depth, so each edge is shared by at most a constant number of nodes in $\mathcal{H}$. Hence the congestion contributes only a constant-factor overhead. 

\begin{lemma}\label{lem:inductive_case}
For any $k$-lane recursive graph $G=(V,E)$ that can be constructed as a $\bnode$ or a $\tnode$, its validity can be certified with $O(\log n)$-bit edge labels. 
\end{lemma}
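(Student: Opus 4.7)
The plan is to certify the entire hierarchical decomposition $\mathcal{H}$ of $G$ in a single pass. By \Cref{obs:tree_depth}, $\mathcal{H}$ has depth at most $2k$, and since neither $\bmerge$ nor $\tmerge$ merges edges, each edge of $G$ belongs to the subgraphs of at most $2k$ nodes of $\mathcal{H}$, as observed just above \Cref{lem:base_case}. Each edge can therefore afford a constant-size stack of $O(\log n)$-bit blocks, one per ancestor in $\mathcal{H}$. For every $N \in V(\mathcal{H})$, the block for $N$ will record its basic information $\mathcal{B}(N)$ from \Cref{def:basic}, a canonical $O(\log n)$-bit ID for $N$ (for instance the smallest vertex ID in $N$'s subgraph), the type of $N$, and the ID of $N$'s parent in $\mathcal{H}$. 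Since $N$'s subgraph is connected, I would additionally embed a spanning tree of it using the edge-based technique of \Cref{vertex-in-graph-edge-pls}, so that every vertex in $N$'s subgraph can locally agree on $N$'s ID, $\mathcal{B}(N)$, and type. The argument proceeds by induction on the depth of the subtree of $\mathcal{H}$ rooted at $N$, with \Cref{lem:base_case} handling the leaves.

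For a $\bnode$ $N = \bmerge(N_1,N_2,i,j)$, the plan is to have the prover mark the unique bridging edge $e^\ast = \{\tout_i(N_1),\tout_j(N_2)\}$ as the bridge of $N$. The endpoints of $e^\ast$ locally confirm that they are the declared $i$th and $j$th out-terminals of $N_1$ and $N_2$ respectively. Every vertex then performs the consistency checks prescribed by \Cref{homomorphism-class-mapping2}: $\mathcal{T}(N) = \mathcal{T}(N_1) \sqcup \mathcal{T}(N_2)$, the in- and out-terminals of $N$ are inherited correctly from $N_1$ and $N_2$, and $h^\ast(N) = \fb(h^\ast(N_1),h^\ast(N_2),i,j)$. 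This case is straightforward because $\bmerge$ has arity two and adds only a single edge.

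The $\tnode$ case $N = \tmerge(T)$ is the main obstacle, as $T$ may have unbounded size and no single vertex can locally witness the full tree $T$ or the full aggregation of homomorphism classes. Since $\tmerge$ does not create edges, each edge of $N$'s subgraph belongs to a unique $M \in V(T)$, and I would write $M$'s ID on the edge; parent-child relationships in $T$ are encoded by pointers $M \to M^\uparrow$ that are verified at each vertex that is simultaneously an in-terminal of $M$ and the matching out-terminal of $M^\uparrow$, and disjointness of lane sets across $T$-siblings follows automatically from the $\pmerge$ compatibility requirement. To certify $h^\ast(N)$, the prover also attaches to each $M \in V(T)$ an auxiliary subtree class $h^\ast_{\le M}$ equal to $h^\ast$ of $\tmerge$ applied to the subtree of $T$ rooted at $M$. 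By the associativity of $\pmerge$ and \Cref{homomorphism-class-mapping2}, $h^\ast_{\le M}$ must equal the repeated application of $\fp$ to $h^\ast(M)$ and $\{h^\ast_{\le M'} : M' \text{ a child of } M \text{ in } T\}$; since $M$ may have many children, I would spread this aggregation across $M$'s subgraph by piggybacking on its certified spanning tree, so that each vertex carries a partial aggregate of those contributions whose in-terminals lie in its spanning-tree subtree, with a local consistency check at each vertex propagating the aggregate toward the root. At the root of $M$'s spanning tree, the aggregate is combined with $h^\ast(M)$ to produce $h^\ast_{\le M}$, and finally $h^\ast(N) = h^\ast_{\le R}$ where $R$ is the root of $T$. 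Completeness is immediate from the construction, and soundness follows because any structural or computational inconsistency violates some local check. Each of the $O(1)$ stacked blocks fits in $O(\log n)$ bits, yielding $O(\log n)$-bit edge labels overall.
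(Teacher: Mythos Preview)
Your overall strategy---stack $\mathcal{B}(N)$ on every edge of $N$'s subgraph, exploit the $O(k)$ depth of $\mathcal{H}$, and handle $\bnode$ and $\tnode$ separately---matches the paper's. The $\bnode$ case is essentially the same, modulo the $\vnode$ subcase: a $\vnode$ has no edges to carry a certificate, and the paper handles it by running the spanning-tree pointer of \Cref{vertex-in-graph-edge-pls} inside the ambient $\bnode$; you glossed over this.

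The one substantive issue is your $\tnode$ aggregation. You assert that $M$ ``may have many children'' in $T$ and propose to fold the values $\{h^\ast_{\le M'}\}$ along a spanning tree of $M$. First, the premise is false: siblings in $T$ have pairwise-disjoint non-empty lane sets, all contained in $\mathcal{T}(M)\subseteq[k]$, so $M$ has at most $k$ children. The paper simply writes $\mathcal{B}(\tmerge(T_{G_i}))$ for \emph{every} child $G_i$ on every edge of $M$, which is $O(k\log n)=O(\log n)$ bits, after which each vertex of $M$ can recompute $h^\ast_{\le M}$ locally. Second, your tree fold does not typecheck as written: $\fp(h_1,h_2)$ is defined only when the first argument's lane set is contained in the second's, so you cannot $\fp$ two sibling contributions (their lane sets are disjoint), nor can you $\fp$ two ``partial aggregates'' that each already incorporate $h^\ast(M)$ (both carry the full lane set $\mathcal{T}(M)$, and $\fp$ would then spuriously glue their terminals). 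The only correct fold is the sequential one,
\[
\fp\bigl(h^\ast_{\le G_x},\,\fp(\cdots,\fp(h^\ast_{\le G_1},\,h^\ast(M)))\bigr),
\]
and once you use the bound $x\le k$ there is no need to distribute it across a spanning tree at all. With that fix, your argument coincides with the paper's.
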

\begin{proof}
 We prove the lemma by an induction on the depth $d$ of the hierarchical decomposition $\mathcal{H}$ associated with the recursive construction of $G$. By \Cref{obs:tree_depth}, $d \leq 2k-1$ is bounded, so the proof by induction does not incur any non-constant overhead in the label size.
 For the base case, the lemma statement vacuously holds when $d = 0$ because the recursion depth must be at least one for $\bnode$ and $\tnode$.

 \subparagraph{$\bnode$.} Consider the case $G=(V,E)$ is a $\bnode$ formed by $G=\bmerge(G_1, G_2, i, j)$. Our certificate for the validity of $G$, which consists of $O(\log n)$-bit edge labels, is as follows.
 \begin{enumerate}
     \item Each edge $e \in E$ is given the following information: $(i,j)$, $\mathcal{B}(G_1)$, and $\mathcal{B}(G_2)$.
     \item \label{item-2} For each $x \in \{1,2\}$, we certify the correctness of $\mathcal{B}(G_x)$ as follows.
     \begin{itemize}
         \item If $G_x$ is a $\tnode$, each edge in $G_x$ is provided a certificate for the correctness of the basic information $\mathcal{B}(G_x)$. By the induction hypothesis, such a certificate uses $O(\log n)$-bit edge labels. 
         \item If $G_x$ is a $\vnode$, we apply the proof labeling scheme of \Cref{vertex-in-graph-edge-pls} pointing to the unique vertex of $G_x$. This requires $O(\log n)$-bit edge labels in $G$ and allows us to certify that $G_x$ consists of a single vertex. The unique vertex of $G_x$ can locally verify the basic information $\mathcal{B}(G_x)$.
     \end{itemize} 
 \end{enumerate} 

Due to \Cref{item-2}, in the subsequent discussion, we assume that $\mathcal{B}(G_1)$ and $\mathcal{B}(G_2)$ have already been certified. The fact that $G=\bmerge(G_1, G_2, i, j)$ can also be certified by letting each vertex locally check if $\left\{\tout_i(G_1),\tout_j(G_2)\right\}$ is the unique edge between $G_1$ and $G_2$. 

Given the correctness of $\mathcal{B}(G_1)$, $\mathcal{B}(G_2)$, and $G=\bmerge(G_1, G_2, i, j)$, the correctness of the basic information  $\mathcal{B}(G)$ can be checked locally in each vertex, as follows.
\begin{itemize}
    \item $\mathcal{T}(G) = \mathcal{T}(G_1) \cup \mathcal{T}(G_2)$.
    \item The homomorphism class $h^\ast(G)$ is determined by $(i,j)$ and the two homomorphism classes $h^\ast(G_1)$ and $h^\ast(G_2)$, according to \Cref{homomorphism-class-mapping2}.
    \item The set of in-terminals of $G$ consists of the in-teriminals of $G_1$ and $G_2$.
    \item The set of out-terminals of $G$ consists of the out-teriminals of $G_1$ and $G_2$.
\end{itemize}



 \subparagraph{$\tnode$.} Consider the case $G=(V,E)$ is a $\tnode$ formed by $G=\tmerge(T)$. Each graph $G' \in V(T)$ is one of $\{\enode,\pnode,\bnode\}$. In our certificate for the correctness of $\mathcal{B}(G)$, we provide the basic information $\mathcal{B}(G')$ to each edge $e$ in $G'$, for all $G' \in V(T)$. This information is then certified as follows.
\begin{itemize}
    \item If $G'$ is an $\enode$ or a $\pnode$, we certify the correctness of $\mathcal{B}(G')$ using $O(\log n)$-bit edge labels in $G'$ by \Cref{lem:base_case}.
    \item If $G'$ is a $\bnode$, we certify the correctness of $\mathcal{B}(G')$ using $O(\log n)$-bit edge labels in $G'$ by induction hypothesis. 
\end{itemize}
Since $\pmerge$ does not combine edges, the graphs $G' \in V(T)$ have disjoint edge sets, so the information assigned to each edge in $G$ so far consists of $O(\log n)$ bits.

 Given that the validity of each $G' \in V(T)$ has been certified, the next task is to certify that $G$ is indeed formed by the operation $G=\tmerge(T)$. Here we utilize the \emph{associativity} of $\pmerge$, meaning that the ordering of the execution of $\pmerge$ operations within a $\tmerge$ operation does not affect the outcome. For each $G' \in V(T)$, we do the following.
 \begin{itemize}
     \item Define $T_{G'}$ as the subtree of $T$ rooted at $G'$.
     \item We give each edge in $G'$ the basic information $\mathcal{B}(\tmerge(T_{G'}))$.
 \end{itemize}
 Our plan is to certify the correctness of $\mathcal{B}(\tmerge(T_{G'}))$ using $O(\log n)$-bit edge labels \emph{only} in $G'$, despite that $\tmerge(T_{G'})$ might contain vertices and edges outside $G'$. 
 
 Let $G_1, \ldots, G_x$ be the children of $G'$ in $T$. Observe that $\tmerge(T_{G'}) = \tmerge(T')$, where $T'$ is defined as a tree of depth one rooted at $G'$ with the children  $\tmerge(T_{G_i})$ for all $i \in [x]$. In other words, $T'$ is the result of replacing $T_{G_i}$ with $\tmerge(T_{G_i})$ for each $i \in [x]$ in $T$. 

 For the edges in $G'$, our certificate for the correctness of $\mathcal{B}(\tmerge(T_{G'}))$ consists of  $\mathcal{B}(\tmerge(T_{G_i}))$ for all $i \in [x]$. To verify the correctness of the operation $\tmerge(T')$, each out-terminal of $G'$ can locally check if it is the right in-terminal of the right graph $\tmerge(T_{G_i})$, and each remaining vertex of $G'$ locally check that it does not have any neighbor outside $G'$. Given the correctness of $\mathcal{B}(G')$ and $\mathcal{B}(\tmerge(T_{G_i}))$ for all $i \in [x]$, along
 with the fact that $\tmerge(T_{G'})$ is indeed formed by the operation $\tmerge(T')$, the correctness of the basic information $\tmerge(T_{G'})$ can be checked locally in each vertex in $G'$, as follows.
\begin{itemize}
    \item The set of lanes, in-terminals, and out-terminals of $\tmerge(T_{G'})$ can be calculated from $\mathcal{B}(G')$ and $\mathcal{B}(\tmerge(T_{G_i}))$ for all $i \in [x]$.
    \item The homomorphism class of $\tmerge(T_{G'})$ is determined by the homomorphism classes of $G'$ and $\tmerge(T_{G_i})$ for all $i \in [x]$ by \Cref{homomorphism-class-mapping2}.
\end{itemize}

 Finally, to certify the correctness of the basic information $\mathcal{B}(G)$ for the underlying graph $G = \tmerge(T)$, we just need to make sure that at least one vertex $v$ in the root $\tilde{G}$ of $T$ exists. If such a vertex exists and all vertices in $G$ accept, then we know that $\mathcal{B}(G) = \mathcal{B}(T_{\tilde{G}})$ is correct. To ensure the existence of such a vertex, we apply the proof labeling scheme of \Cref{vertex-in-graph-edge-pls} pointing to any choice of a vertex $v$ in the root $\tilde{G}$ of $T$.
\end{proof}

We are now prepared to prove \Cref{prop:certification_algo}.

\begin{proof}[Proof of \Cref{prop:certification_algo}]
By \Cref{prop:recursive}, any graph $G$ of lanewidth $k$ can be constructed as a $\tnode$ with parameter $k$. By \Cref{lem:inductive_case}, we can certify its homomorphism class with respect to the $\MSO$ graph property $\phi$ using $O(\log n)$-bit edge labels. By \Cref{homomorphism-class-mapping2}, from the homomorphism class, we can infer whether $G$ has the property $\phi$.
\end{proof}

\section{Conclusions}\label{sect:conclusion}
In this work, we have established a meta-theorem demonstrating the existence of a proof labeling scheme with \emph{optimal} $O(\log n)$-bit vertex labels that can decide any $\MSO$ property for graphs with bounded pathwidth. Pathwidth plays a crucial role in parameterized complexity theory~\cite{downey27082016} and structural graph theory~\cite{kawarabayashi2007some,lovasz2006graph}. The results and techniques shown in this paper represent a step toward a deeper understanding of local certification and local algorithms for various sparse graph classes~\cite{nevsetvril2012sparsity} that are considered in these fields---a research topic that has garnered significant attention in recent years~\cite{bousquet2024local,ESPERET202268,esperet_et_al:LIPIcs.ICALP.2022.58,feuilloley2022can,feuilloley2021compact,feuilloley2023local,fomin2024distributed,fraigniaud2024meta,naor2020power,nevsetvril2016distributed}.

Bounded-pathwidth graphs have a deep connection to the theory of general structure of graph classes characterized by forbidden minors. The Graph Structure Theorem of \citeauthor{robertson2003graph} showed that any such graphs admit a certain structural decomposition which involves path decompositions~\cite{robertson2003graph}. \citeauthor{1530755} developed a polynomial-time algorithm to construct such a decomposition, leading to many applications in approximation algorithms~\cite{1530755}. In the context of local certification, we believe that our certification algorithm for path decompositions will be useful as building blocks to design further local certification algorithms for other graph properties that can be characterized by forbidden minors. One concrete research direction is to explore whether the techniques developed in this work can be extended to cover all graphs with bounded treewidth, thereby achieving a complete improvement over the previous result on local certification on graphs with bounded treewidth~\cite{fraigniaud2024meta}. Moreover, applying the Excluding Grid Theorem of Robertson and Seymour~\cite{ROBERTSON198692}, such an extension would imply the existence of an optimal $O(\log n)$-bit proof labeling scheme for certifying $H$-minor-free graphs for an arbitrary planar graph $H$.

Furthermore, it would be interesting to investigate whether our new techniques, such as $k$-lane recursive graphs, could be applied to study algorithms and complexity for graphs with bounded pathwidth in other contexts.


\printbibliography

\end{document}